\tikzstyle{startstop} = [rectangle, rounded corners, minimum width=3cm, minimum height=1cm,text centered, draw=black, fill=red!30]
\tikzstyle{io} = [trapezium, trapezium left angle=70, trapezium right angle=110, minimum width=3cm, minimum height=1cm, text centered, draw=black, fill=blue!30]
\tikzstyle{process} = [rectangle, minimum width=3cm, minimum height=1cm, text centered, draw=black, fill=orange!30]
\tikzstyle{decision} = [diamond, minimum width=3cm, minimum height=1cm, text centered, draw=black, fill=green!30]
\tikzstyle{arrow} = [thick,->,>=stealth]
\newcommand{\tikzDefaultsCom}[0]{
  \pgfplotsset{
   every axis/.append style={line width = 0.5pt},
    title style={align=center},
    xlabel={$v/C_{max}$},
    ylabel={Percentage of Cracked Passwords},
    log basis x={10},
    grid=major,
    cycle list = {{black}, {red}, {blue, densely dashed}, {yellow, densely dashdotted}, {blue, densely dashdotdotted}, {blue!50!black, loosely dotted}},
    legend style = {font=\tiny, at={(.01,.99)}, anchor=north west},
    legend entries = {deterministic\\ Time-Even Breakpoints\\ Cost-Even Breakpoints\\ $m=7$\\ $m=99$\\ improvement:{\color{black} black}-{\color{blue}blue}\\}
    }
}
\newcommand{\tikzDefaultsCost}[0]{
  \pgfplotsset{
   every axis/.append style={line width = 0.5pt},
    title style={align=center},
    xlabel={$v/C_{max}$},
    ylabel={Percentage of Cracked Passwords},
    log basis x={10},
    grid=major,
    cycle list = {{black}, {pink, densely dotted}, {red, densely dashed}, {yellow, densely dashdotted}, {blue, densely dashdotdotted}, {blue!50!black, loosely dotted}},
    legend style = {font=\tiny, at={(.01,.99)}, anchor=north west},
    legend entries = {deterministic\\ $m =2$\\ $m=3$\\ $m=7$\\ $m=99$\\ improvement:{\color{black} black}-{\color{blue}blue}\\}
    }
}
\newcommand{\tikzDefaultsTime}[0]{
  \pgfplotsset{
    every axis/.append style={line width = 0.1pt},
    title style={align=center},
    xlabel={$v/C_{max}$},
    ylabel={Percentage of Cracked Passwords},
    log basis x={10},
    grid=major,
    cycle list = {{black}, {red}, {red, mark = x}, {yellow, densely dashdotted}, {blue, densely dashdotdotted}, {blue!50!black, loosely dotted}},
    legend style = {font=\tiny, at={(.01,.99)}, anchor=north west},
    legend entries = {Deterministic, $m =3$, Fails Optimality Test, time-even $m=3$, time-even $m=7$, time-even $m=99$, improvement:{\color{black} black}-{\color{blue}blue}}
    }
}
\newcommand{\tikzDefaultsBite}[0]{
  \pgfplotsset{
    every axis/.append style={line width = 0.5pt},
    title style={align=center},
    xlabel={$v/C_{max}$},
    ylabel={Percentage of Cracked Passwords},
    log basis x={10},
    grid=major,
    cycle list = {{black}, {pink, densely dotted}, {red, densely dashed}, {yellow, densely dashdotted}, {blue, densely dashdotdotted}, {blue!50!black, loosely dotted}},
    legend style = {font=\tiny, at={(.01,.99)}, anchor=north west},
    legend entries = {deterministic\\  $m = 3$, uniform\\ $m = 3$, biteopt\\ $m = 7$, uniform\\ $m = 7$, biteopt\\}
    }
}
\DeclareMathOperator*{\argmax}{arg\,max}
\spnewtheorem{observation}{Observation}{\bfseries}{\itshape}
\newcommand{\defeq}{\vcentcolon=}
\newcommand{\MHF}{\mathsf{MHF}}
\newcommand{\peak}{\mathsf{peak}(\vec{q})}
\newcommand{\ma}[2]{\Delta\left(#1,#2\right)}
\newcommand{\op}[1]{\oplus_{i=1}^{#1} \varpi_i(1, \tau_i)}
\newcommand{\vp}[1]{\varpi_{#1}(1, \tau_{#1})}
\newcommand{\margin}[2]{\Delta\left(#1,#2\right)}
\newcommand{\mar}[2]{\Delta\left(#1\,\middle|\,#2\right)}
\newcommand{\con}[2]{\Delta^{\circ}\left(#1\,\middle|\,#2\right)}
\newcommand{\ins}[2]{\Delta^+\left(#1\,\middle|\,#2\right)}
\newcommand{\bundle}[0]{\varpi_i(j_1,j_2)}
\newcommand{\EXC}[1]{\mathsf{ExtendbyConcat}(v,\vec{q},#1)}
\newcommand{\EXI}[1]{\mathsf{ExtendbyInsert}(v, \vec{q}, #1)}
\newcommand{\EX}[0]{\mathsf{Extend}(v,\vec{q})}
\newcommand{\OPTDIS}[0]{\mathsf{FindOptDis}()}
\newcommand{\OT}[1]{\mathsf{OptimalityTest}(v,\vec{q},#1)}
\newcommand{\FG}[1]{\mathsf{FindOptSeq}(v,\vec{q},#1)}
\newcommand{\before}[2]{\mathsf{Before}\left(#1\,\middle|\,#2\right)}
\newcommand{\len}[1]{\mathsf{Len}(#1)}
\newcommand{\lenz}[0]{\mathsf{Len}}
\newcommand{\lenmax}[0]{\mathsf{Len}_{max}}
\newcommand{\unchecked}[1]{\mathsf{unchecked}(#1)}
\begin{document}
\title{Cost-Asymmetric Memory Hard \\Password Hashing}
%
%\titlerunning{Abbreviated paper title}
% If the paper title is too long for the running head, you can set
% an abbreviated paper title here
%
\author{Wenjie Bai\inst{1} \and
Jeremiah Blocki\inst{1} \and
Mohammad Hassan Ameri\inst{1}}
\authorrunning{W. Bai et al.}
% First names are abbreviated in the running head.
% If there are more than two authors, 'et al.' is used.
%
\institute{Purdue University, West Lafayette IN, 47907, USA\\
\email{\{bai104, jblocki, mameriek\}@purdue.edu}}
% \institute{Princeton University, Princeton NJ 08544, USA \and
% Springer Heidelberg, Tiergartenstr. 17, 69121 Heidelberg, Germany
% \email{lncs@springer.com}\\
% \url{http://www.springer.com/gp/computer-science/lncs} \and
% ABC Institute, Rupert-Karls-University Heidelberg, Heidelberg, Germany\\
% \email{\{bai104,jeremiahb, \}@uni-heidelberg.de}
%
\maketitle              % typeset the header of the contribution
\begin{abstract}
In the past decade billions of user passwords have been exposed to the dangerous threat of offline password cracking attacks. An offline attacker who has stolen the cryptographic hash of a user's password can check as many password guesses as s/he likes limited only by the resources that s/he is willing to invest to crack the password. Pepper and key-stretching are two techniques that have been proposed to deter an offline attacker by increasing guessing costs. Pepper ensures that the cost of rejecting an incorrect password guess is higher than the (expected) cost of verifying a correct password guess. This is useful because most of the offline attacker's guesses will be incorrect. Unfortunately, as we observe the traditional peppering defense seems to be incompatible with modern memory hard key-stretching algorithms such as Argon2 or Scrypt. We introduce an alternative to pepper which we call Cost-Asymmetric Memory Hard Password Authentication which benefits from the same cost-asymmetry as the classical peppering defense i.e., the cost of rejecting an incorrect password guess is larger than the expected cost to authenticate a correct password guess. When configured properly we prove that our mechanism can only reduce the percentage of user passwords that are cracked by a rational offline attacker whose goal is to maximize (expected) profit i.e., the total value of cracked passwords minus the total guessing costs. We evaluate the effectiveness of our mechanism on empirical password datasets against a rational offline attacker. Our empirical analysis shows that our mechanism can reduce significantly the percentage of user passwords that are cracked by a rational attacker by up to $10\%$.

\keywords{Memory Hard Functions  \and Password Authentication \and Stackelberg Game.}
\end{abstract}

\section{Introduction}
In the past decade data-breaches have exposed billions of user passwords to the dangerous threat of offline password cracking. An offline attacker has stolen the cryptographic hash $h_u=H(pw_u, salt_u)$ of a target user (u) and can validate as many password guesses as s/he likes without getting locked out i.e., given $h_u$ and $salt_u$\footnote{The salt value protects against pre-computation attacks such as rainbow tables and ensures that the attacker must crack each individual password separately. For example, even if Alice and Bob select the same password $pw_A=pw_B$ their password hashes will almost certainly be different i.e., $h_A = H(pw_A, salt_A) \neq H(pw_B, salt_B) = h_B$ due to the different choice of values and collision resistance of the cryptographic hash function $H$.} the attacker can check if $pw_u=pw'$ by computing $h'=H(pw', salt_u)$ and comparing the hash value with $h_u$.  Despite all of the security problems text passwords remain entrenched as the dominant form of authentication online and are unlikely to be replaced in the near future~\cite{SP:BHVS12}. Thus, it is imperitive to develop tools to deter offline attackers. 

An offline attacker is limited only by the resources s/he is willing to invest in cracking the password and a rational attacker will fix a guessing budget to optimally balance guessing costs with the expected value of the cracked passwords. Key-Stretching functions intentionally increase the cost of the hash function $H$ to ensure that an offline attack is as expensive as possible. Hash iteration is a simple technique to increase guessing costs i.e., instead of storing $(u,salt_u, h_u=H(pw_u,salt_u))$ the authentication server would store $(u,salt_u, h_u=H^t(pw_u,salt_u))$ where $H^{i+1}(x)\defeq H(H^{i}(x))$ and $H^{1}(x)\defeq H(x)$. Hash iteration is the traditional key-stretching method which is used by password hashing algorithms such as PBKDF2~\cite{kaliski2000pkcs} and BCRYPT~\cite{provos1999bcrypt}. Intuitively, the cost of evaluating a function like PBKDF2 or BCRYPT scales linearly with the hash-iteration parameter $t$ which, in turn, is directly correlated with authentication delay. Cryptocurrencies have hastened the development of Application Specific Integrated Circuits (ASICs) to rapidly evaluate cryptographic hash functions such as SHA2 and SHA3 since mining often involves repeated evaluation of a hash function $H(\cdot)$. In theory an offline attacker could use ASICs to substantially reduce the cost of checking password guesses. In fact, Blocki et al.~\cite{SP:BloHarZho18} argued that functions like BCRYPT or PBKDF2 cannot provide adequate protection against an offline attacker without introducing an unacceptable authentication delay e.g., $2$ minutes. 

Memory-Hard Functions (MHFs)~\cite{Per09} have been introduced to address the short-comings of hash-iteration based key-stretching algorithms like BCRYPT and PBKDF2.  Candidate MHFs include SCRYPT~\cite{Per09}, Argon2 (which was declared as the winner of Password Hashing Competition \cite{PHC} in 2015) and DRSample~\cite{CCS:AlwBloHar17}. Intuitively, a password hash function is memory hard if any algorithm evaluating this function must lock up large quantities of memory for the duration of computation. One advantage of this approach is that RAM is an expensive resource even on an ASIC leading to egalitarian costs i.e., the attacker cannot substantially reduce the cost of evaluating the hash function using customized hardware. The second advantage is that the Area-Time cost associated with a memory hard function can scale quadratically in the running time parameter $t$. Intuitively, the honest party can evaluate the hash function $\MHF(\cdot ; t)$ in time $t$, while any attacker evaluating the function must lock up $t$ blocks of memory for $t$ steps i.e., the Area-Time cost is $t^2$. The running time parameter $t$ is constrained by user patience as we wish to avoid introducing an unacceptably long delay while the honest authentication server evaluates the password hash function during user authentication. Thus, quadratic cost scaling is desireable as it allows an authentication server to increase password guessing costs rapidly without necessarily introducing an unacceptable authentication delay.

Peppering~\cite{manber1996simple} is an alternative defense against an offline password attacker. Intuitively, the idea is for a server to store $(u, salt_u, h_u= H(pw_u, salt_u, x_u))$. Unlike the random salt value $salt_u$, the random pepper value $x_u \in [1, x_{max}]$ is {\em not} stored on the authentication server. Thus, to verify a password guess $pw'$ the authentication server must compute $h_1 = H(pw', salt_u, 1), \ldots h_{x_{max}} = H(pw', salt_u, x_{max})$. If $pw'=pw_u$ then we will have $h_{x_u}= h_u$ and authentication will succeed. On the other hand, if $pw' \neq pw_u$ then we will have $h_i \neq h_u$ for all $i \leq x_{max}$ and authentication will fail. In the first case (correct login) the authentication server will not need to compute $h_i = H(pw', salt_u, i)$ for any $i > x_{u}$, while in the second case (incorrect guess) the authentication server will need to evaluate $h_i$ for every $i \leq x_{max}$. Thus, the expected cost to verify a correct password guess is lower than the cost of rejecting an incorrect password guess. This can be a desirable property as a password attacker will spend most of his time eliminating incorrect password guesses, while most of the login attempts sent to the authentication server will be correct.

A natural question is whether or not we can combine peppering with Memory Hard Functions to obtain both benefits: quadratic cost scaling and cost-asymmetry.
\begin{question}
    Can we design a password authentication mechanism that incorporates \textbf{cost-asymmetry} into ASIC resistant Memory Hard Functions while having the benefits of \textbf{fully quadratic cost scaling} under the the constraints of authentication delay and expected workload?
\end{question}

\noindent{\bf Naive Approach:} At first glance it seems trivial to integrate pepper with a memory hard function $\MHF(\cdot)$ e.g., when a new user $u$ registers with password $pw_u$ we can simply pick our random pepper $x_u\in [1,x_{max}]$, salt $salt_u$, compute $h_u=\MHF(pw_u,salt_u,x_u;t)$ and store the tuple $(u,salt_u,h_u)$.
 Unfortunately, the solution above is overly simplistic.  How should this parameter be set? We first observe that the authenication delay for our above solution can be as large as $t \cdot x_{max}$ since we may need to compute $\MHF(pw,salt_u, x; t)$ for every value of $x \in [1,x_{max}]$ and this computation must be carried out sequentially to reap the cost-asymmetry benefits of pepper.  Similarly, the Area-Time cost for the attacker to evaluate $\MHF(pw,salt_u, x; t)$ for every value of $x \in [1,x_{max}]$ would scale with $t^2 \cdot x_{max}$. This may seem reasonable at first glance, but what if the authentication server had not used pepper and instead stored $h_u= \MHF(pw_u,salt_u; t \cdot x_{max})$ using the running time parameter $t' = t \cdot x_{max}$? In this case the authentication delay is identical, but the attacker's Area-Time cost would be $t'^2 = t^2 \cdot x_{max}^2$ --- an increase of $x_{max}$ in comparison to the naive solution. Thus, the naive approach to integrate pepper and memory hard functions loses much of the benefit of quadratic scaling. 

\noindent {\bf Halting Puzzles:}  Boyen ~\cite{USENIX:Boyen07} introduced the notion of a halting puzzle where the ``pepper'' value is replaced with a random running time parameter. In particular, when a new user $u$ registers with a password  $pw_u$  we can pick our random running time parameter $t_u \in [1, t_{max}]$ along with $salt_u$ and store $(u,salt_u, h_u)$ where  $h_u=\MHF(pw_u,salt_u; t_u)$. Given a password guess $pw'$ the authentication server will locate $salt_u, h_u$ and accept if and only if $h_u = \MHF(pw', salt_u; t)$ for some $t \in [1, t_{max}]$. All memory hard functions $\MHF(w; t)$ we are aware of generate a stream of data-labels $L_1,\ldots, L_t$ where  $L_i = \MHF(w; i)$ and $L_{i+1}$ can be computed quickly once the prior labels $L_1,\ldots, L_i$ are all stored in memory. Thus, whenever the user attempts to login with password $pw_u'$ the honest server can simply start computing $\MHF(pw_u', salt_u; t_{max})$ to generate a stream of labels $L_1',L_2',\ldots$ and immediately accept if we find some label $i \leq t$ which matches the password hash i.e., $L_i = h_u$. Observe that whenever the user enters the correct password $pw_u'=pw_u$ the honest authentication server will be able to halt early after just $t_u\leq t_{max}$ iterations. By constrast, the only way to definitely reject an incorrect password $pw_u'$ is to finish computing $\MHF(pw_u', salt_u; i)$. The authentication delay is at most $t_{max}$ and it seems like the attacker's area-time cost will scale quadratically i.e., $t_{max}^2$. Thus,  the solution ostensibly seems to benefit from quadratic cost scaling and cost-asymmetry.

However, we observe that an attacker might not choose to compute the entire function $\MHF(pw', salt_u; t)$ for each password guess. For example, suppose that the running time parameter $t_u$ is selected uniformly at random in the range $[1,t_{max}]$, but for each password guess $pw'$ in the attacker's dictionary the attacker only computes $\MHF(pw', salt_u; t_{max}/3)$. The attacker's area-time cost per password guess ($t_{max}^2/9$) would decrease by a factor of $9$, but the attacker's success rate only diminishes by a factor of $1/3$ --- the probability that $t_u \in [1,t_{max}/3]$. Motivated by this observation there are several natural questions to ask. First, can we model how a rational offline attacker would adapt his approach to deal with halting puzzles? Second, if $t_u$ is picked uniformly at random is it possible that the solution could have an adverse impact i.e., could we unintentionally {\em increase} the number of passwords cracked by a rational (profit-maximizing) attacker? Finally, can we find the optimal distribution over $t_u$ which minimizes the success rate of a rational offline attacker subject to constraints on (amortized) server workload and maximum authentication delay.

\subsection{Our contributions}
We introduce Cost-Asymmetric Memory Hard Password Hashing, an extention of Boyen's halting puzzles which can only {\em decrease} the number of passwords cracked by a rational password cracking attacker. Our key modification is to introduce cost-even breakpoints as random running time parameters i.e., we fix $m$ values $t_1 \leq \ldots \leq t_m=t$ such that $t_m^2 = t_i^2 (m/i)$ for all $1 \leq i < m$. Now instead of selecting $x_u$ randomly in the range $[1,t]$ (time-even breakpoints) we pick $x_u \in \{t_1,\ldots, t_m\}$. We can either select $x_u \in \{t_1,\ldots, t_m\}$ uniformly at random or, if desired, we can optimize the distribution in an attempt to minimize the expected number of passwords that the adversary breaks. Then the authentication server computes $h_u=\MHF(pw_u,salt_u; x_u)$ and store the tuple $(u,salt_u,h_u)$ as the record for user $u$.

We adapt the Stackelberg game theoretic framework of Blocki~and Datta\cite{BlockiD16} to model the behavior of a rational pasword cracking attacker when the authentication server uses Cost-Asymmetric Memory Hard Password Hashing. In this model the attacker obtains a reward $v$ for every cracked password and will choose a strategy which maximizes its expected utility --- expected reward minus expected guessing costs. One of the main challenges in our setting is that the attacker's action space is exponential in the size of the support of the password distribution. For each password $pw$ the  attacker can chose to ignore the password, partially check the password or competely check the password. We define efficient algorithms to find a locally optimal strategy for the attacker and identify conditions under which the strategy is also a global optimum (these conditions are satisfied in almost all of our empirical experiments). We can then use black-box optimization to search for a distribution over $x_u$ which minimizes the number of passwords cracked by our utility maximizing attacker. 

When $x_u \in \{t_1,\ldots, t_m\}$ is selected uniformly at random we prove that cost-even breakpoints will only reduce the number of passwords cracked by a rational attacker. By contrast, we provide examples where time-even breakpoints increases the number of passwords that are cracked --- some of these examples are based on empirical password distributions.

We empirically evaluate the effectiveness of our mechanism with 8 large password datasets. Our analysis shows that we can reduce the fraction of cracked passwords by up to 10\% by adopting  cost-asymmetric memory hard password hashing with cost-even breakpoints sampled from uniform distribution. In addition, our analysis demonstrates that the benefit of optimizing the distribution over $x_u$ is marginal. Optimizing the distribution over the breakpoints $t_1,\ldots, t_m$ requires us to accurately estimate many key parameters such as the attacker's value $v$ for cracked passwords and the probability of each password in the user password distribution. If our estimates are inaccurate then we could unintentionally increase the number of cracked passwords. Thus, we recommend instantiating Cost-Asymmetric Memory Hard Password Hashing with the uniform distribution over our cost-even breakpoints $t_1,\ldots, t_m$ as a \emph{prior independent} password authentication mechanism. 

\subsection{Related work}
Trade-off between usability and security lie in the core of mechanism design of password authentication. 
Users tend to pick low-entropy passwords \cite{SP:Bonneau12}, leaving their accounts insecure. Convincing them to select stronger passwords is a difficult task~\cite{campbell2011impact,Komanduri2011,Shay2010,Stanton2005,Inglesant2010,Shay2014}. Password strength meters~\cite{Komanduri2014,Ur2012,NDSS:CarMan14} are commonly embedded in website in the hope that users would select stronger passwords after the strength of their original passwords being displayed. However, it is found that users are often not persuaded by the suggestion of password strength meters \cite{Ur2012,NDSS:CarMan14}. In order to encourge users to pick high-entropy passwords some sites mandate users to follow stringent guidelines when users create their passwords. However, it has been shown that these methods suffer from usability issues \cite{Inglesant2010,NIST2014,Florencio2014lisa,Adams1999}, and in some cases can even lead to users selecting weaker passwords \cite{blockiPasswordComposition,Komanduri2011}. 

Password offline attacks have been a concern since the Unix system was devised~\cite{morris1979password}. Various approaches are developed to expedite the cracking process by the adversary or model password guessability by the hoesty party. Tools like Hashcat~\cite{hashcat} and John the Ripper~\cite{JohnTheRipper} enumerate combinations of tokens as dictionary candidates and are widely used by real-world attackers. Liu et al.~\cite{liu2019reasoning} analyzed these tools using techniques of rule inversion and guess counting to retrive guessing number without explicit enumeration.  Probabilistic models like Probabilistic Context-Free Grammars~\cite{SP:WAMG09,SP:KKMSVB12,NDSS:VerColTho14}, Markov models~\cite{NDSS:CasDurPer12,Castelluccia2013,SP:MYLL14,USENIX:USBCCKKMMS15} have been applied and analyzed in password cracking. Character-level text generation with Long-Short Term Memory (LSTM) recurrent neural networks is fast, lean and accurate in modeling password guessability~\cite{USENIX:MUSKBCC16}. 

Memory-hard functions (MHF) is a key cryptographic primitive. Evaluation of MHF requires large amount of memory in addition to longer computation time, making parallel computation and customized hardware futile to speed up computation process.  Candidate MHFs include SCRYPT~\cite{Per09}, Balloon hashing~\cite{AC:BonCorSch16}, and Argon2~\cite{Argon2} (the winner of the Password Hashing Competition\cite{PHC}). MHFs can be classified into two distinct categories or modes of operation - data-independent MHFs (iMHFs) and data-dependent MHFs(dMHFs) (along with the hybrid idMHF, which runs in both modes). dMHFs like SCRYPT are maximally memory hard~\cite{EC:ACPRT17}, but they have the issue of possible side-channel attacks. iMHFs, on the other hand,  can resist side-channel attakcs but the aAT (amortized Area Time) complexity is at most $ \mathcal{O} (N^2 \log \log N/ \log N)$.  In construction of a iMHF Alwen and Blocki showed that depth robustness, a property associated with a DAG, is both necessary~\cite{C:AlwBlo16} and sufficient~\cite{EC:AlwBloPie17}. Recent work has proposed candidate iMHF constructions that show resistance to currently-known attacks~\cite{blocki2019data}.

\section{Background and Notations}
\paragraph{Password Dataset.}
We use $\mathbb{P}$ to denote the set of all possible passwords,  the corresponding distribution is $\mathcal{P}$. The process of a user $u$ choosing a password for his/her account can be viewed as a random sampling  from the underlying distribution $pw_u\overset{\$}{\leftarrow} \mathcal{P}$. Given a password dataset $D$ of $n_a$ accounts, we can obtain empirical distribution $\mathcal{D}_e$ by approximating $\Pr_{pw_i\sim \mathcal{D}_e}[pw_i] = \frac{f_i}{n_a}$, where $f_i$ is the frequency of $pw_i$ and $n_a$ is the number of accounts present in $D$. Often the empirical distribution can be represented in compact form by grouping passwords with the same frequency into an \emph{equivalence set} i.e., $D_{es}=\{(f_1,s_1),\ldots,(f_i,s_i),\ldots,(f_{n_e},s_{n_e}) \}$, where $s_i$ is the number of passwords which appear with frequency $f_i$ in $D$ and $n_e$ is the total number of equivalence sets. We use $es_i = (f_i,s_i)$ to describe the $i$th equivalence set. In empirical experiments it is often most convenient to work with the compact representation $D_{es}$ of our distribution. We also use $n_p$ to denote the number of distinct passwords in our dataset $D$. Observe that for any dataset we have $n_a \geq n_p \geq n_e$. In fact, we will typically have $n_a\gg n_p \gg n_e$.

\paragraph{Computation Cost of MHF.}

The evaluation of MHF generates a sequence of labels $L_1, L_2,\ldots$, in order to compute $L_i$, some of its predecessor are required. Each label is associated with a running time parameter. We use $T = \{t_1,t_2,\ldots,t_m\}$ to denote the set of possible running time breakpoints and  $q_i = \Pr[t_i]$ to denote the probability that $t_i$ is chosen for user's password $pw_u$. We model the (amortized) Area-Time cost of evaluating $\MHF(\cdot;t)$ as $c_Ht+c_M t^2$, where $c_H$ and $c_M$ are constants. Intuitively, $c_H$ denotes the area of a core implementing the hash function $H$ and $c_M$ represents the area of an individual cell with the capacity to hold one data-label (hash output). Since the memory cost tend to dominate, we ignore the hash cost as simply model the cost as $c_M t^2$. 

% \jnote{This is not the right place to define time-even or cost-even breakpoints! In particular, we are claiming cost-even breakpoints as a new contribution so it should not be introduced in the background section}

% \paragraph{Time-Even/Cost-Even Breakpoints.}
% Define $\beta_i = t_i/t_1, \, \forall i$, halting puzzle approach implicitly selects $\beta_i = i$, which we call \emph{time-even breakpoints}. The naming intuition is that the computation time from $i$th label to $(i+1)$th label is the same. On the other hand, we recommand to use cost-even breakpoints i.e., $\beta_i=\sqrt{i}$ so that the computation cost by advancing to the next label remains the same.

\section{Defender's Model}
In this section, we present the model of the defender. In particular, we show how passwords are stored and verified on the authentication server. 

\paragraph{Account Registration.}
When a user $u$ proposes password $pw_u$ for his/her account at the time of registration, the authentication would randomly generate a salt value $salt_u$, sample a running time parameter $t_u$ from $T$, calculate the memory hard function with parameter $t_u$, i.e., $h_u = \MHF(pw_u,salt_u;t_u)$. In the end, the tuple $(u,salt_u, h_u)$ is stored  in the server as the record of $u$ while $t_u$ is  discarded.
\paragraph{Password Verification.}
Later on, when user $u$ logs in his/her account by submitting $(u,pw_u^{\prime})$ to the server interface, the authentication server would first retrieve record $(u,salt_u, h_u)$, calculate $h_1 = \MHF(pw_u^{\prime},salt_u;t_1)$ and compare $h_1$ with $h_u$. It they are equal, login request is granted. Otherwise, the server would continue to calculate $h_2 = \MHF(pw_u^{\prime},salt_u;t_2)$, compare $h_2$ with $h_u$, so on and so forth. If any of $h_i$ matches $h_u$, then user $u$ successfully logs in his/her account. However, If none of $h_i$ matches $h_u$, the login request is rejected. 

% The formal procedure is presented in Algorithm \ref{alg:verification}.
% \begin{algorithm}[thb]\small
% \caption{Password Verification}
% \label{alg:verification}
% \begin{algorithmic}[1]
% \REQUIRE{$(u,pw_u^{\prime})$}
% \STATE $(u,salt_u, h_h)\leftarrow \mathsf{findRecord}(u)$
% \FOR{$i=1:m$}
% 	\STATE $h_i = \mathsf{KDF}(pw_u^{\prime},salt_u;t_i)$ 
% 	\IF{$h_i == h_u$}
% 	\RETURN Success
% 	\ENDIF
% \ENDFOR
% \RETURN Fail
% \end{algorithmic}
% \end{algorithm}

\paragraph{Workload Constraint}.
The server is  subject to maximum workload constraint which means that  the expected cost cannot exceed server's workload ceiling $C_{max}$. Namely, 
\begin{equation}\small
\sum_{i = 1}^m  q_ic_Mt_i^2\leq C_{max}.
\end{equation}
With probability $q_i$ the server will choose $t_i$, evaluate $\MHF(\cdot;t_i)$  incurs cost $c_Mt_i^2$, thus the expected cost of verifying a password is $\sum_{i = 1}^m  q_ic_Mt_i^2$.

\section{Attacker's model}

In this section,  we first state the assumptions we use in our economic analysis. Then we show how a rational attacker who steals the password hashes from the server would run a dictionary offline attack. Finally, we present the Stackelberg game in modeling the interaction between the defender and the attacker within the framework of ~\cite{BlockiD16}.

\subsection{Assumptions of Economics Analysis}
We made several assumptions about the attacker which facilitates our economic analysis of the attacker's behavior.
\begin{itemize}
\item \emph{rationality}. A rational attacker's objective is to extract as much monetary reward as possible from the cracking process, the only constraint is the number of passwords the attacker is willing to guess.
\item \emph{knowledge}. The attacker knows password distribution $\mathcal{P}$. In practice, password dictionary list can be built using Probabilistic Context-Free Grammars (PCFGs)~\cite{SP:WAMG09,SP:KKMSVB12,NDSS:VerColTho14}, $n$-gram Markov Models~\cite{NDSS:CasDurPer12,Castelluccia2013,SP:MYLL14,USENIX:USBCCKKMMS15} and Neural Networks~\cite{USENIX:MUSKBCC16}.
\item \emph{untarget}. We assume the attacker is untargeted, treats all users without distinction and has universal password value expectation $v$ for all accounts. One can derive a range of estimates for $v$ based on black market studies e.g., Symantec reported that passwords generally sell for \$4---\$30~\cite{passwordBlackMarket} and \cite{stockley_2016} reported that Yahoo! e-mail passwords sell for $\approx \$1$.
\end{itemize}

\subsection{Cracking Process}
We now specify how an offline attacker would use the stolen hash to run a dictionary attack. Password distribution and the breakpoint distribution induce a joint distribution over pairs $(pw, t) \in \mathbb{P}\times \{t_1,\ldots, t_m\}$, we have $\Pr[(pw_i, t_j)] = \Pr[pw_i]q_j$ because of independence between label index and associated account. 

The adversary's strategy is to formulate a checking sequence $\pi =  \{(pw_i,t_j)\}$ with the purpose of finding the target $(pw_u,t_u)$.  In particular, the attacker will execute the first instruction in $\pi$, then second instruction if the first execution fails, etc,. An instruction $(pw_i,t_j)$ in $\pi$ means the adversary selects $pw_i$ as current guess and compute  the $j$th label for $pw_i$ i.e., evaluate $\MHF(pw_i, salt_u;t_j)$.  The cracking process terminates when the adversary found  the hidden target $(pw_u,t_u)$ or timeout. Note that the order of instructions in a checking sequence $\pi$ matters. A checking sequence is subject to \emph{legit restrictions}:
\begin{enumerate}
 \item \emph{Small label first}. If $(pw_i,t_{j_1})$ appears before $(pw_i,t_{j_2})$ in $\pi$, then it should be the case $t_{j_2}>t_{j_1}$.
 \item \emph{Label backward continuity}. If $(pw_i,t_j)\in \pi$ then $(pw_i, t_1),\ldots, (pw_i, t_{j-1})\in \pi$.
 \item \emph{No inversions}.  Inversions in the form of $(pw_{i_1}, t_{j_1})$, $(pw_{i_2}, t_{j_2})$, $(pw_{i_1}, t_{j_1}^{\prime})$ where $t_{j_1}^{\prime} > t_{j_1}$ are not allowed.
\end{enumerate} 
The first two restrictions state that the attacker cannot advance to  a larger label without computing all previous labels. The third is an assumption that we made, this assumption is valid because computing labels for $pw_{i_2}$ while storing labels for $pw_{i_1}$ will induce extra memory cost, which should be avoided.

\subsection{Attacker's Utility}
After specifying the restrictions for a legit checking sequence, we can formulate the the attacker's utility. 
Suppose the $k$th instruction in checking sequence $\pi$ is $\pi_k=(pw_i,t_j)$, then $\Pr[\pi_k] = \Pr[pw_i]\cdot q_j$. Define $\lambda(\pi, B) = \sum_{k=1}^B \Pr[\pi_k]$,  we write $\lambda(\pi,|\pi|)$ in short form  $\lambda(\pi)$.  Also we use $c(\pi_k)$ to denote the round cost of executing instruction $\pi_k$ where $c(\pi_k) = c_M(t_j^2 - t_{j-1}^2)$, specifically, $t_0 =0$ is introduced for convenience of notations. Then the attacker's utility is
\begin{equation}\label{eq:utility}\small
U_{adv} (v, \vec{q},\pi)= v\cdot\lambda(\pi) - \sum_{k = 1} ^ {|\pi|} c(\pi_k) \left(1- \lambda(\pi,k-1)\right).
\end{equation}
Given a checking sequence $\pi$ the attacker's success rate is $P_{adv} (\pi)=   \lambda(\pi)$, thus the gain (the first term of equation \eqref{eq:utility}) is $v\cdot\lambda(\pi)$.
The second term is the expected cost, the attacker only pay cost $c(\pi_k)$ if and only if all previous $k-1$ trials are in vain, which happens with probability $1-\lambda(\pi,k-1)$.

Besides legit restrictions that make a checking sequence valid a rational attacker would choose a checking sequence $\pi$ that satisfies \emph{opt restrictions}:
 \begin{enumerate}
 \item \emph{Popular password first}. If $(pw_{i_1},t_j)$ appears before $(pw_{i_2},t_j)$, then $\Pr[pw_{i_1}]>\Pr[pw_{i_2}]$.
  \item \emph{Password backward continuity}. If $(pw_i, t_j) \in \pi$ for some $j$, then $(pw_{i-1}, t_{j^{\prime}}) \in \pi$ for some $j^{\prime}$.
 \item \emph{Stop at $es$ boundary}. If $(pw_i, t_j)$  is the last instruction in $\pi$ where $pw_i \in es_k$, then $pw_{i+1} \in es_{k+1}$.
 \end{enumerate}

%\emph{Password forward continuity}. If $(pw_i, t_j) \in \pi^*$ for some $j$, then $(pw_{i+1}, t_j) \in \pi^*$ if $\Pr[pw_i] = \Pr[pw_{i+1}]$.

It can be easily proved that an attacker who violates opt restrictions will suffer utility loss.  Legit restrictions, together with the first 2 opt restrictions, determine a complete ordering, which we call \emph{natural ordering},  over all instructions $\{(pw_i, t_j)\}$, namely,
\begin{equation}\small
\begin{cases}
(pw_{i_1}, t_{j_1}) < (pw_{i_2}, t_{j_2}),\text{ if } \Pr[pw_{i_1}] > \Pr[pw_{i_2}],\\
(pw_i, t_{j_1}) < (pw_i, t_{j_2}),\text{ if } j_1 < j_2.
\end{cases}
\end{equation}

We use $\Pi(n,m)$ to denote the  sequence of all instructions for top $n$ passwords with respect to natural ordering,  
\begin{equation}\small
\Pi(n,m) \defeq (pw_1,t_1),\ldots,(pw_1,t_m),\ldots,(pw_n,t_1),\ldots,(pw_n,t_m).
\end{equation}
We say a sequence containing consecutive instructions  for a single password is a \emph{instruction bundle}, which is denoted by  
\begin{equation}
    \varpi_i(j_1,j_2) \defeq (pw_i,t_{j_1}),\ldots, (pw_i, t_{j_2}).
\end{equation}
Specifically,  $\varpi_i(j_1, j_2) = \emptyset$  when $j_1 = j_2 =0$.
Then the attacker's strategy $\pi$ is a sub-sequence of $\Pi(n_p,m)$ (recall that $n_p$ is the number of distinct passwords) in the form of 
\begin{equation}\small
\pi =  \oplus_{i^{\prime}=1}^{\len{\pi}} \varpi_{i^{\prime}}(1, \tau_{i^{\prime}}) 
\defeq \varpi_1(1, \tau_1)\circ \varpi_2(1, \tau_2)\circ \cdots \circ\varpi_{\lenz}(1, \tau_{\lenz}),
\end{equation}
where $\circ$ denotes the concatenation of two disjoint instruction sequence and $\len{\pi}$ is the largest index of password for which the attacker would check at least one label, which depends on the associated checking sequence, when the context is clear it is just written as $\lenz$.  Because of opt restriction 3, $\lenz $ can only take values in $\left\{0, |es_1|, |es_1|+|es_2|, \ldots,\sum_{k=1}^{n_e}|es_k|\right\}$.  Notice that $\pi$ is fully specified by the largest label index $\tau_i$ for $pw_i$.

\subsection{Stackelberg game}
We use Stackelberg game to model the interaction between the attacker and defender. 

stage 1: the server determines breakpoint distribution $\vec{q}$;

stage 2: The attacker's strategy is to select checking sequence $\pi$, or equivalently $\lenz $ and $\{\tau_i\}$.

Define server's utility to be $U_{ser}(v,\vec{q}) = - \lambda(\pi^*)$, where $\pi^*$ is the attacker's best response to defender's strategy $\vec{q}$ and password value $v$. At equilibrium no player has the incentive to deviate form her/his strategy, thus equilibrium profile $(\vec{q}^*, \pi^*)$ satisfies, 
\begin{equation}\small
\begin{cases}
 U_{adv}(v, \vec{q},\pi^*)  \geq U_{adv}(v, \vec{q},\pi),\, \forall  \pi,\\
 U_{ser}(v, \vec{q}^*) \geq  U_{ser}(v, \vec{q}), \, \forall \vec{q}.
\end{cases}
\end{equation} 

We use backward induction to find the equilibrium. First given password value $v$ and a concrete breakpoint distribution $\vec{q}$ we formulate the attacker's optimal strategy which maximize $U_{adv}(v, \vec{q},\pi)$ (Section \ref{sec:optimalstrategy}). Then taken the attacker's reaction into consideration the server determines $\vec{q}$ which minimizes attacker's success rate (Section \ref{sec:defender}).

\section{Attacker's Optimal Strategy}\label{sec:optimalstrategy}
In this section, we show how to compute the attacker's optimal strategy for both time-even breakpoints and cost-even breakpoints.
 Given  password value $v$ and label distribution $\vec{q}$, the attacker would choose $\pi^*=\argmax U_{adv}(v, \vec{q},\pi)$. 
 
 Before we introduce our algorithm used to find the optimal checking sequence, let us see why the native brute force algorithm is computationally infeasible. If the attacker chose to check top $\lenz$ passwords; for each password $pw_i$ the attacker has $m$ choices, namely, selects $\tau_i \in \{1,\ldots, m\}$. Thus the native brute force algorithm runs in time $\mathcal{O}\left(\sum_{\lenz= 1}^{n_p} m^{\lenz}\right)\subseteq \mathcal{O}(m^{n_p})$ with a very large exponent ($n_p\approx 2.14 \times 10^7$ for our largest dataset Linkedin, and $n_p\approx 3.74 \times 10^5$  for our smallest dataset Bfiled). This is why we need to design polynomial time algorithms.
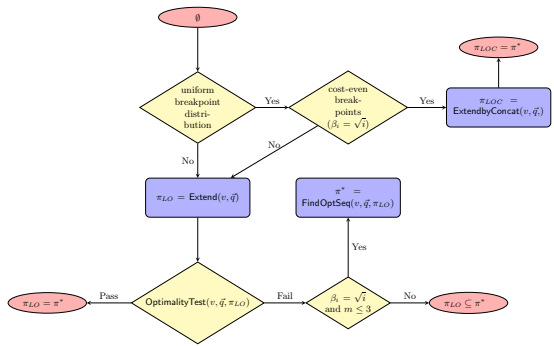
\begin{figure}[thb]
\centering
\scalebox{.4}{
\begin{tikzpicture}[node distance=2cm]
%\node (start) [startstop] {$\emptyset$};
%\node(pro1)[process, below of=start ]{$\EXC$};
%\node(dec1)[decision, below of=pro1, yshift=-0.5cm ]{if $\beta_i=\sqrt{i}$};

\tikzstyle{decision} = [diamond, draw, fill=yellow!30, text width=5em, minimum height = 3em, text centered, node distance=3cm, inner sep=0pt, aspect=1.62]
\tikzstyle{block} = [rectangle, draw, fill=blue!30, text width=10em, text centered, rounded corners, minimum height=4em]
\tikzstyle{line} = [draw, -latex']
\tikzstyle{cloud} = [draw, ellipse,fill=red!30, node distance=3cm, minimum height=1em]
% $\beta_i=\sqrt{i}$
% Place nodes
\node [cloud, text width=5em, text badly centered] (init) {$\emptyset$};
\node [decision, below of=init] (uniformdistribution) {uniform breakpoint distribution};
\node [block, text width=10em, below of=uniformdistribution, node distance=3cm] (extend) {$\pi_{LO} = \EX$};
\node [decision, right of=uniformdistribution, node distance = 5cm] (costeven) {cost-even breakpoints ($\beta_i=\sqrt{i}$)};
\node [block, text width=10em, right of=costeven, node distance=5cm] (concat) {$\pi_{LOC} = \EXC$};
\node [cloud, text width=5em, above of= concat, node distance= 2cm] (piLOC) {$\pi_{LOC} = \pi^* $};
\node [decision, text width = 12em,  below of=extend, node distance = 3.5cm] (OPT) { $\OT{\pi_{LO}} $};

\node [decision, right of=OPT, node distance = 5cm] (costeven2) {$\beta_i=\sqrt{i}$ and $m\leq 3$};

\node [cloud, text width=5em, left of= OPT, node distance= 5cm] (PiH) {$\pi_{LO}= \pi^* $};
\node [cloud, text width=5em, right of= costeven2, node distance= 4cm] (PiH-SubSetEq) {$\pi_{LO}\subseteq \pi^* $};
%\node [block, text width=10em, below of=costeven, node distance=3cm] (extend2) {$\pi_l = \EX$};
%\node [decision, text width = 10em, below of=extend2, node distance = 4cm] (OPT2) { $\OT{\pi_l} $};
%\node [cloud, text width=4em, right of= OPT2, node distance= 5cm] (PiH2) {$\pi_l= \pi^* $};
\node [block, text width=10em, below of=costeven, node distance=3cm] (FindGood) {$\pi^*  = \FG{\pi_{LO}}$};
%\node [cloud, text width=4em, right of= OPT2, node distance= 5cm] (PiH2) {$\pi_l= \pi^* $};

% \node [cloud, text width=7em, right of= FindGood, node distance= 5cm] (PiF) {$ \pi^* = \pi_{LO}^{j^*} + S_c^*(\pi_{LO}^{j^*}) $};

%\node [block, below of=init] (assign) {i = 1};
%\node [block, below of=assign] (increment) {increment i by 2};
%\node [block, left of=increment, node distance=3cm, text width=5em, fill=red!20] (incBy9) {increment i by 9};
%\node [block, below of=increment] (functionCall) {perform the function};
%\node [decision, below of=functionCall] (evenOddCheck) {is i even or odd};
%\node [cloud, below of=evenOddCheck] (stopping) {program closed then STOP};

% Draw edges
\draw [arrow] (init) -- (uniformdistribution);
\draw [arrow] (concat) -- (piLOC);
\draw [arrow] (extend) -- (OPT);
%\draw [arrow] (extend2) -- (OPT2);
% \draw [arrow] (FindGood) -- (PiF);
%\draw [arrow] (OPT) -- (PiH);
%\draw [arrow] (assign) -- (increment);
%\draw [arrow] (increment) -- (functionCall);
%\draw [arrow] (functionCall) -- (evenOddCheck);
\draw [arrow] (uniformdistribution) -- node [anchor=east] {No} (extend);
\draw [arrow] (uniformdistribution) -- node [anchor=south] {Yes} (costeven);
\draw [arrow] (costeven) -- node [anchor=south] {Yes} (concat);
\draw [arrow] (OPT) -- node[anchor=south]  {Pass} (PiH);
%\draw [arrow] (OPT2) -- node  {Pass} (PiH2);
%\draw [arrow] (OPT) -- node  {Fail} (PiH-SubSetEq);
\draw [arrow] (OPT) -- node[anchor=south] {Fail} (costeven2);
\draw [arrow] (costeven2)[anchor=south] -- node  {No} (PiH-SubSetEq);
\draw [arrow] (costeven2)[anchor=west] -- node  {Yes} (FindGood);
\draw [arrow] (costeven) -- node[anchor=south] {No} (extend);
%\draw [arrow] (costeven) -- node [near start] {No} (extend2);
%\draw [arrow] (incBy9) |- (assign);
%\draw [arrow] (evenOddCheck) |- (stopping);
%\draw [arrow] (uniformdistribution) -- ++(1.8cm,0cm) |- node [near start] {Yes} (init);
\end{tikzpicture}
}
\caption{Algorithm Flowchart}
\label{fig:alg}
\end{figure}
 
In the following subsections, we first specify a superset of $\pi^*$\footnote{We use the concept and notation of subset and superset for ordered sequences the way they were defined for regular set. If all elements of sequence $A$ are also elements of sequence $B$ regardless the order, we say $A \subseteq B$}, setting a boundary within which we will gradually extend the checking sequence from an empty one. Then we introduce our local search algorithm to find the optimal checking sequence (for most of the time). Our key intuition in designing algorithms is that an unchecked instruction bundle should be included into the optimal checking sequence if it provides non-negative marginal utility. Generally there are two local search directions, either concatenate instructions at the end of current checking sequence or insert instructions in the middle of current checking sequence.  After the local search algorithm terminates we reach a local optimum $\pi_{LO}$. Finally we design  algorithms to verify if the local optimum is also global optimum or promote the local optimum to global optimum under specifc parameter settings. As a overview we briefly summarize our results (also demonstrated in the flowchart, see Figure \ref{fig:alg}) in this section as follows:

\begin{itemize}
\item When we use cost-even breakpoints sampled from uniform distribution, namely, $\beta_i = \sqrt{i}$ and $q_i = \frac{1}{m}$, we have a local search algorithm $\EXC{\emptyset}$ which iteratively considers instruction bundle that can be concatenated, $\EXC{\emptyset}$ runs in time $\mathcal{O}(n_pm)$ and gives optimal checking sequence;
\item When breakpoints are cost-even ($\beta =\sqrt{i}$) but the distribution is non-uniform, we design an algorithm $\EX$ which returns a locally optimal checking sequence $\pi_{LO}$ in time $\mathcal{O}(n_pm)$. By locally optimal we mean that advancing any number of labels for any single password on the basis of $\pi_{LO}$ will decrease attacker's utility. 

After obtaining $\pi_{LO}$, we can run a polynomial algorithm $\OT{\pi_{LO}}$ to check if $\pi_{LO}$ is also a global optimum. If $\OT{\pi_{LO}}$ returns PASS, we know for sure that $\pi_{LO} = \pi^*$; otherwise, no conclusion can be drawn. If $m\leq 3$ we will use an efficient brute force algorithm $\FG{\pi_{LO}}$, which runs in time $\mathcal{O}(n_p^2)$, to the reach global optimum.

\item When $\beta \neq \sqrt{i}$, regardless of the breakpoint distribution we can still run $\EX$ to obtain locally optimal $\pi_{LO}$, and feed $\pi_{LO}$ to $\OT{\pi_{LO}}$. If $\OT{\pi_{LO}}$ returns PASS, again we have $\pi_{LO} = \pi^*$; if $\OT{\pi_{LO}}$ returns FAIL, we cannot deduce any information about the global optimality of $\pi_{LO}$; in this case, confirm that $\pi_{LO} = \pi^*$ or promote $\pi_{LO}$ to $\pi^*$ will take exponential time.

\end{itemize}

\subsection{Marginal Utility}
Since we are going to use marginal utility as metrics of state transition in local search, we first specify how to compute marginal utility.
% !TEX root = main.tex
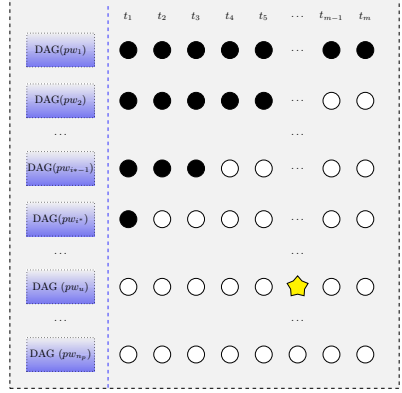
\begin{figure}[t]
\label{fig:pebbling}
\centering
\scalebox{0.45}{
\begin{tikzpicture}[node distance=2cm]
\centering
\tikzset{% This is the style settings for nodes
	blackC/.style={circle,minimum size=0.5cm,fill=black,draw=black},
	whiteC/.style={circle,minimum size=0.5cm,fill=white,draw=black},
	blueC/.style={circle,minimum size=0.5cm,fill=blue,draw=black},
	redC/.style={circle,minimum size=0.5cm,fill=red,draw=black},
	greenC/.style={circle,minimum size=0.5cm,fill=green,draw=black},
	violetC/.style={circle,minimum size=0.5cm,fill=red,draw=black},
	cli/.style={circle,minimum size=1cm,fill=white,draw},
	spl/.style={cli,append after command={
			node[circle,draw,dotted,
			minimum size=1.5cm] at (\tikzlastnode.center) {}}},
	c1/.style={-stealth,very thick,red!80!black},
	v2/.style={-stealth,very thick,yellow!65!black},
	v4/.style={-stealth,very thick,purple!70!black}}

\filldraw[dashed, top color=gray!10!,bottom color=gray!10!] (-1.5 ,-11) rectangle+(11.5,11.5);

\node at (2,0){$ t_1 $};
\node at (3,0){$ t_2 $};
\node at (4,0){$ t_3 $};
\node at (5,0){$ t_4 $};
\node at (6,0){$ t_5 $};
\node at (7,0){$ \ldots $};
\node at (8,0){$ t_{m-1} $};
\node at (9,0){$ t_m  $};

\filldraw[dotted, top color=white,bottom color=blue!50!] (-1 ,-1.5) rectangle+(2,1);
\node at (0,-1){DAG($ pw_1 $)};
\node[blackC] (0) at (2,-1){};
\node[blackC] (0) at (3,-1){};
\node[blackC] (0) at (4,-1){};
\node[blackC] (0) at (5,-1){};
\node[blackC] (0) at (6,-1){};
\node at (7,-1){$ \ldots $};
\node[blackC] (0) at (8,-1){};
\node[blackC] (0) at (9,-1){};

\filldraw[dotted, top color=white,bottom color=blue!50!] (-1 ,-3) rectangle+(2,1);
\node at (0,-2.5){DAG($ pw_2 $)};
\node[blackC] (0) at (2,-2.5){};
\node[blackC] (0) at (3,-2.5){};
\node[blackC] (0) at (4,-2.5){};
\node[blackC] (0) at (5,-2.5){};
\node[blackC] (0) at (6,-2.5){};
\node at (7,-2.5){$ \ldots $};
\node[whiteC] (0) at (8,-2.5){};
\node[whiteC] (0) at (9,-2.5){};

%%%%%
\node at (0,-3.5){$ \ldots $};
\node at (7,-3.5){$ \ldots $};

%%%%
\filldraw[dotted, top color=white,bottom color=blue!50!] (-1 ,-5) rectangle+(2,1);
\node at (0,-4.5){DAG($ pw_{i*-1} $)};
\node[blackC] (0) at (2,-4.5){};
\node[blackC] (0) at (3,-4.5){};
\node[blackC] (0) at (4,-4.5){};
\node[whiteC] (0) at (5,-4.5){};
\node[whiteC] (0) at (6,-4.5){};
\node at (7,-4.5){$ \ldots $};
\node[whiteC] (0) at (8,-4.5){};
\node[whiteC] (0) at (9,-4.5){};

\filldraw[dotted, top color=white,bottom color=blue!50!] (-1 ,-6.5) rectangle+(2,1);
\node at (0,-6){DAG($ pw_{i^*}$)};

\node[blackC] (0) at (2,-6){};
\node[whiteC] (0) at (3,-6){};
\node[whiteC] (0) at (4,-6){};
\node[whiteC] (0) at (5,-6){};
\node[whiteC] (0) at (6,-6){};
\node at (7,-6){$ \ldots $};
\node[whiteC] (0) at (8,-6){};
\node[whiteC] (0) at (9,-6){};

\node at (0,-7){$ \ldots $};
\node at (7,-7){$ \ldots $};

\filldraw[dotted, top color=white,bottom color=blue!50!] (-1 ,-8.5) rectangle+(2,1);
\node at (0,-8){DAG ($ pw_u$)};

\node[whiteC] (0) at (2,-8){};
\node[whiteC] (0) at (3,-8){};
\node[whiteC] (0) at (4,-8){};
\node[whiteC] (0) at (5,-8){};
\node[whiteC] (0) at (6,-8){};
\node[star,fill=yellow,minimum width=.65cm, draw = black] at (7,-8) {};
\node[whiteC] (0) at (8,-8){};
\node[whiteC] (0) at (9,-8){};

% \filldraw[dashed, draw =red,  top color=gray!30!,bottom color=gray!30!] (7 ,1) rectangle+(4,.5);
% \node at (9,1.2){$ pw_u$  with pepper $t_{m-1}$};
% \draw[thick, dashed, color = blue!80!] (10,-9)[->] -- +(-1.5,10);

\node at (0,-9){$ \ldots $};
\node at (7,-9){$ \ldots $};

\filldraw[dotted, top color=white,bottom color=blue!50!] (-1 ,-10.5) rectangle+(2,1);
\node at (0,-10){DAG ($ pw_{n_p}$)};

\node[whiteC] (0) at (2,-10){};
\node[whiteC] (0) at (3,-10){};
\node[whiteC] (0) at (4,-10){};
\node[whiteC] (0) at (5,-10){};
\node[whiteC] (0) at (6,-10){};
\node[whiteC] (0) at (7,-10){};
\node[whiteC] (0) at (8,-10){};
\node[whiteC] (0) at (9,-10){};

\draw[thick, dashed, color = blue!80!] (1.4,0.3)[-] -- +(0,-11.3);

\end{tikzpicture}
}
\caption{Password Cracking Process}
\medskip
\small
Black nodes denote current checking sequence $\pi$. White nodes denote unchecked instructions $\Pi(n_p,m)-\pi$. Star denotes unknown target $(pw_u, t_u)$.
\end{figure}

\begin{definition}
Fixing $v$ and $\vec{q}$, define $\Delta(\pi_1, \pi_2)$ to be marginal utility from strategy $\pi_1$ to $\pi_2$, namely,
\small
\begin{equation}
\Delta(\pi_1, \pi_2) \defeq U_{adv} (v, \vec{q},\pi_2) - U_{adv} (v, \vec{q},\pi_1).
\end{equation}
\normalsize
\end{definition}

For most of the time $\pi_2$ is the result of modifying $\pi_1$ which is called \emph{base},  in order to avoid redundantly repeating base  we often write $\con{e}{\pi_1}$ and $\ins{e}{\pi_1}$ to denote $\ma{\pi_1}{\pi_1 \circ e}$ and $\ma{\pi_1}{\pi_1 + e}$, respectively, where $e$ is some ordered set of instructions, referred to as \emph{extension}. Recall that $\circ$ is concatenation operation, here we formally introduce insertion operation $+$.
\begin{definition}
Given a checking sequence $\pi =\op{\lenz} $ and an instruction bundle $\varpi_{i^{\prime}}(j_1,j_2)$, define operation $\pi + \varpi_{i^\prime}(j_1,j_2)$ to be the checking sequence
\small
\begin{equation*}
\pi + \varpi_{i^\prime}(j_1,j_2) \defeq \op{i^{\prime}} \circ \varpi_{i^\prime}(j_1,j_2) \circ \oplus_{i = i^{\prime}+1}^{\lenz} \varpi_i(1, \tau_i).
\end{equation*}
\normalsize
\end{definition}

% Operation $\circ$ extends base at the tail while operation $+$ extends base from the body.
We discard superscript and  comprehensively write $\mar{e}{\pi}$ to denote the marginal utility by including $e$ into $\pi$, either through concatenation or insersion. Operations are valid only if the extension is \emph{compatible} with the base.  By compatible we mean the resulting checking sequence also satisfy both legit restrictions and opt restrictions. 

When $e$ is a singleton, from equation \eqref{eq:utility} we can derive the marginal utility by inserting instruction $e = (pw_i, t_j)\notin \pi$ to base $\pi$,
\small
\begin{equation}\label{eq:ins}
\ins{e}{\pi} = \Pr[pw_i]q_j \left(v + \sum_{e^{\prime} >e, e^{\prime}\in \pi} c(e^{\prime})\right) 
- \left(1-\sum_{e^{\prime} <e, e^{\prime}\in \pi} \Pr[e^{\prime}]\right) c_M(t_j^2 - t_{j-1}^2).
\end{equation}
\normalsize
where $\Pr[pw_i]q_j\sum_{e^{\prime} >e, e^{\prime}\in \pi} c(e^{\prime})$ is the influence of $e$ on future instructions since it eliminates some uncertainty about the user's password $pw_u$ thus reduces the \emph{expected} cost for future trials.

When $e$ is a singleton, marginal utility upon concatenation has no future influence, hence, 
\small
\begin{equation}\label{eq:con}
\con{e}{\pi} = \Pr[pw_i]q_j v - \left(1-\lambda(\pi)\right) c_M(t_j^2 - t_{j-1}^2).
\end{equation}
\normalsize

When $e$ consists of multiple consecutive instructions, the marginal utility can be computed by iteratively applying equation \eqref{eq:ins} and \eqref{eq:con}. Namely,
\small
\begin{equation}
	\mar{e}{\pi} = \sum_{i=1}^{|e|} \mar{e_i}{\pi \cup \{e_0,\ldots, e_{i-1}\}},
\end{equation}
\normalsize
where $e_0 = \emptyset$, $e_i$ is the $i$th instruction of $e$ and $\cup$  denotes inclusion (whether through concatenation or insertion) while maintaining natural ordering.

\subsection{A Superset of the Optimal Checking Sequence}
Before we present our algorithms we first show how to prune down the search space for $\pi^*$. Particularly, fixing $v$ and $\vec{q}$ we find an index $\lenmax $ such that $\pi^* \subseteq \Pi(\lenmax , m)$ i.e., $\pi^*$ will not even partially check passwords with rank larger than $\lenmax $. Thus there is no need to consider any instructions beyond $\Pi(\lenmax , m)$ in construction of the optimal checking sequence.
  
 \begin{restatable}{lemma}{resmarginal}
 \label{lemma:marginal}\small 
$
\con{\pi_3}{\pi_1} \leq \con{\pi_3}{\pi_2}, \, \text{ if } \lambda(\pi_1) \leq \lambda(\pi_2).
$\normalsize
 \end{restatable}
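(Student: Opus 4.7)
\medskip
\noindent\textbf{Proof proposal.} My plan is to expand $\con{\pi_3}{\pi}$ explicitly via the iterated application of equations~\eqref{eq:ins}/\eqref{eq:con} so that the dependence on the base $\pi$ is isolated into a single scalar, namely $\lambda(\pi)$. Write $\pi_3 = e_1\circ e_2\circ\cdots\circ e_k$ where each $e_\ell = (pw_{i_\ell},t_{j_\ell})$. Because $\pi_3$ is being concatenated to the end of $\pi$, no instruction of $\pi_3$ is followed by any instruction of $\pi$, so the ``future influence'' term in \eqref{eq:ins} does not appear; only the concatenation formula \eqref{eq:con} is needed. Iterating \eqref{eq:con} gives
\begin{equation*}
\con{\pi_3}{\pi} \;=\; v\cdot\lambda(\pi_3)\;-\;\sum_{\ell=1}^{k}\Bigl(1-\lambda(\pi)-\sum_{\ell'=1}^{\ell-1}\Pr[e_{\ell'}]\Bigr)\,c_M\bigl(t_{j_\ell}^2-t_{j_\ell-1}^2\bigr),
\end{equation*}
where the factor $1-\lambda(\pi)-\sum_{\ell'<\ell}\Pr[e_{\ell'}]$ is exactly the probability that none of the prior trials (those in $\pi$, together with $e_1,\ldots,e_{\ell-1}$) have uncovered the target by the time $e_\ell$ is executed.

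The crucial observation is that the reward term $v\cdot\lambda(\pi_3)$ depends only on $\pi_3$, and inside each cost term the only quantity that depends on the base is $\lambda(\pi)$. Subtracting the expressions for the two bases gives
\begin{equation*}
\con{\pi_3}{\pi_2} - \con{\pi_3}{\pi_1} \;=\; \bigl(\lambda(\pi_2)-\lambda(\pi_1)\bigr)\sum_{\ell=1}^{k} c_M\bigl(t_{j_\ell}^2-t_{j_\ell-1}^2\bigr).
\end{equation*}
Since $\lambda(\pi_2)\geq\lambda(\pi_1)$ by hypothesis, the running-time breakpoints satisfy $t_{j-1}<t_j$ so each summand is non-negative, and $c_M>0$, the right-hand side is non-negative, yielding the claim.

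Implicit in the statement is that $\pi_3$ is compatible with both $\pi_1$ and $\pi_2$ (otherwise the shorthand $\con{\pi_3}{\pi_i}$ is not defined), so I would note this up front and use compatibility only to ensure that the telescoping via \eqref{eq:con} is legal, i.e.\ that each intermediate sequence $\pi_i\circ e_1\circ\cdots\circ e_\ell$ satisfies the legit/opt restrictions. The substantive content of the lemma is the economic intuition that a larger success probability accumulated by the prefix $\pi$ strictly lowers the \emph{expected} cost the attacker still has to pay to execute the suffix $\pi_3$, while the expected reward from $\pi_3$ itself is unaffected by the prefix. I do not anticipate a real obstacle here: the main care required is simply to track that the ``future influence'' correction from~\eqref{eq:ins} does not contribute, which holds because $\pi_3$ sits at the end of the concatenated sequence, so the derivation collapses to the short algebraic identity above.
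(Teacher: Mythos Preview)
Your proposal is correct and mirrors the paper's own proof: both decompose $\pi_3$ into its instructions, apply the concatenation formula~\eqref{eq:con} iteratively so that the base enters only through $\lambda(\pi)$, and then compare term by term (the paper bounds each summand separately whereas you subtract and collect the difference into a single non-negative sum, but this is the same computation). Your explicit remark that the ``future influence'' term from~\eqref{eq:ins} vanishes because $\pi_3$ is appended at the end is exactly why the paper can invoke~\eqref{eq:con} rather than~\eqref{eq:ins}.
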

 
\begin{definition}
Fixing $v$ and $\vec{q}$ we define  
\begin{equation*}
\lenmax  \defeq 
\begin{cases}
  \max_i\{i:F(v, \vec{q}, i) \geq 0\}, \text{ if such } i \text{ exists}, \\
 0, \text{ o.w.}
\end{cases}
\end{equation*}
where 
\small\begin{equation*}
F(v,\vec{q},i) \defeq
\begin{cases}
 \max_{1\leq j \leq m}\{\margin{\emptyset}{\varpi_i(0,j)}\}, \text{ if }i=1, \\
 \max_{1\leq j \leq m}\{\con{\varpi_i(0,j)}{\Pi(i-1,m)}\}, \text{ o.w.}
\end{cases}
\end{equation*}
\end{definition}
Intuitively,
$\lenmax $ is the largest possible password index for which at least one of instruction bundles $\varpi_{\lenmax}(1, j), 1\leq j \leq m$ provide non-negative marginal utility no matter what previous instructions are. We remark even though there is no theoretical proof of monotonicity of $F(v, \vec{q},i)$, we have verified that $F(v, \vec{q},i)$ is decreasing in $i$ for our empirical password distribution. Note that by Lemma \ref{lemma:marginal}
we have
\small
\begin{equation*}
\con{\varpi_i(0,j)}{\op{i-1}} \leq F(v, \vec{q}, i),
\end{equation*}
\normalsize
if $F(v, \vec{q},i) < 0$, then $\varpi_i(0, j)$ would certainly provide negative marginal utility, thus cannot be included in $\pi^*$.  It is described in the following theorem.

\begin{restatable}{theorem}{restatesuperset}
\label{theorem:superset}
\small
\begin{equation*}
	\pi^* \subseteq \Pi(\lenmax, m).
\end{equation*}
\normalsize
\end{restatable}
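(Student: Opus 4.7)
The plan is to prove $\pi^* \subseteq \Pi(\lenmax,m)$ by contradiction. Suppose some instruction for a password with rank strictly larger than $\lenmax$ appears in $\pi^*$. By natural ordering (which combines the legit restrictions with opt restrictions 1 and 2), $\pi^*$ has the canonical form $\pi^* = \op{\lenz}$ for some $\lenz > \lenmax$. I will exhibit a legit checking sequence with strictly larger utility than $\pi^*$, namely the prefix $\pi^{(\lenmax)} \defeq \op{\lenmax}$, which contradicts optimality and forces $\lenz \leq \lenmax$.

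To compare the two utilities I will telescope the difference bundle-by-bundle. For each $k \in \{\lenmax+1,\ldots,\lenz\}$ let $\pi^{(k-1)} \defeq \op{k-1}$ denote the intermediate prefix of $\pi^*$ obtained by stopping after the password of rank $k-1$. Then
\begin{equation*}
U_{adv}(v,\vec{q},\pi^*) - U_{adv}(v,\vec{q},\pi^{(\lenmax)}) = \sum_{k=\lenmax+1}^{\lenz} \con{\varpi_k(1,\tau_k)}{\pi^{(k-1)}}.
\end{equation*}
Every instruction in $\pi^{(k-1)}$ is also an instruction in $\Pi(k-1,m)$, so $\lambda(\pi^{(k-1)}) \leq \lambda(\Pi(k-1,m))$. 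Applying Lemma~\ref{lemma:marginal} with the smaller-probability base $\pi^{(k-1)}$ yields
\begin{equation*}
\con{\varpi_k(1,\tau_k)}{\pi^{(k-1)}} \leq \con{\varpi_k(1,\tau_k)}{\Pi(k-1,m)} \leq F(v,\vec{q},k),
\end{equation*}
where the second inequality is immediate from the maximization in the definition of $F$. Since $k > \lenmax$, the definition of $\lenmax$ forces $F(v,\vec{q},k) < 0$, so every summand is negative and $U_{adv}(v,\vec{q},\pi^*) < U_{adv}(v,\vec{q},\pi^{(\lenmax)})$, the desired contradiction.

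The main obstacle I anticipate is getting the direction of Lemma~\ref{lemma:marginal} right, which reduces to carefully confirming the instruction-wise containment $\pi^{(k-1)} \subseteq \Pi(k-1,m)$ and hence $\lambda(\pi^{(k-1)}) \leq \lambda(\Pi(k-1,m))$. A secondary subtlety is that $\pi^{(\lenmax)}$ may terminate in the middle of an equivalence set and so violate opt restriction 3, but this is harmless: $\pi^*$ is defined as an optimum over \emph{all} legit checking sequences, with opt restrictions being derived consequences of optimality, so any legit sequence with strictly higher utility suffices to refute optimality. Finally, a cosmetic point is that $F$ is defined using bundles $\varpi_i(0,j)$ rather than $\varpi_i(1,\tau_i)$; the two coincide because the phantom label $t_0 = 0$ contributes no probability mass and no marginal cost, so the bound $F(v,\vec{q},k)$ applies verbatim to the bundles that appear in the telescope.
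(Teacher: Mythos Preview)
Your proposal is correct and follows essentially the same approach as the paper's own proof: both argue by contradiction, use Lemma~\ref{lemma:marginal} to upper-bound $\con{\varpi_k(1,\tau_k)}{\pi^{(k-1)}}$ by $F(v,\vec{q},k)$, and invoke the definition of $\lenmax$ to conclude negativity for $k>\lenmax$. Your version is in fact a bit more explicit than the paper's (you spell out the telescoping sum and address the opt-restriction-3 and $\varpi_i(0,j)$ subtleties, which the paper glosses over), but the underlying argument is identical.
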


\subsection{Extension by Concatenation}

We have established a superset of $\pi^*$ in last subsection, now we design a local search algorithm that gives us a checking sequence $\pi_{LOC}$ which is a subset of  $\pi^*$. Here, LOC stands for ``locally optimal with respect to concatenation." The sequence $\pi_{LOC}$ will be helpful to further prune down the search space for $\pi^*$. In fact, in the special case the breakpoint distribution is uniform ($q_i  = \frac{1}{m}$) and when we use cost-even breakpoints ($\beta
_i = \sqrt{i}$) we can prove that equality holds i.e.,  $\pi_{LOC}=\pi^*$ {\em is} the optimal solution. 

To find our sequence $\pi_{LOC}$ we start with the empty sequence of instructions and repeatedly include instructions that provide non-negative marginal utility upon concatenation to the current solution. We design a local search algorithm $\EXC{\emptyset}$ to find a checking sequence $\pi_{LOC}$. Our local search algorithm $\EXC{\emptyset}$ terminates after at most $n_p$ rounds.

 After the $i-1$th round we have $\pi_{LOC} \subseteq \Pi(i-1, m)$ i.e., the current solution only includes checking instructions for the first $i-1$ passwords. In the $i$th round we find an instruction bundle for password $i$ which maximizes (marginal) utility upon concatenation.  More specifically, in round $i$ we compute $\tau_i = \arg \max _{0\leq j\leq m} \{\con{\varpi_i(0,j)}{\pi_{LOC}}\}$ and append this instruction bundle to obtain an updated checking sequence $\pi_{LOC} = \pi_{LOC}\circ \varpi_i(0,\tau_i)$. Details can be found in Algorithm \ref{alg:EXbyConcat}. 

 \vspace{-0.5cm}
 \begin{algorithm}[h]\small
	\KwIn{$v$, $\vec{q}$}
	\KwOut{$\pi_{LOC}$}
	\BlankLine
	$\pi_{LOC} = \pi$\;
	$start = i^*(\pi_{LOC})$\;
	\For{$i = start:n_p$}
	{
		\For{$j = 0:m$}
		{
			Compute $\con{\varpi_i(0,j)}{\pi_{LOC}}$\;
		}
		$\tau_i = \arg \max _{0\leq j\leq m} \{\con{\varpi_i(0,j)}{\pi_{LOC}}\}$\;
		\If{$\tau_i > 0$}
		{
			$\pi_{LOC} = \pi_{LOC} \circ \vp{i}$\;
			\lElse{break}
		}
	}
	\KwRet{$\pi_{LOC}$}
	\caption{$\EXC{\pi}$}
	\label{alg:EXbyConcat}
 \end{algorithm}
 \vspace{-0.5cm}

We can use equation \eqref{eq:con} to compute the marginal utility in time $\mathcal{O}(1)$ by caching previously computed values of $\lambda(\pi)$.
Thus, $\EXC{\emptyset}$ runs in time $\mathcal{O}(\lenmax m) \subseteq \mathcal{O}(n_pm)$, recall that $n_p$ is the number of distinct password.

\begin{restatable}{theorem}{restateconcat}
	\label{theorem:concat}
\small
\begin{equation*}
	\pi_{LOC}\subseteq \pi^*.
\end{equation*}
\normalsize
\end{restatable}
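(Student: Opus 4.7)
I will argue by induction on the round counter $i$ the strengthened claim: there exists a globally optimal strategy $\pi^*$ such that $\pi_{LOC}^{(i)} \subseteq \pi^*$, where $\pi_{LOC}^{(i)}$ denotes the contents of $\pi_{LOC}$ at the end of round $i$ of $\EXC{\emptyset}$. The base case $i=0$ is trivial since $\pi_{LOC}^{(0)}=\emptyset$, and if the algorithm breaks in round $i$ with $\tau_i^{LOC}=0$ then $\pi_{LOC}=\pi_{LOC}^{(i-1)}$ and the final claim is just the inductive hypothesis at $i-1$.

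For the inductive step, fix an optimum $\pi^*$ with $\pi_{LOC}^{(i-1)}\subseteq\pi^*$ and write it in canonical form $\pi^* = \pi_{pre} \circ \varpi_i(1,\tau_i^*) \circ \pi_{suf}$, where $\tau_i^* \ge 0$ is the number of labels for $pw_i$ used by $\pi^*$. Since every instruction in $\pi_{LOC}^{(i-1)}$ concerns a password $pw_{i'}$ with $i'<i$, the hypothesis forces $\pi_{LOC}^{(i-1)} \subseteq \pi_{pre}$ and hence $\lambda(\pi_{LOC}^{(i-1)}) \le \lambda(\pi_{pre})$. If $\tau_i^* \ge \tau_i^{LOC}$ there is nothing to show; otherwise construct $\pi^{**}$ from $\pi^*$ by enlarging the $pw_i$ bundle from $\varpi_i(1,\tau_i^*)$ to $\varpi_i(1,\tau_i^{LOC})$ (an insertion when $\pi_{suf}\neq\emptyset$, a tail concatenation when $\pi_{suf}=\emptyset$). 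Growing a single existing bundle preserves the legit restrictions and the opt restrictions, so it suffices to show $U_{adv}(v,\vec q,\pi^{**}) \ge U_{adv}(v,\vec q,\pi^*)$: this exhibits another optimum containing $\pi_{LOC}^{(i)}$ and closes the induction.

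Let $B\defeq\varpi_i(\tau_i^*+1,\tau_i^{LOC})$, $P\defeq\pi_{pre}\circ\varpi_i(1,\tau_i^*)$, and $P'\defeq\pi_{LOC}^{(i-1)}\circ\varpi_i(1,\tau_i^*)$, so $\pi^*=P\circ\pi_{suf}$ and $\pi^{**}=P\circ B\circ\pi_{suf}$. Using~\eqref{eq:utility} a direct computation yields
\[
U_{adv}(v,\vec q,\pi^{**}) - U_{adv}(v,\vec q,\pi^*) \;=\; \con{B}{P} + \Pr[B]\sum_{e'\in\pi_{suf}} c(e'),
\]
where the second term captures the cost saving on every $e'\in\pi_{suf}$ due to the new probability mass $\Pr[B]$ absorbed earlier in the sequence, and is clearly non-negative. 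Lemma~\ref{lemma:marginal} applied with $\pi_3=B$, $\pi_1=P'$, $\pi_2=P$ (valid because $\lambda(P')\le\lambda(P)$ by the first paragraph) gives $\con{B}{P}\ge\con{B}{P'}$. Finally, the arg max choice on line~6 of Algorithm~\ref{alg:EXbyConcat} implies
\[
\con{B}{P'} \;=\; \con{\varpi_i(0,\tau_i^{LOC})}{\pi_{LOC}^{(i-1)}} - \con{\varpi_i(0,\tau_i^*)}{\pi_{LOC}^{(i-1)}} \;\ge\; 0.
\]
Chaining these three inequalities delivers $U_{adv}(\pi^{**}) \ge U_{adv}(\pi^*)$ as required.

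The principal obstacle is handling the insertion case cleanly: in $\pi^{**}$ the new bundle $B$ sits in the middle of the sequence, so its presence both incurs execution costs (charged against the $\lambda$ of the prefix $P$, which is strictly larger than the algorithm's base $P'$) and simultaneously reduces the expected execution cost of every $e'\in\pi_{suf}$. Disentangling these two effects into the single identity above is the key bookkeeping step; once that is done, Lemma~\ref{lemma:marginal} absorbs the $\lambda$-perturbation and the algorithm's arg max property handles the reward-versus-cost tradeoff, so the local optimality exploited inside $\EXC{\emptyset}$ lifts to the global statement of the theorem.
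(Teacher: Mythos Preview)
Your proof is correct and follows essentially the same line as the paper's own argument. Both identify the first password $i$ where $\pi_{LOC}$'s bundle exceeds that of $\pi^*$, use the algorithm's $\arg\max$ property together with Lemma~\ref{lemma:marginal} to show that concatenating the extra bundle $B$ after the corresponding prefix of $\pi^*$ has non-negative marginal utility, and then observe that the suffix can only benefit from the enlarged prefix. The only cosmetic differences are that you package this as an inductive exchange argument (replacing $\pi^*$ by $\pi^{**}$ at each step) rather than a single contradiction, and that you compute the suffix cost-saving term explicitly as $\Pr[B]\sum_{e'\in\pi_{suf}}c(e')$ whereas the paper invokes Lemma~\ref{lemma:marginal} a second time; your framing has the minor advantage of handling ties (multiple optima) cleanly.
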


From Theorem \ref{theorem:superset} and Theorem \ref{theorem:concat}, it is easy to derive the following corollaries.

\begin{corollary}
\small
$$\len{\pi_{LOC}}\leq \len{\pi^*}\leq  \lenmax ,$$
\normalsize
and
\small
$$\len{\pi_{LOC}}, \len{\pi^*}, \lenmax \in \left\{x_0,x_1,\ldots,x_{n_e}\right\}, $$
\normalsize
where
\small 
\begin{equation}
x_k =
\begin{cases}
0, \text{ if } k =0,\\
\sum_{k^{\prime} = 1}^k |es_{k^{\prime}}|, \text{ if } k =  1,\ldots, n_e.\\
\end{cases}
\end{equation}\normalsize
\end{corollary}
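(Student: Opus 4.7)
The plan is to split the corollary into two assertions and dispatch each by combining the preceding theorems with simple monotonicity observations.

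\textbf{Length ordering.} By Theorem \ref{theorem:concat} we have $\pi_{LOC} \subseteq \pi^*$, and by Theorem \ref{theorem:superset} we have $\pi^* \subseteq \Pi(\lenmax, m)$. Since $\lenz$ is defined as the largest password index that appears anywhere in the sequence, the subset relation on sequences immediately gives a monotone inequality on $\lenz$: any instruction in the smaller sequence also lies in the larger one, so the maximum password index in the smaller sequence is bounded by that of the larger. Because $\Pi(\lenmax, m)$ contains instructions for exactly the top $\lenmax$ passwords, $\len{\Pi(\lenmax, m)} = \lenmax$. Chaining the two subset relations yields $\len{\pi_{LOC}} \leq \len{\pi^*} \leq \lenmax$.

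\textbf{Discretization for $\pi^*$ and $\pi_{LOC}$.} For $\len{\pi^*}$, opt restriction 3 (``Stop at $es$ boundary'') applies directly, since $\pi^*$ is the rational attacker's best response; hence $\len{\pi^*} \in \{x_0, \ldots, x_{n_e}\}$. For $\len{\pi_{LOC}}$, I would argue by monotonicity. From equation \eqref{eq:con}, $\con{\varpi_i(0,j)}{\pi}$ depends on $i$ only through $\Pr[pw_i]$, which is constant within an equivalence set, and through $\lambda(\pi)$. As $\EXC{\emptyset}$ processes consecutive passwords $pw_i, pw_{i+1}, \ldots$ belonging to a common equivalence set, the accumulated $\lambda(\pi_{LOC})$ is non-decreasing, so by a bundle-level extension of Lemma \ref{lemma:marginal} the quantity $\max_{0 \le j \le m}\con{\varpi_{i'}(0,j)}{\pi_{LOC}}$ is non-decreasing in $i'$ across that set. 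Consequently, if the algorithm picks $\tau_i > 0$ for some $pw_i$ in an equivalence set, it must also pick $\tau_{i'} > 0$ for every later $pw_{i'}$ in the same set. The algorithm can therefore only terminate at an equivalence set boundary, forcing $\len{\pi_{LOC}} \in \{x_0,\ldots,x_{n_e}\}$.

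\textbf{Discretization for $\lenmax$.} A parallel monotonicity argument applies to $F(v,\vec{q},i)$. Within a single equivalence set $\Pr[pw_i]$ is fixed while the base $\Pi(i-1,m)$ grows with $i$, so $\lambda(\Pi(i-1,m))$ is non-decreasing, and Lemma \ref{lemma:marginal} yields that $F(v, \vec{q}, i)$ is non-decreasing on the equivalence set. Hence if $F(v,\vec{q}, i) \geq 0$ anywhere inside the set then $F$ remains nonnegative to the end of that set, so $\lenmax$ must land on an equivalence set boundary (with the degenerate case $\lenmax = 0 = x_0$ handled by the definition).

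\textbf{Main obstacle.} I expect the only nontrivial step to be lifting Lemma \ref{lemma:marginal} from singleton extensions to whole instruction bundles of the form $\varpi_i(0,j)$; this should follow by iteratively applying the singleton lemma along with the decomposition of bundle marginal utility into a telescoping sum. Once the bundle-level monotonicity is in place, both parts of the corollary reduce to routine bookkeeping.
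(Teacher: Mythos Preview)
Your proposal is correct; the paper gives no proof beyond the remark that the corollary follows from Theorems~\ref{theorem:superset} and~\ref{theorem:concat}, so your argument simply supplies the omitted details along the natural lines. One minor simplification: the ``main obstacle'' you flag is already handled, since Lemma~\ref{lemma:marginal} as stated and proved in the paper applies to an arbitrary extension $\pi_3$, not just a singleton instruction (its proof iterates over all instructions of $\pi_3$). The only additional, and trivial, observation you need is that the concatenation marginal in equation~\eqref{eq:con} depends on the extension only through $\Pr[pw_i]$ and the label costs, so bundles $\varpi_i(0,j)$ and $\varpi_{i'}(0,j)$ for two passwords in the same equivalence set are interchangeable in the inequality, which is what lets you transport the monotonicity across passwords within a set.
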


\begin{corollary}
\small$$\lambda(\pi_{LOC})\leq P_{adv} = \lambda(\pi^*) \leq \lambda\left(\Pi(\lenmax , m)\right).$$\normalsize
\end{corollary}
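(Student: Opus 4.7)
The plan is to prove $\pi_{LOC}\subseteq\pi^*$ by induction on the round index $i$ of $\EXC{\emptyset}$. Let $\pi_{LOC}^{(i)}=\varpi_1(1,\tau_1^{LOC})\circ\cdots\circ\varpi_i(1,\tau_i^{LOC})$ denote the state of the algorithm after round $i$, and let $\tau_i^*$ denote the number of labels checked for $pw_i$ in an optimum $\pi^*$ (chosen, in case of ties either in the algorithm's $\arg\max$ or in $U_{adv}$, to favor supersets of $\pi_{LOC}$). The induction hypothesis is that $\tau_{i'}^{LOC}\le\tau_{i'}^*$ for all $i'<i$, which together with the shared distribution structure yields $\lambda(\pi_{LOC}^{(i-1)})\le\lambda(\pi^*_{<(pw_i,t_1)})$, where $\pi^*_{<(pw_i,t_1)}$ is the prefix of $\pi^*$ under natural ordering that sits strictly before any instruction on $pw_i$. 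The base case $i=1$ is subsumed by the inductive step with empty prefix, and we only need to run the induction for $i\in\{1,\ldots,\len{\pi_{LOC}}\}$ since no further instructions exist in $\pi_{LOC}$.

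For the inductive step I would suppose for contradiction $\tau_i^*<\tau_i^{LOC}$ and define $\pi^{**}$ by inserting the bundle $\varpi_i(\tau_i^*+1,\tau_i^{LOC})$ into $\pi^*$ immediately after the existing block $\varpi_i(0,\tau_i^*)$ (if $\tau_i^*=0$ then this becomes a concatenation right after $\pi^*_{<(pw_i,t_1)}$, and the argument is only easier). The goal is to show that the marginal utility of this modification is non-negative, contradicting optimality of $\pi^*$. Three inequalities chain together to get there. First, comparing \eqref{eq:ins} and \eqref{eq:con} shows that insertion of an instruction $e=(pw_i,t_j)$ dominates concatenation to the same natural-order prefix, because insertion collects the non-negative ``future savings'' term $\Pr[pw_i]q_j\sum_{e'>e,\,e'\in\pi^*}c(e')$ while sharing the same cost term $(1-\lambda(\pi^*_{<e}))c_M(t_j^2-t_{j-1}^2)$; summed over the inserted bundle this yields
\[
\mar{\varpi_i(\tau_i^*+1,\tau_i^{LOC})}{\pi^*}\;\ge\;\con{\varpi_i(\tau_i^*+1,\tau_i^{LOC})}{\pi^*_{<(pw_i,t_1)}\circ\varpi_i(0,\tau_i^*)}.
\]
Second, Lemma \ref{lemma:marginal} applied with the induction hypothesis on $\lambda$ gives
\[
\con{\varpi_i(\tau_i^*+1,\tau_i^{LOC})}{\pi^*_{<(pw_i,t_1)}\circ\varpi_i(0,\tau_i^*)}\;\ge\;\con{\varpi_i(\tau_i^*+1,\tau_i^{LOC})}{\pi_{LOC}^{(i-1)}\circ\varpi_i(0,\tau_i^*)}.
\]
Third, telescoping concatenation marginals shows the right-hand side equals $\con{\varpi_i(0,\tau_i^{LOC})}{\pi_{LOC}^{(i-1)}}-\con{\varpi_i(0,\tau_i^*)}{\pi_{LOC}^{(i-1)}}$, which is non-negative because $\tau_i^{LOC}$ is the $\arg\max$ over $j\in\{0,1,\ldots,m\}$. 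Chaining the three inequalities gives $\mar{\varpi_i(\tau_i^*+1,\tau_i^{LOC})}{\pi^*}\ge 0$, so $U_{adv}(v,\vec{q},\pi^{**})\ge U_{adv}(v,\vec{q},\pi^*)$, yielding the desired contradiction under our tie-breaking convention.

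The main obstacle I expect is the clean separation of concatenation and insertion marginal utilities. The algorithm reasons only locally, through concatenation to its current state $\pi_{LOC}^{(i-1)}$, whereas the global optimum $\pi^*$ typically contains many instructions on passwords $i+1,\ldots,n_p$ that alter the insertion cost both through the prefix success probability $\lambda(\pi^*_{<e})$ and through the future-cost bonus. The non-negativity of the future-savings term together with the monotonicity supplied by Lemma \ref{lemma:marginal} are exactly the two ingredients that bridge this gap, but verifying that they compose correctly when the inserted block has several instructions, each of which shifts the $\lambda$ relevant to the next one, is where the book-keeping is most delicate and where the argument must be written out carefully.
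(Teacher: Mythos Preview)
Your proposal addresses only the lower inequality: you argue for $\pi_{LOC}\subseteq\pi^*$ and hence $\lambda(\pi_{LOC})\le\lambda(\pi^*)$, but you say nothing about $\lambda(\pi^*)\le\lambda(\Pi(\lenmax,m))$. In the paper this corollary carries no separate proof; it is read off directly from Theorem~\ref{theorem:concat} ($\pi_{LOC}\subseteq\pi^*$) and Theorem~\ref{theorem:superset} ($\pi^*\subseteq\Pi(\lenmax,m)$), together with the monotonicity of $\lambda$ under set inclusion. So for the full corollary you still need to invoke or re-derive Theorem~\ref{theorem:superset}.

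Your argument for $\pi_{LOC}\subseteq\pi^*$ is correct and is essentially the paper's own proof of Theorem~\ref{theorem:concat}, written out more carefully. The paper takes the first bundle of $\pi_{LOC}$ missing from $\pi^*$, inserts it into $\pi^*$, and shows utility does not drop; your induction on $i$ is precisely this first-failure argument. Your three-step chain---(a) insertion into $\pi^*$ dominates concatenation to the natural-order prefix $\pi_a^*$, (b) Lemma~\ref{lemma:marginal} lowers the concatenation base from $\pi_a^*$ to $\pi_{LOC}^{(i-1)}\circ\varpi_i(0,\tau_i^*)$, (c) telescoping plus the $\arg\max$ definition of $\tau_i^{LOC}$---just unpacks the paper's two-line argument. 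In particular, the paper's use of Lemma~\ref{lemma:marginal} on the suffix $\pi_b^*$ is exactly your step~(a), since $\mar{B}{\pi^*}=\con{B}{\pi_a^*}+\bigl(\con{\pi_b^*}{\pi_a^*\circ B}-\con{\pi_b^*}{\pi_a^*}\bigr)$ and the bracket is non-negative by that lemma. The multi-instruction bookkeeping you worry about is therefore handled by this identity applied once to the whole bundle $B$, so the obstacle you flag is not a real difficulty.
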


Now we have a polynomial algorithm that returns a checking sequence $\pi_{LOC}$ locally optimal with respect to concatenation. The following theorem states that $\pi_{LOC} = \pi^*$ if breakpoints are cost-even and follow uniform distribution.

\begin{restatable}{theorem}{restatecopt}
	\label{theorem:copt}
When $q_i = \frac{1}{m}$ and $\beta_i = \sqrt{i}$, $\EXC{\emptyset}$ returns the optimal checking sequence, i.e., $\pi_{LOC} = \pi^*$.
\end{restatable}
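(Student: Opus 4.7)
The plan is to leverage the double symmetry of the cost-even uniform setting: when $\beta_j = \sqrt{j}$ and $q_j = 1/m$, the incremental cost $c_M(t_j^2-t_{j-1}^2) = c^* \defeq c_M t_m^2/m$ and the per-label success mass $\Pr[pw_i]/m$ are both independent of $j$. Every label inside a bundle for password $pw_i$ therefore contributes identically to gain and to cost, which is the feature that drives the proof.

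My first step is to show that the optimal bundle depth for each password lies in $\{0,m\}$. From \eqref{eq:con}, the marginal of appending the $k$-th label of $pw_i$ to the current bundle on base $\pi$ is
\[\frac{\Pr[pw_i]}{m}\,v \;-\; \left(1 - \lambda(\pi) - (k-1)\frac{\Pr[pw_i]}{m}\right) c^*,\]
which is strictly increasing in $k$. Hence $\con{\varpi_i(0,j)}{\pi}$ is a convex quadratic in $j$ and is maximized at an endpoint $j\in\{0,m\}$. Applying the same monotonicity argument to the insertion marginal \eqref{eq:ins} in place of \eqref{eq:con} forces the depth $\tau_i^*$ inside any optimal $\pi^*$ into $\{0,m\}$ as well; the extra savings term $\Pr[pw_i]\,c^*|B|/m$ contributed by a suffix $B$ is constant in $k$ and does not spoil the monotonicity.

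Combined with opt restriction 2, both $\pi_{LOC}$ and $\pi^*$ take the canonical form $\varpi_1(0,m)\circ\cdots\circ\varpi_l(0,m)$. Theorem~\ref{theorem:concat} already gives $\pi_{LOC}\subseteq \pi^*$, so I can write $\pi^* = \pi_{LOC}\circ\varpi_{k^*+1}(0,m)\circ\cdots\circ\varpi_{l^*}(0,m)$ with $l^*\geq k^*\defeq \len{\pi_{LOC}}/m$. The theorem thus reduces to showing $l^* = k^*$.

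The hard part is closing this final gap. Greedy's halting rule provides $g(k^*+1)\defeq \con{\varpi_{k^*+1}(0,m)}{\pi_{LOC}}\leq 0$, and I must conclude $U(\pi^*)-U(\pi_{LOC}) = \sum_{i=k^*+1}^{l^*} g(i)\leq 0$. A naive term-by-term comparison fails because Lemma~\ref{lemma:marginal} only says that enlarging the base weakly increases concatenation marginals, so individual $g(i)$ for $i>k^*+1$ may strictly exceed $g(k^*+1)$. My plan is a telescoping argument: writing $g(i) = \Pr[pw_i]\bigl[v + c^*(m-1)/2\bigr] - c_M t_m^2\bigl(1-S_{i-1}\bigr)$ with $S_{i-1}=\sum_{j<i}\Pr[pw_j]$, I would substitute the halting inequality $\Pr[pw_{k^*+1}]\bigl[v+c^*(m-1)/2\bigr]\leq c_M t_m^2(1-S_{k^*})$ and combine it with the natural-ordering bound $\Pr[pw_i]\leq \Pr[pw_{k^*+1}]$ for $i>k^*$, using the cost-even uniform algebra to match the decay of $\Pr[pw_i]$ against the additive growth of $S_{i-1}-S_{k^*}$. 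Executing this telescoping so that every prefix sum stays at or below the deficit $g(k^*+1)\leq 0$ is the crux of the proof.
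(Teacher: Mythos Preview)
Your opening step---using the strictly increasing per-label marginal under $\beta_j=\sqrt{j}$, $q_j=1/m$ to force every bundle depth in $\pi_{LOC}$ into $\{0,m\}$---is exactly the paper's argument. Where you diverge is in closing the gap between $\pi_{LOC}$ and $\pi^*$, and here your plan is both harder than necessary and left unexecuted: you invoke only the lower inclusion $\pi_{LOC}\subseteq\pi^*$ (Theorem~\ref{theorem:concat}) and then propose a telescoping bound on $\sum_{i>k^*}g(i)$ that you yourself flag as ``the crux'' without carrying it out. That is a genuine gap.

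The paper never attempts such a telescoping. It instead brings in the \emph{upper} inclusion you are ignoring, Theorem~\ref{theorem:superset}: $\pi^*\subseteq\Pi(\lenmax,m)$. Once you know greedy always picks $\tau_i\in\{0,m\}$, the base at round $i$ of $\EXC{\emptyset}$ is literally $\Pi(i-1,m)$, so greedy's halting test $\con{\varpi_i(1,m)}{\Pi(i-1,m)}\gtrless 0$ is the same inequality $F(v,\vec q,i)\gtrless 0$ that defines $\lenmax$. The paper concludes $\len{\pi_{LOC}}=\lenmax$, hence $\pi_{LOC}=\Pi(\lenmax,m)$, and the sandwich $\pi_{LOC}\subseteq\pi^*\subseteq\Pi(\lenmax,m)=\pi_{LOC}$ finishes in one line. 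Your separate argument that $\tau_i^*\in\{0,m\}$ inside $\pi^*$ also becomes unnecessary under this squeeze. The concrete fix to your proposal is therefore to drop the telescoping entirely and invoke Theorem~\ref{theorem:superset}.
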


Even though the attacker behaviors optimally---following strategy $\pi^*$. We can guarantee that our mechanism results in lower (or equal if no passwords are cracked) percentage of cracked passwords than deterministic cost hashing, which is captured by Theorem \ref{theorem:costeven}.
\begin{restatable}{theorem}{restatecosteven}
\label{theorem:costeven}
When $\beta_i = \sqrt{i}$ and $q_i = \frac{1}{m}$ then,
\small$\lambda(\pi^*) \leq P_{adv}^{d},$\normalsize
where $P_{adv}^d$ is the percentage of cracked passwords in traditional deterministic cost hashing.
\end{restatable}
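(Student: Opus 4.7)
My plan is to invoke Theorem~\ref{theorem:copt} to reduce the problem to a comparison of two greedy \emph{prefix} strategies. By Theorem~\ref{theorem:copt} with $m\ge 1$, in the cost-even uniform regime $\pi^{*} = \pi_{LOC} = \EXC{\emptyset}$, and the same theorem applied to the degenerate $m=1$ instance (which coincides with traditional deterministic cost hashing) shows that the deterministic attacker's optimum is also a prefix of the natural ordering. It therefore suffices to show that the cost-even prefix is no longer than the deterministic one.

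The first substantive step is to show that in the cost-even uniform greedy, $\tau_i \in \{0, m\}$ always. Substituting $q_j = 1/m$ and $t_j^2 - t_{j-1}^2 = t_m^2/m$ into the telescoped version of Equation~\eqref{eq:con}, the total marginal $\con{\varpi_i(0,j)}{\pi}$ reduces to an expression of the form $(j/m)\alpha + (j(j-1)/2)\beta$, with $\alpha = \Pr[pw_i]v - (1-\lambda(\pi)) c_M t_m^2$ and $\beta = \Pr[pw_i] c_M t_m^2/m^2 > 0$. As a quadratic in $j$ with strictly positive leading coefficient this is convex on $\{0,1,\dots,m\}$, so its maximum is attained at an endpoint. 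Thus the cost-even uniform attacker either fully checks $pw_i$ ($\tau_i = m$) or skips it ($\tau_i = 0$); once it skips a password the loop in $\EXC{\emptyset}$ terminates (subsequent passwords have no larger $\Pr[pw_i]$ and the same $\lambda(\pi)$), confirming that $\pi^{*}$ is a prefix $\{pw_1,\dots,pw_{k^c}\}$ for some $k^c$.

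The second step is a threshold comparison. Saturating both workload constraints gives $c_M t_{\mathrm{det}}^2 = C_{max}$ and $c_M t_m^2 = 2mC_{max}/(m+1)$; plugging these in, the two ``include $pw_i$'' conditions (positive marginal in the respective greedy algorithm) become $\Pr[pw_i] \ge \theta_{\mathrm{det}}(\lambda) := C_{max}(1-\lambda)/v$ and $\Pr[pw_i] \ge \theta_{\mathrm{ce}}(\lambda) := 2mC_{max}(1-\lambda)/((m+1)v + C_{max}(m-1))$. Their ratio $\theta_{\mathrm{ce}}/\theta_{\mathrm{det}} = 2mv/((m+1)v + C_{max}(m-1))$ is $\ge 1$ exactly when $v \ge C_{max}$. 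The case $v < C_{max}$ is degenerate: both thresholds exceed $1$, so neither attacker ever checks and both sides of the inequality vanish. When $v \ge C_{max}$, $\theta_{\mathrm{ce}}(\lambda) \ge \theta_{\mathrm{det}}(\lambda)$ for all $\lambda$; running the two greedy procedures in parallel they stay in lockstep as long as both include the current password (so their $\lambda$'s agree), and whenever the deterministic rule first fails at some $pw_i$, the cost-even rule has already failed or fails simultaneously. Hence $k^c \le k^d$, and therefore $\lambda(\pi^{*}) = \sum_{i\le k^c}\Pr[pw_i] \le \sum_{i\le k^d}\Pr[pw_i] = P_{adv}^{d}$.

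The main obstacle I anticipate is the convexity step in paragraph two: it relies on both $q_j$ and $t_j^2 - t_{j-1}^2$ being constant in $j$ (the defining features of cost-even uniform), and without it cost-even could in principle use partial checks $0 < \tau_i < m$, making the coupling with a binary-choice deterministic attacker considerably more delicate. The subsequent threshold comparison is elementary algebra, but the direction of the ratio $\theta_{\mathrm{ce}}/\theta_{\mathrm{det}}$ genuinely depends on the sign of $v - C_{max}$, which is why the $v < C_{max}$ regime has to be handled separately as a trivial case.
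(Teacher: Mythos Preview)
Your argument is correct and reaches the same conclusion as the paper, but the route differs in one meaningful respect. After reducing to prefixes via Theorem~\ref{theorem:copt} (which the paper's proof also uses implicitly when it says the cost-even attacker ``sequentially eliminates'' passwords), you compare the two stopping rules by solving each marginal-utility-equals-zero condition for a \emph{threshold} $\theta(\lambda)$ on $\Pr[pw_i]$, and then compare $\theta_{\mathrm{ce}}$ against $\theta_{\mathrm{det}}$. Because the cost-even marginal utility is linear in $p=\Pr[pw_i]$ with a strictly larger slope than the deterministic one, the two lines cross precisely at $p=1-\lambda$, which forces your threshold inequality $\theta_{\mathrm{ce}}\ge\theta_{\mathrm{det}}$ to flip sign at $v=C_{max}$ and hence requires your separate treatment of $v<C_{max}$. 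The paper sidesteps this split: it compares marginal \emph{costs} directly and observes (Appendix~\ref{app:marginalcost}) that $C_{adv}^u-C_{adv}^d=\tfrac{m-1}{m+1}C_{max}\bigl((1-\lambda)-\Pr[pw_i]\bigr)\ge 0$, using only the structural fact $\Pr[pw_i]\le 1-\lambda$ (the current password's mass cannot exceed the remaining unchecked mass). This single inequality gives ``cost-even marginal utility $\le$ deterministic marginal utility'' at every relevant $(\lambda,p)$ uniformly in $v$, so the lockstep coupling goes through without cases. Your approach is perfectly valid---the $v<C_{max}$ branch is genuinely trivial, as you note---but the paper's formulation is the cleaner one-line comparison.

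A minor remark: your convexity step to obtain $\tau_i\in\{0,m\}$ is a nice alternative derivation, but it is already contained in the proof of Theorem~\ref{theorem:copt} (there phrased as monotonicity of the successive single-label increments), so once you invoke that theorem you need not re-prove it.
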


We have shown that our mechanism configured with cost-even breakpoints sampled from uniform distribution will only decrease the percentage of cracked passwords.
In the next subsections we consider how the attacker would react to general configuration of the mechanism. 

\subsection{Local Search in Two Directions}
In the previous section we introduced an algorithm $\EXC{\emptyset}$ to produce a locally optimal solution $\pi_{LOC}$  with respect to concatenation. We showed the instruction sequence $\pi_{LOC}$ is a subset of the instructions in $\pi^*$ and argued that in specific cases the algorithm is guaranteed to find the optimal solution. However, in more general cases the local optimum may not be globally optimum. One possible reason for this is that there may be a missing instruction from $\pi^*$ that we would like to insert into the middle of the checking sequence $\pi_{LOC}$, while our local search algorithm only considers instructions that can be appended to $\pi_{LOC}$.  

In this subsection we extend the local search algorithm to additionally consider insertions. Note that we can still use local search to test if $\ins{\bundle}{\pi}\geq 0 $ inserting instruction bundle $\bundle$ improves the overall utility. We define an algorithm $\EXI{\pi}$ which performs such an update. Combining $\EXC{\pi}$ and $\EXI{\pi}$,
we design an Algorithm $\EX$ to construct a checking sequence $\pi_{LO}$ (LO=Locally Optimal) which is locally optimal with respect to both operations: concatenation and insertions. Specifically, after each call of $\EXI{\pi}$ we immediately run $\EXC{\pi}$ to ensure that the solution is still locally optimal with respect to concatenation. See Algorithm \ref{algo:EX} for details. The algorithms still maintains the invariant that $\pi_{LO}$ is a subset of $\pi^*$ --- see Theorem \ref{theorem:EX}.

Given $\pi_{LOC}$ computed in time $\mathcal{O}(n_pm)$,  the number of unchecked instructions is upper bounded by $|\Pi(\lenmax , m)| - |\pi_{LOC}|$. By caching the probability summation of previous and future instructions at each insertion position, verify if an instruction bundle is profitable and update the checking sequence take time $\mathcal{O}(1)$. One pass of \texttt{repeat} loop of Algorithm \ref{algo:EX} takes time $\mathcal{O}(|\Pi(\lenmax , m)| - |\pi_{LOC}|) \subseteq \mathcal{O}(n_pm)$, the execution time of \texttt{repeat} loop execution is finite (in experiment it terminates after at most 3 passes). Therefore, $\EX$ runs in time $\mathcal{O}(n_pm)$.

\vspace{-0.5cm}
\begin{algorithm}[h]\small
	\KwIn{$v$, $\vec{q}$, $\pi$}
	\KwOut{$\pi_{LOI}$}
	\BlankLine
	$\pi_{LOI} = \pi$\;
	\While{$e$ exists such that $\ins{e}{\pi_{LOI}}\geq 0$}
	{
		$\pi_{LOI} = \pi_{LOI}+ e$
	}
	\KwRet{$\pi_{LOI}$}
	\caption{$\EXI{\pi}$}
	\label{algo:instert}
\end{algorithm}

% \begin{algorithm}[htb]\small
% % \caption{$\EXI{\pi}$}
% % \label{algo:instert}
% \begin{algorithmic}[1]
% \REQUIRE{$v$, $\vec{q}$, $\pi$}
% \ENSURE{$\pi_{LOI}$}
% \STATE $\pi_{LOI} = \pi$
% \WHILE{$e$ exists such that $\ins{e}{\pi_{LOI}}\geq 0$}
% \STATE $\pi_{LOI} = \pi_{LOI}+ e$
% \ENDWHILE
% \RETURN $\pi_{LOI}$
% \end{algorithmic}
% \end{algorithm}
\vspace{-1cm}

\begin{algorithm}[h]\small
	\KwIn{$v$, $\vec{q}$}
	\KwOut{$\pi_{LO}$}
	\BlankLine
	$\pi_{LO} = \EXC{\emptyset}$\;
	\Repeat{no single profitable instruction bundle exist}
	{
		$\pi_{LO} = \EXI{\pi_{LO}}$\;
		$\pi_{LO} = \EXC{\pi_{LO}}$\;
	}
	\KwRet{$\pi_{LO}$}
	\caption{$\EX$}
	\label{algo:EX}
\end{algorithm}
\vspace{-1cm}

% \begin{algorithm}[htb]\small
% % \caption{$\EX$}
% % \label{algo:EX}
% \begin{algorithmic}[1]
% \REQUIRE{$v$, $\vec{q}$}
% \ENSURE{$\pi_{LO}$}
% \STATE $\pi_{LO} = \EXC{\emptyset}$
% \REPEAT
% \STATE $\pi_{LO} = \EXI{\pi_{LO}}$
% \STATE $\pi_{LO} = \EXC{\pi_{LO}}$
% \UNTIL{no single profitable instruction bundle exist}
% \RETURN $\pi_{LO}$
% \end{algorithmic}
% \end{algorithm}

\begin{restatable}{lemma}{resplus}
	\label{lemma:plus}
If $\pi\subseteq \pi^*$ and $\ins{e}{\pi}\geq 0$ then $\pi + e \subseteq \pi^* $.
\end{restatable}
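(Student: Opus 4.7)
The plan is to prove Lemma \ref{lemma:plus} by contradiction, leveraging a monotonicity property of the insertion marginal utility that mirrors Lemma \ref{lemma:marginal}.

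First I would establish an insertion monotonicity claim: if $\pi_1 \subseteq \pi_2$ and $e$ is disjoint from $\pi_2$, then $\ins{e}{\pi_1} \leq \ins{e}{\pi_2}$. For a singleton $e = (pw_i, t_j)$ this reads off Equation \eqref{eq:ins} term by term, because enlarging the base only enlarges the future-cost-savings term $\sum_{e' > e,\, e' \in \pi} c(e')$ and only raises the accumulated prior-success probability $\sum_{e' < e,\, e' \in \pi} \Pr[e']$ (which shrinks the cost coefficient $1 - \sum_{e' < e,\, e' \in \pi} \Pr[e']$). For non-singleton $e$ the claim extends inductively through the identity $\ins{e}{\pi} = \sum_i \ins{e_i}{\pi \cup \{e_1, \ldots, e_{i-1}\}}$.

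Second, I would run the exchange argument. Assume for contradiction that $\pi + e \not\subseteq \pi^*$ and set $\tilde e \defeq e \setminus \pi^*$, which is non-empty. Let $\pi^{\circ} \defeq \pi^* \setminus e$; since $\pi \cap e = \emptyset$ (otherwise $\pi + e$ is undefined) and $\pi \subseteq \pi^*$, we have $\pi \subseteq \pi^{\circ}$, and the two sequences $\pi^{\circ} + e$ and $\pi^* + \tilde e$ coincide when both are assembled in natural order. Applying the monotonicity claim with bases $\pi \subseteq \pi^{\circ}$ gives $\ins{e}{\pi^{\circ}} \geq \ins{e}{\pi} \geq 0$, i.e., $U(\pi^* + \tilde e) \geq U(\pi^{\circ})$. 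A follow-up step---either a direct manipulation of Equation \eqref{eq:ins} on the overlap $e \cap \pi^*$, or a second appeal to the monotonicity lemma absorbing the re-insertion of $e \cap \pi^*$ into $\pi^{\circ}$---upgrades this to $U(\pi^* + \tilde e) \geq U(\pi^*)$. By optimality of $\pi^*$ we also have $U(\pi^* + \tilde e) \leq U(\pi^*)$, hence equality, so $\pi^* + \tilde e$ is itself optimal. Under a tie-breaking convention that picks $\pi^*$ maximal among optima with respect to set containment, this forces $\tilde e \subseteq \pi^*$, i.e., $\tilde e = \emptyset$, which is the desired contradiction.

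The main obstacle is the upgrade in the second paragraph from $U(\pi^* + \tilde e) \geq U(\pi^{\circ})$ to $U(\pi^* + \tilde e) \geq U(\pi^*)$. When $e \cap \pi^*$ is non-empty, these overlapping instructions separate $\pi^{\circ}$ from $\pi^*$ and must be reintroduced without destroying the margin gained in the monotonicity step. I expect to address this by expressing $\ins{\tilde e}{\pi^*}$ using the iterated per-instruction formula and comparing its probability and cost coefficients term by term to those appearing in $\ins{e}{\pi}$, using the inclusion $\pi \subseteq \pi^*$ to bound each coefficient favorably. The legit and opt restrictions guarantee that $\pi^* + \tilde e$ remains a valid checking sequence throughout, so no compatibility issues arise en route.
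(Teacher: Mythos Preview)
The paper never supplies a proof of Lemma~\ref{lemma:plus}; it is stated and then immediately invoked in the proof of Theorem~\ref{theorem:EX}. Your overall strategy (monotonicity of $\Delta^+$ in the base, then an exchange argument against $\pi^*$) is exactly the insertion analogue of the paper's proof of Theorem~\ref{theorem:concat}, so you are in line with the paper's style. The monotonicity claim you isolate is correct and follows term-by-term from Equation~\eqref{eq:ins}.

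However, the obstacle you flag in the overlap case $e\cap\pi^*\neq\emptyset$ is not a mere technicality: your proposed term-by-term comparison only yields
\[
\ins{\tilde e}{\pi^*}\;\ge\;\ins{e}{\pi}-\ins{e\cap\pi^*}{\pi},
\]
and the subtracted term can be strictly positive, so this does not force $\ins{\tilde e}{\pi^*}\ge 0$. In fact the lemma \emph{as stated} is false. Take a single password with $\Pr[pw_1]=0.5$, breakpoint distribution $(q_1,q_2,q_3)=(0.1,0.8,0.1)$, round costs $(c_1,c_2,c_3)=(0.1,3,2)$ (realisable with $c_M=1$ and $t_j^2=0.1,3.1,5.1$), and $v=10$. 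Then
\[
U(\varpi_1(1,1))=0.4,\qquad U(\varpi_1(1,2))=1.55,\qquad U(\varpi_1(1,3))=0.95,
\]
so $\pi^*=\varpi_1(1,2)$. With $\pi=\varpi_1(1,1)\subseteq\pi^*$ and $e=\varpi_1(2,3)$ we get $\ins{e}{\pi}=0.55>0$, yet $\pi+e=\varpi_1(1,3)\not\subseteq\pi^*$. So no argument can close the gap you identified without adding hypotheses.

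The reason this does not wreck Theorem~\ref{theorem:EX} in practice is that the bases $\pi$ arising in Algorithm~\ref{algo:EX} are never arbitrary subsets of $\pi^*$: they are produced by $\mathsf{ExtendbyConcat}$ and hence are already locally optimal with respect to concatenation (in the counterexample above, $\varpi_1(1,1)$ is not such a base---$\mathsf{ExtendbyConcat}$ would have output $\varpi_1(1,2)$). A correct version of the lemma therefore needs an extra hypothesis on $\pi$ (e.g.\ local optimality with respect to concatenation, or that $e$ is chosen as the bundle of \emph{maximal} marginal utility rather than any bundle with nonnegative margin). Your write-up would be strengthened by stating explicitly which additional hypothesis you are assuming and then re-running the monotonicity/exchange argument under it; in the disjoint case $e\cap\pi^*=\emptyset$ your argument already goes through cleanly.
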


Lemma \ref{lemma:plus} guarantees that $+$ operation preserves the invariance that our construction is subset of $\pi^*$. Naturally follows Theorem \ref{theorem:EX}, which states the output of $\EX$ is a subset of $\pi^*$.

\begin{restatable}{theorem}{restateex}
	\label{theorem:EX}
Let $\pi_{LO} = \EX$, then
\small$\pi_{LO} \subseteq \pi^*.$\normalsize
\end{restatable}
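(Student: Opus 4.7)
The plan is to prove $\pi_{LO} \subseteq \pi^*$ by maintaining the subset invariant throughout the execution of $\EX$. The algorithm has a clear structure: it initializes $\pi_{LO} = \EXC{\emptyset}$ and then alternates between calls to $\EXI{\pi_{LO}}$ and $\EXC{\pi_{LO}}$ in the \texttt{repeat} loop. I would therefore prove the invariant by induction on the number of update steps (counting each individual bundle insertion or concatenation as one step).

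For the base case, I would directly invoke Theorem~\ref{theorem:concat}, which already establishes that $\EXC{\emptyset} \subseteq \pi^*$. For the inductive step, there are two kinds of updates to analyze. The insertion updates performed inside $\EXI{\pi}$ are handled immediately by Lemma~\ref{lemma:plus}: if $\pi \subseteq \pi^*$ and a bundle $e$ satisfies $\ins{e}{\pi} \geq 0$, then $\pi + e \subseteq \pi^*$. So each iteration of the \texttt{while} loop in Algorithm~\ref{algo:instert} preserves the invariant. For the concatenation updates performed inside subsequent calls to $\EXC{\pi_{LO}}$, I would state and prove a concatenation analog of Lemma~\ref{lemma:plus}: if $\pi \subseteq \pi^*$ and a bundle $e = \varpi_i(0,\tau_i)$ is compatible with $\pi$ (i.e., respects natural ordering) and satisfies $\con{e}{\pi} \geq 0$, then $\pi \circ e \subseteq \pi^*$. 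The proof strategy mirrors that of Lemma~\ref{lemma:plus}: argue by contradiction that if some instruction in $e$ were absent from $\pi^*$, then removing the corresponding suffix of $e$ from $\pi^* \cup (\pi \circ e)$ would either contradict the optimality of $\pi^*$ or contradict $\con{e}{\pi} \geq 0$ via Lemma~\ref{lemma:marginal} (since $\lambda(\pi) \leq \lambda(\pi^*)$ implies that the marginal concatenation utility of $e$ with respect to $\pi^*$ is at least that with respect to $\pi$).

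Chaining these two observations, every update performed by $\EX$ preserves the invariant $\pi_{LO} \subseteq \pi^*$. Since the \texttt{repeat} loop terminates (as noted in the paper, after at most three passes in practice, and in any case in finite time because $\pi_{LO}$ is bounded above by $\Pi(\lenmax, m)$ and each successful update strictly increases $|\pi_{LO}|$), the final output $\pi_{LO}$ still satisfies $\pi_{LO} \subseteq \pi^*$.

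The main obstacle is establishing the concatenation analog of Lemma~\ref{lemma:plus} cleanly, because concatenation places $e$ strictly after every element of $\pi$, whereas $\pi^*$ may interleave the same instructions with other bundles. The key leverage is Lemma~\ref{lemma:marginal}: since $\pi \subseteq \pi^*$ gives $\lambda(\pi) \leq \lambda(\pi^*)$, the marginal benefit of appending $e$ to $\pi^*$ (after removing any duplicated elements of $e$ already present) is at least as large as appending it to $\pi$, which is non-negative by assumption; hence $\pi^*$ must already contain $e$, else one could profitably extend $\pi^*$, contradicting its optimality. Once this concatenation lemma is in hand, the inductive argument is essentially bookkeeping.
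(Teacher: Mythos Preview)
Your proposal is correct and follows essentially the same approach as the paper: maintain the subset invariant under both insertion (via Lemma~\ref{lemma:plus}) and concatenation, then conclude by induction over the update steps of $\EX$. The only minor difference is that you propose to prove the concatenation analog of Lemma~\ref{lemma:plus} from scratch using Lemma~\ref{lemma:marginal}, whereas the paper simply observes that this invariant was already established inside the proof of Theorem~\ref{theorem:concat}; your sketch of that argument is exactly the one used there.
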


Since we are using local search to construct $\pi_{LO}$, together with Theorem \ref{theorem:EX} we know $\pi_{LO}$ is a local optimum. When Algorithm \ref{algo:EX} terminates,   advancing any number of labels for any single password cannot improve the overall utility, but there is no guarantee of utility reduction upon inclusion of multiple instruction bundles that associated with different passwords. In the next subsection we will discuss how to verify if the local optimum $\pi_{LO}$ is indeed the global optimum and design an efficient brute force algorithm that improves local optimum to global optimum under specific parameter settings.

\subsection{Optimality Test and Globally Optimal Checking Sequence}

In the previous subsections, we designed a polynomial algorithm $\EX$ to construct locally optimal checking sequence $\pi_{LO}$ with respect to insertions and concatenation. We also proved that the sequence $\pi_{LO}$ is a subset of the optimal sequence $\pi^*$. In practice we find that it is often the case that $\pi_{LO}=\pi^*$ and we give an efficient heuristic algorithm which (often) allows us to confirm the global optimality of $\pi_{LO}$. In particular, our procedure will never falsely indicate that $\pi_{LO}=\pi^*$ though it may occasionally fail to confirm that this is the case. When our optimality test fails, we design algorithms to promote locally optimal solution to globally optimal solution for cost-even breakpoints and $m\leq 3$, see full version of this paper for details.

\section{Defender's Optimal Strategy}\label{sec:defender}
When making decisions about breakpoint distribution, the defender will  take attacker's best response into consideration. Specifically, the defender would choose $\vec{q}^* = \arg \min \lambda(\pi^*)$ where $\pi^*=\arg\max U_{adv}(v, \vec{q},\pi))$. Formally, the optimization problem ($\mathsf{OPT}$) is
\small
\begin{mini}|s|
{\vec{q}}{\lambda(\pi^*) }
{}{}
\addConstraint{0 \leq q_i \leq 1, \; \forall 1 \leq i\leq m}
\addConstraint{\sum_{i=1}^mq_i = 1}
\addConstraint{\sum_{i=1}^mq_i c_Mt_i^2 \leq C_{max}}
\end{mini}
\normalsize
The optimization goal is to minimize attacker's success rate, the first two constrains guarantee $q_i$ are valid probabilities, the third constraint guarantees that the expected cost does not exceed maximum workload $C_{max}$.

We use black box optimization solvers to address $\mathsf{OPT}$ since there is no closed form expression of $\lambda(\pi^*)$. Notice that there are only $m-1$ independent variables, we select $q_2,\ldots,q_m$ to be optimization variables to further prune down the search space. Define $\alpha = \frac{C_{max}}{c_mt_1^2}$, then the workload constraint becomes $\sum_{i=1}^m \beta_i^2q_i\leq \alpha$, together with probability constraint, we have $q_i\leq \min\left\{1, \frac{\alpha-1}{\beta_i^2 -1}\right\}$. For fixed $v$, define $f$ to be the function mapping $\vec{q}$ to $P_{adv}$. Therefore $\mathsf{OPT}$ is reduced to $\mathsf{OPT}^{\prime}$:
\small
\begin{mini}|s|
{q_2,\ldots,q_m}{ f + \text{penalty}_1(q_2,\ldots,q_m) + \text{penalty}_2(q_2,\ldots,q_m)}
{}{}
\end{mini}
\normalsize
where 
\begin{equation*}\small
\text{penalty}_1(q_2,\ldots,q_m) =
\begin{cases}  0, \text{ if } \sum_{i=2}^m (\beta_i^2-1)q_i \leq \alpha - 1, \\ 
 cons_1 + cons_2 * \sum_{i=2}^m (\beta_i^2-1)q_i, \text{ ow},\\
\end{cases}
\end{equation*}
and 
\begin{equation*}\small
\text{penalty}_2(q_2,\ldots,q_m) =
\begin{cases}  0, \text{ if } \sum_{i=2}^m q_i \leq 1, \\ 
 cons_3 + cons_4 * \sum_{i=2}^m q_i, \text{ ow}.\\
\end{cases}
\end{equation*}
$cons_i$ are constants larger than 1. 
We refer to the black box solver as $\OPTDIS$, detailed discussion about $OPTDIS$ can be found in the full version of this paper. 

\section{Experiments}
\subsection{Experiment Setup}
In this section, we design experiments to validate our mechanism. In particular, we consider 8 datasets (Bfield, Brazzers, Clixsense, CSDN, Linkedin, Neopets, Rockyou, Webhost) with various size and they are given in format of equivalence sets $D_{es} = \{(f_i,s_i)\}$. We only report results of top 2 largest datasets (in terms of $n_p$): Linkedin and Neopets due to length limit and move results of other 6 datases to full version of this paper. For each dataset we plot attacker's success rate $\lambda(\pi^*)$ (there are a few exceptions where $\OT{\pi}$ fails thus $\pi^*$ is not available) against $v/C_{max}$, where $\pi^*$ is computed by algorithms specified in the flow chart. Since the attacker is playing the same Stackelberg game against all accounts, success rate of cracking one account can also be interpreted as percentage of cracked passwords in that dataset.

In Figure \ref{fig:timeeven} we consider time-even breakpoints with uniform distribution. We are essentially plotting $\lambda(\pi_{LO})$ against $C_{max}$ and have $\pi_{LO} = \pi^*$ for most of the time; if $\pi^*$ fails $\OT{\pi_{LO}}$, the corresponding point is marked with $\times$. Figure \ref{fig:costeven} shows the results under cost-even breakpoints and uniform distribution. In Figure \ref{fig:costopt}, we fix $m=3$ and breakpoints being cost-even, and run our algorithm $\OPTDIS$ whose optimization solver is implemented with BITEOPT~\cite{biteopt2021}, to find the optimal breakpoint distribution.

In experiments we use empirical distribution $\mathcal{D}_e$ (namely, $\Pr_{pw\sim \mathcal{D}_e}[pw] = f_i/n_a$ where $f_i$ is the frequency of $pw$) to approxmiate password distribution $\mathcal{P}$. The drawback is that the tail of empirical distribution $\mathcal{D}_e$ would significantly diverge from real distribution $\mathcal{P}$. We follow the approach of \cite{DAHash} and use Good-Turing Frequency estimation to upbound the CDF divergence $E$ between $\mathcal{D}_e$ and $\mathcal{P}$. In particular, we use yellow (resp. red) to denote the unconfident region where $E>0.01$ (resp. $E>0.1$). We only report results within the confident region.

% !TEX root = main.tex
\tikzDefaultsTime
\begin{figure*}[ht]\centering

\subfloat[Linkedin]{
\begin{tikzpicture}[scale=0.7, spy using outlines=
	{circle, magnification=15, connect spies}]
\begin{semilogxaxis}
% benchmark
\addplot file {./newfig/sr/benchmark/linkedin_benchmark.dat};

\addplot file {./newfig/sr/timeeven_m3/linkedin_timeeven_m3.dat};

\addplot+[only marks] coordinates
{
(4e+07,0.602737)
(5e+07,0.671147)
(6e+07,0.814533)
};

\path[name path=A] (axis cs:(3* 1e7,0) -- (axis cs:(3* 1e7,1);
\path[name path=B] (axis cs:(9 * 1e7,0) -- (axis cs:(9 * 1e7,1);
\tikzfillbetween[of=A and B, on layer=main]{red, opacity=0.2};

\path[name path=A1] (axis cs:(2* 1e7,0) -- (axis cs:(2* 1e7,1);
\tikzfillbetween[of=A1 and B, on layer=main]{yellow, opacity=0.2};

\node at (axis cs:5*1e7,0.5) {\rotatebox{90}{\tiny uncertain region}};

\coordinate (spypoint1) at (axis cs: 25, -0.1);
\coordinate (magnifyglass1) at (axis cs: 300, 0.3);

\end{semilogxaxis}

\spy [blue, size=1 cm] on (spypoint1) in node[fill=white] at (magnifyglass1);

\end{tikzpicture}
\label{fig:linkedintime}
}
\hfill
\subfloat[Neopets]{
\begin{tikzpicture}[scale=0.7, spy using outlines=
	{circle, magnification=15, connect spies}]
\begin{semilogxaxis}
%deterministic
\addplot file {./newfig/sr/benchmark/neopets_benchmark.dat};

\addplot file {./newfig/sr/timeeven_m3/neopets_timeeven_m3.dat};

\addplot+[only marks] coordinates
{
(5e+06,0.390334)
(2e+07,0.711407)
(3e+07,0.759299)
};

\path[name path=A] (axis cs:(2* 1e7,0) -- (axis cs:(2* 1e7,1);
\path[name path=B] (axis cs:(9 * 1e7,0) -- (axis cs:(9 * 1e7,1);
\tikzfillbetween[of=A and B, on layer=main]{red, opacity=0.2};

\path[name path=A1] (axis cs:(5* 1e6,0) -- (axis cs:(5* 1e6,1);
\tikzfillbetween[of=A1 and B, on layer=main]{yellow, opacity=0.2};

\node at (axis cs:5*1e7,0.5) {\rotatebox{90}{\tiny uncertain region}};

\coordinate (spypoint) at (axis cs: 390, -0.095);
\coordinate (magnifyglass) at (axis cs: 300, 0.3);
\end{semilogxaxis}
\spy [blue, size=1 cm] on (spypoint)
   in node[fill=white] at (magnifyglass);
\end{tikzpicture}
\label{fig:neopetstime}
}
% \hfill
% \subfloat[Yahoo]{
% \begin{tikzpicture}[scale=0.7, spy using outlines=
% 	{circle, magnification=10, connect spies}]
% \begin{semilogxaxis}
% % benchmark
% \addplot file {./newfig/sr/benchmark/yahoo_benchmark.dat};

% \addplot file {./newfig/sr/timeeven_m3/yahoo_timeeven_m3.dat};

% \addplot+[only marks] coordinates
% {
% (3e+07,0.694921)
% (4e+07,0.694921)
% };

% \path[name path=A] (axis cs:(2* 1e7,0) -- (axis cs:(2* 1e7,1);
% \path[name path=B] (axis cs:(9 * 1e7,0) -- (axis cs:(9 * 1e7,1);
% \tikzfillbetween[of=A and B, on layer=main]{red, opacity=0.2};

% \path[name path=A1] (axis cs:(7* 1e6,0) -- (axis cs:(7* 1e6,1);
% \tikzfillbetween[of=A1 and B, on layer=main]{yellow, opacity=0.2};

% \node at (axis cs:5*1e7,0.5) {\rotatebox{90}{\tiny uncertain region}};

% \coordinate (spypoint) at (axis cs: 54, -0.123);
% \coordinate (magnifyglass) at (axis cs: 300, 0.2);
% \end{semilogxaxis}
% \spy [blue, size=1 cm] on (spypoint)
%    in node[fill=white] at (magnifyglass);
% \end{tikzpicture}
% \label{fig:Yahootime}
% }
\caption{Time-Even Breakpoints, Uniform Breakpoint Distribution} 
\label{fig:timeeven}
\vspace{-0.5cm}
\end{figure*}
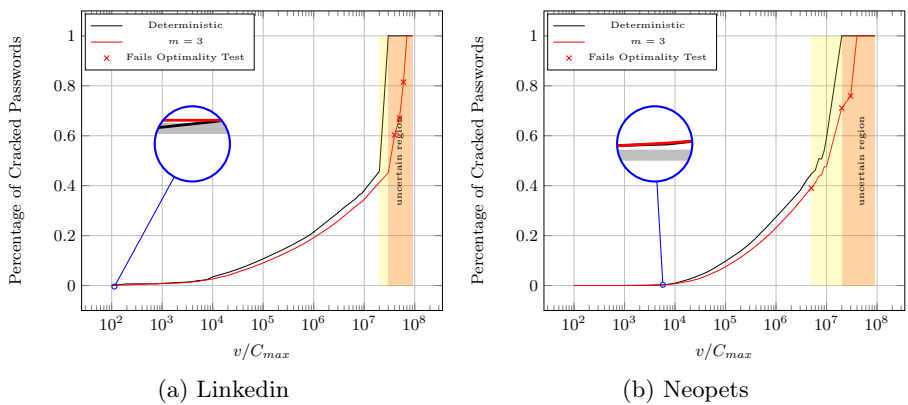
% !TEX root = main.tex
\tikzDefaultsCost
\begin{figure*}[ht]
\centering

\subfloat[Linkedin]{
\begin{tikzpicture}[scale=0.7]
\begin{semilogxaxis}
% benchmark
\addplot+[stack plots=y] file {./newfig/sr/benchmark/linkedin_benchmark.dat};
%costeven_m2
\addplot file {./newfig/sr/costeven_m2/linkedin_costeven_m2.dat};

\addplot file {./newfig/sr/costeven_m3/linkedin_costeven_m3.dat};

\addplot file {./newfig/sr/costeven_m7/linkedin_costeven_m7.dat};

\addplot file {./newfig/sr/costeven_m99/linkedin_costeven_m99.dat};

\addplot+[stack plots=y, stack dir=minus] file {./newfig/sr/costeven_m99/linkedin_costeven_m99.dat};

\path[name path=A] (axis cs:(3* 1e7,0) -- (axis cs:(3* 1e7,1);
\path[name path=B] (axis cs:(9 * 1e7,0) -- (axis cs:(9 * 1e7,1);
\tikzfillbetween[of=A and B, on layer=main]{red, opacity=0.2};

\path[name path=A1] (axis cs:(2* 1e7,0) -- (axis cs:(2* 1e7,1);
\tikzfillbetween[of=A1 and B, on layer=main]{yellow, opacity=0.2};

\node at (axis cs:5*1e7,0.5) {\rotatebox{90}{\tiny uncertain region}};

\end{semilogxaxis}
\end{tikzpicture}
\label{fig:linkedincost}
}
\hfill
\subfloat[Neopets]{
\begin{tikzpicture}[scale=0.7]
\begin{semilogxaxis}
%deterministic
\addplot+[stack plots=y] file {./newfig/sr/benchmark/neopets_benchmark.dat};
%costeven_m2
\addplot file {./newfig/sr/costeven_m2/neopets_costeven_m2.dat};

\addplot file {./newfig/sr/costeven_m3/neopets_costeven_m3.dat};

\addplot file {./newfig/sr/costeven_m7/neopets_costeven_m7.dat};

\addplot file {./newfig/sr/costeven_m99/neopets_costeven_m99.dat};

\addplot+[stack plots=y, stack dir=minus] file {./newfig/sr/costeven_m99/neopets_costeven_m99.dat};

\path[name path=A] (axis cs:(2* 1e7,0) -- (axis cs:(2* 1e7,1);
\path[name path=B] (axis cs:(9 * 1e7,0) -- (axis cs:(9 * 1e7,1);
\tikzfillbetween[of=A and B, on layer=main]{red, opacity=0.2};

\path[name path=A1] (axis cs:(5* 1e6,0) -- (axis cs:(5* 1e6,1);
\tikzfillbetween[of=A1 and B, on layer=main]{yellow, opacity=0.2};

\node at (axis cs:5*1e7,0.5) {\rotatebox{90}{\tiny uncertain region}};

\end{semilogxaxis}
\end{tikzpicture}
\label{fig:neopetscost}
}
% \hfill
% \subfloat[Yahoo]{
% \begin{tikzpicture}[scale=0.7]
% \begin{semilogxaxis}
% % benchmark
% \addplot file {./newfig/sr/benchmark/yahoo_benchmark.dat};

% %costeven_m2
% \addplot+[stack plots=y] file {./newfig/sr/costeven_m2/yahoo_costeven_m2.dat};

% \addplot file {./newfig/sr/costeven_m3/yahoo_costeven_m3.dat};

% \addplot file {./newfig/sr/costeven_m7/yahoo_costeven_m7.dat};

% \addplot file {./newfig/sr/costeven_m99/yahoo_costeven_m99.dat};

% \addplot+[stack plots=y, stack dir=minus] file {./newfig/sr/costeven_m99/yahoo_costeven_m99.dat};

% \path[name path=A] (axis cs:(2* 1e7,0) -- (axis cs:(2* 1e7,1);
% \path[name path=B] (axis cs:(9 * 1e7,0) -- (axis cs:(9 * 1e7,1);
% \tikzfillbetween[of=A and B, on layer=main]{red, opacity=0.2};

% \path[name path=A1] (axis cs:(7* 1e6,0) -- (axis cs:(7* 1e6,1);
% \tikzfillbetween[of=A1 and B, on layer=main]{yellow, opacity=0.2};

% \node at (axis cs:5*1e7,0.5) {\rotatebox{90}{\tiny uncertain region}};

% \end{semilogxaxis}
% \end{tikzpicture}
% \label{fig:yahoocost}
% }
\caption{Cost-Even Breakpoints, Uniform Breakpoint Distribution} 
\label{fig:costeven}
\vspace{-0.5cm}
\end{figure*}
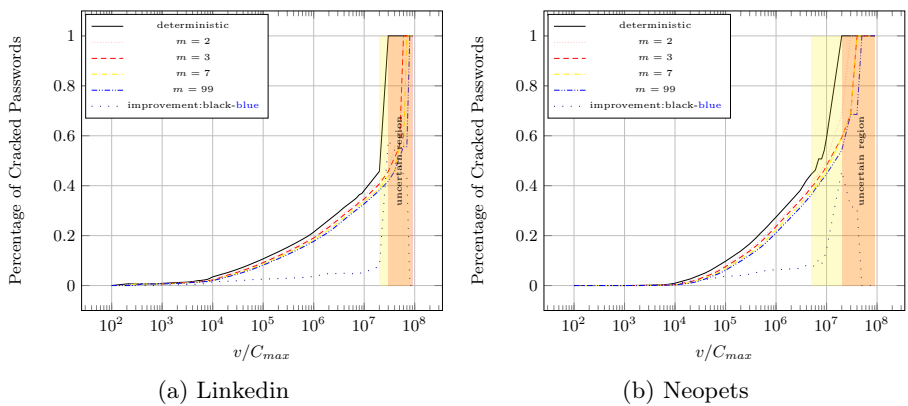
% !TEX root = main.tex
\tikzDefaultsBite
\begin{figure*}[ht]
\centering
\subfloat[Linkedin]{
\begin{tikzpicture}[scale=0.7]
\begin{semilogxaxis}
% benchmark
\addplot file {./newfig/sr/benchmark/linkedin_benchmark.dat};

\addplot file {./newfig/sr/costeven_m3/linkedin_costeven_m3.dat};

\addplot file {./newfig/sr/biteopt_m3/linkedin_biteopt_m3.dat};

% \addplot file {./newfig/sr/costeven_m7/linkedin_costeven_m7.dat};

\path[name path=A] (axis cs:(3* 1e7,0) -- (axis cs:(3* 1e7,1);
\path[name path=B] (axis cs:(9 * 1e7,0) -- (axis cs:(9 * 1e7,1);
\tikzfillbetween[of=A and B, on layer=main]{red, opacity=0.2};

\path[name path=A1] (axis cs:(2* 1e7,0) -- (axis cs:(2* 1e7,1);
\tikzfillbetween[of=A1 and B, on layer=main]{yellow, opacity=0.2};

\node at (axis cs:5*1e7,0.5) {\rotatebox{90}{\tiny uncertain region}};

\end{semilogxaxis}
\end{tikzpicture}
\label{fig:linkedinopt}
}
\hfill
\subfloat[Neopets]{
\begin{tikzpicture}[scale=0.7]
\begin{semilogxaxis}
%deterministic
\addplot file {./newfig/sr/benchmark/neopets_benchmark.dat};

\addplot file {./newfig/sr/costeven_m3/neopets_costeven_m3.dat};

\addplot file {./newfig/sr/biteopt_m3/neopets_biteopt_m3.dat};
%costeven_m7
% \addplot file {./newfig/sr/costeven_m7/neopets_costeven_m7.dat};

% \path[name path=A] (axis cs:(2* 1e7,0) -- (axis cs:(2* 1e7,1);
\path[name path=B] (axis cs:(9 * 1e7,0) -- (axis cs:(9 * 1e7,1);
\tikzfillbetween[of=A and B, on layer=main]{red, opacity=0.2};

\path[name path=A1] (axis cs:(5* 1e6,0) -- (axis cs:(5* 1e6,1);
\tikzfillbetween[of=A1 and B, on layer=main]{yellow, opacity=0.2};

\node at (axis cs:5*1e7,0.5) {\rotatebox{90}{\tiny uncertain region}};

\end{semilogxaxis}
\end{tikzpicture}
\label{fig:neopetsopt}
}
% \hfill
% \subfloat[Yahoo]{
% \begin{tikzpicture}[scale=0.7]
% \begin{semilogxaxis}
% % benchmark
% \addplot file {./newfig/sr/benchmark/yahoo_benchmark.dat};

% \addplot file {./newfig/sr/costeven_m3/yahoo_costeven_m3.dat};

% \addplot file {./newfig/sr/biteopt_m3/yahoo_biteopt_m3.dat};
% %costeven_m7
% % \addplot file {./newfig/sr/costeven_m7/yahoo_costeven_m7.dat};

% \path[name path=A] (axis cs:(2* 1e7,0) -- (axis cs:(2* 1e7,1);
% \path[name path=B] (axis cs:(9 * 1e7,0) -- (axis cs:(9 * 1e7,1);
% \tikzfillbetween[of=A and B, on layer=main]{red, opacity=0.2};

% \path[name path=A1] (axis cs:(7* 1e6,0) -- (axis cs:(7* 1e6,1);
% \tikzfillbetween[of=A1 and B, on layer=main]{yellow, opacity=0.2};

% \node at (axis cs:5*1e7,0.5) {\rotatebox{90}{\tiny uncertain region}};

% \node at (axis cs:5*1e7,0.5) {\rotatebox{90}{\tiny uncertain region}};

% \end{semilogxaxis}
% \end{tikzpicture}
% \label{fig:yahooopt}
% }
\caption{Cost-Even Breakpoints, Optimized Breakpoint Distribution} 
\label{fig:costopt}
\vspace{-0.5cm}
\end{figure*}
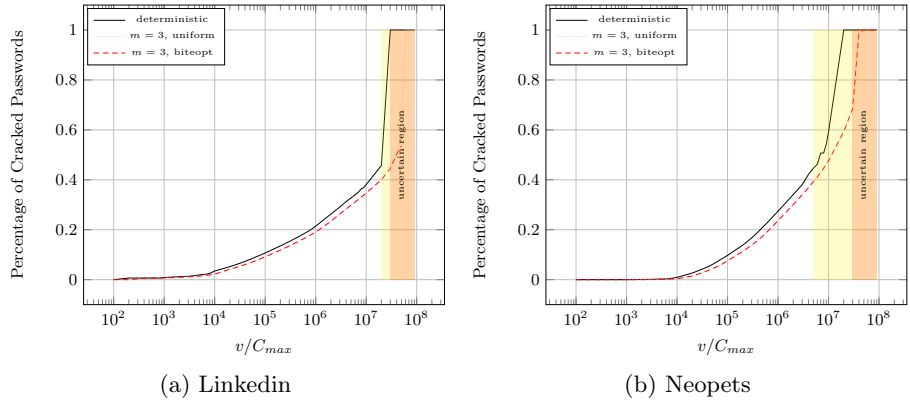

\subsection{Experiment Analysis}

\paragraph{Time-Even Breakpoints and Uniform Distribution.}
First we would like to demonstrate time-even breakpoints are not suitable. In Figure \ref{fig:timeeven}, we show attacker's success rate against $v/C_{max}$ when the attacker plays locally optimal strategy $\pi_{LO}$ and breakpoint distribution is uniform, namely $\beta_i=i$ and $q_i = \frac{1}{m}, \forall i\leq m$. For most $v/C_{max}$ ratios we have $\pi_{LO}=\pi^*$, only a few points fails $\OT{\pi_{LO}}$ and they are in unconfident region. Since the attacker is playing the same game against every account, success rate can be interpreted as the percentage of cracked accounts.

From Figure \ref{fig:timeeven}, we observe that time-even breakpoints can reduce attacker's success rate for most of the time. Unfortunately, this is not always the case. Take Linkedin as example, when $v/C_{max} = 100$, no account is cracked in deterministic cost hashing. However, there are 0.2\% accounts that would be cracked if time-even breakpoints were used. Similar phenomenon can be observed in other datasets, which is amplified by circles in Figure \ref{fig:timeeven}. When we randomize the label index in password hashing with MHF, we have $\frac{\Pr[pw_u]\Pr[t_1]} {\Pr[pw_u]} = 1/m$ while $\frac{c_Mt_1^2}{C_{max}} = \frac{1}{(m+1)(2m+1)} \in \mathcal{O}(\frac{1}{m^2})$. In other words, the probability that the attacker successfully guesses the first label drops lineally while the cost of making that guess drops quadratically. This is the reason that time-even breakpoints might have negative effect at the beginning of these plots. 

Here we also give a contrived example to show that time-even breakpoints could be harmful. Suppose a dataset has 2 passwords, each occurs with probability $\frac{1}{2}$, and password value $v = 1.5- \epsilon$, hash cost $c = 1$. In this case, the attacker will not try to check any password (if only one guess is made the attacker will obtain expected gain $v/2$ with cost 1, resulting in negative expected utility; if the attacker guesses at most twice, gain is $v$ and cost is 1.5, the expected utility is still negative). On the other hand, if breakpoints are time-even, we have $c_1 = 0.4$ (hash cost of evaluating MHF to the first label), $c_2 = 1.6$ (hash cost of evaluating MHF to the second label) with expected cost being $c$; checking the first label of both passwords gives the attacker utility $0.05-\epsilon/2$, as long as $\epsilon < 0.1$ the attacker will crack the password with probability at least $\frac{1}{4}$. 

\paragraph{Cost-Even Breakpoints and Uniform Distribution.}
We have proved that cost-even breakpoints with uniform distribution always yields lower (or equal if no passwords are cracked) attacker's success rate in Theorem \ref{theorem:costeven}. Experiment results (see Figure \ref{fig:costopt}) also verified our therotical conclusion. 
 
Another observation is that increasing $m$ can further reduce the percentage of cracked passwords. In confident region there is up to 10\% of passwords that would have been cracked can be now saved when $m=99$. However, the marginal benefit is diminishing, that is to say, the reduction in $\lambda(\pi^*)$ from $m=i$ to $m=i+1$ is decreasing as $i$ grows larger. From Figure \ref{fig:costopt}, we see that success rate reduction by increasing $m$ from 3 to 7 is roughly the same as that by increasing $m$ from 7 to 99. In full version of this paper, we prove that the marginal cost of checking $\varpi_{i}(1,m)$ given $\oplus_{i^{\prime}=1}^{i-1}\varpi(1, m)$ already been checked is roughly proportional to $\frac{m}{m+1}$, thus the advantage of increasing $m$ is significant when $m$ is small, but there is barely any improvement by increasing $m$ when $m$ is already large. Since $\frac{7}{8} - \frac{3}{4} \approx \frac{99}{100} - \frac{7}{8}$, performance improvement of increasing $m$ from 3 to 7 is roughly the same as that of increasing $m$ from 7 to 99.

% \paragraph{Comparision of Cost-Even and Time-Even Breakpoints.}
% In Figure \ref{fig:compare} of appendix, we present the plots comoparing cost-even and time-even breakpoints under uniform distribution when $m=3$. The performance of two breakpoints settings within the confidenet zone is very close when $v/C_{max} > 10^4$, but time-even breakpoints would benefit the attacker when $v/C_{max}$ is small while cost-even breakpoints would only reduce attacker's success rate.

\paragraph{Optimized Distribution and Cost-Even Breakpoints.}

We use BITEOPT\cite{biteopt2021} to find the optimal distribution $ \vec{q}^*$, the results are shown in Figure \ref{fig:costopt}. It can be seen that the performance of optimal distribution is almost identical to that of uniform distribution in term of attacker's success rate. This can be observed in all of the datasets in our experiments.  Adopting $\vec{q}^*$ has  comparatively small advantage and require the server to know password distribution in advance, which is difficult in practice. Thus we recommend to use cost-even break points with uniform distribution as a \emph{prior independent} password hashing mechanism.

\section{Conclusion}
In this paper, we introduce cost-asymmetric memory hard password authentication, a prior independent  authentication mechanism, to defend against offline attacks. As traditional hash function are replaced by memory hard functions, we propose to use random breakpoints in evaluation of an MHF in order to have the benefit of both cost asymmetry and cost quadratic scaling.  The interaction between the defender and the attacker is modeled by a Stackelberg game, within the game theory framework we formulate the optimal strategies for both defender and attacker. We theoretically proved that cost-asymmetric memory hard password authentication with cost-even breakpoints sampled from uniform distribution will reduce attacker's cracking success rate. In addition we set up experiments to validate the effectiveness of our proposed mechanism for arbitrary parameter settings, experiment results show that the reduction of attacker's success rate is up to 10\%.

% \subsection{Questions}
% \begin{enumerate}
%     \item
% Finding the optimal checking sequence for the attacker given a specific pepper distribution $\vec{q}$ is ensentially the problem of \emph{subset sum with externalities} ($\mathsf{SSE}$): given a set of items $a_1,\ldots, a_n$, select a subset $S$ such that $\mathsf{value}(S) = \sum_{a_i\in S}\mathsf{value}(a_i)$ is maximized where $\mathsf{value}(a_i)$ is function of $S/a_i$, instead of a constant (which is the assumption of classic subset sum). It is clear $\mathsf{SSE} \in \mathsf{EXP}$, can we prove $\mathsf{SSE}$ is NP-complete or deduce other complexity results?
% \item Would our algorithm $\FG{\pi_{LO}}$ work for arbitary $m$?
% \end{enumerate}

%
% ---- Bibliography ----
%
% BibTeX users should specify bibliography style 'splncs04'.
% References will then be sorted and formatted in the correct style.
%
\bibliographystyle{splncs04}
{\footnotesize
\bibliography{bib/signal,bib/abbrev0,bib/crypto,bib/bounded-parallel-mhf,bib/jit,bib/extra}
}

\appendix

\subsection{Marginal Cost}
\label{app:marginalcost}
Let \small$\sum_{i^{\prime = 1}}^{i-1}\Pr[pw_i] = \lambda$, $c_M(t_i^2-t_{i-1}^2) = c$\normalsize,  then 
\small
\begin{equation}
\begin{aligned}
&C_{adv}^u(\varpi_{i}(1,m) |\Pi(i-1,m)) = (1-\lambda) c \\
& + \left(1- \lambda - \Pr[pw_i]\frac{1}{m} \right)c +,\ldots,\\
& + \left(1- \lambda - \Pr[pw_i]\frac{m-1}{m} \right)c\\
&= (1-\lambda)mc - \frac{\Pr[pw_i]c(m-1)}{2}.
\end{aligned}
\end{equation}
\normalsize
Because of workload constraint we have,
\small
\begin{equation*}
\alpha = \sum_{i=1}^m \beta_i^2/m = \frac{m+1}{2}.
\end{equation*}
\normalsize
thus, $c = C_{max} / \alpha  = \frac{2C_{max}}{m+1}$, substitute it into equation (16), we have 

\small\begin{equation}
\label{eq:mc1}
\begin{aligned}
&C_{adv}^u(\varpi_{i}(1,m) |\Pi(i-1,m)) \\
&= \frac{2m}{m+1}(1-\lambda) C_{max} - \frac{(m-1)\Pr[pw_i]C_{max}}{m+1}.\\
\end{aligned}
\end{equation}
\normalsize

On the other hand, we have 
\small\begin{equation}
\label{eq:mc2}
C_{adv}^d(pw_i |\oplus_{i^{\prime}=1}^{i-1} pw_{i^{\prime}}) = (1-\lambda)C_{max}.
\end{equation}
\normalsize
Take the difference of equation (\ref{eq:mc1}) and equation (\ref{eq:mc2}),
\small
\begin{equation*}
\begin{aligned}
&C_{adv}^u(\varpi_{i}(1,m) |\Pi(i-1,m)) -  C_{adv}^d(pw_i |\oplus_{i^{\prime}=1}^{i-1} pw_{i^{\prime}}) \\
&= \frac{m - 1}{m+1}C_{max} \left((1-\lambda) -\Pr[pw_i]\right) \geq 0.\\
\end{aligned}
\end{equation*}
\normalsize

\subsection{Optimality Test}
Since $\pi_{LO}$ is locally optimal adding any instruction bundle for a single password $e = \bundle$ into $\pi_{LO}$ will decrease the overall utility, namely, $\mar{e}{\pi}<0$, recall that $\mar{e}{\pi}$ denotes the marginal utility by including $e$ into $\pi$, either through concatenation or insersion.    
However, there is no guarantee $\mar{S}{\pi_{LO}} < 0$ where $S$ is an ordered set of instruction bundles $\{e_1, e_2, \ldots, e_b\}$ since marginal utility is not \emph{additive} with respect to instruction bundles. To see this, from equation \eqref{eq:utility} we can derive 
\begin{equation}\small
\label{eq:marginins}
\mar{S}{\pi_{LO}} = \sum_{e \in S} \mar{e}{\pi_{LO}} +  \sum_{e_2\in S}\overbrace{c(e_2)\sum_{\substack{e_1\in S \\ e_1<e_2}} \Pr[e_1]}^{\text{cost reduction for $e_2$}},
\end{equation}
where $\Pr[e]$ and $c(e)$ are probability summation and round cost summation of instructions in $e$, respectively. Equation \eqref{eq:marginins} shows that the marginal cost by including a ordered set $S$ to $\pi_{LO}$ stems from 2 parts. The first is the summation of individual contribution and the second is cost reduction when checking $e_2\in S$ because previously included instruction bundles $e_1<e_2, e_1\in S$ have already eliminate some uncertainty. Even though every instruction bundle solely contributes negative marginal utility i.e, $\mar{e}{\pi_{LO}} < 0,\, e\in S$, the sign of $\mar{S}{\pi_{LO}}$ is not decisive because of the cost reduction term. If a set $S$ exists such that $\mar{S}{\pi}\geq 0 $, we will refer it to as a \emph{good set}.

By definition of good set and Theorem \ref{theorem:EX}, we have  
\begin{equation}\label{eq:optseq}\small
\pi^* = \pi_{LO} \cup S^*,\text{ s.t. } S^* = \argmax_S \mar{S}{\pi_{LO}}.
\end{equation}
Recall that $\cup$ denotes inclusion (whether through concatenation or insertion) while maintaining natural ordering. \footnote{Technically, $\argmax_S$ returns a set of solutions $S^*$. However, if this set contains multiple elements we can break ties according to the size of $|S^*|$ and followed by an arbitrary lexicographic ordering over solutions with same size.}

Verify if $S$ is a good set of $\pi_{LO}$ is easy but find one is hard. We design a polynomial algorithm to check if the local optimum $\pi_{LO}$ is in fact a global optimum, i.e., $\pi_{LO} = \pi^*$.  Our algorithm utilizes the following observation.
\begin{observation}
if $S = \{e_1, \ldots, e_b\}$ is a good set for $\pi_{LO}$, then its last element $e_b$ must provide non-negative utility (otherwise, it can be safely removed from $S$ without hurting marginal utility), namely
\small$$\mar{e_b}{\pi_{LO} \cup {S\setminus e_b} }\geq 0,$$\normalsize
where $S\setminus e_b$ is the ordered set excluding $e_b$.
\end{observation}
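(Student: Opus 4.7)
The plan is to prove the observation by a one-step exchange argument, using a chain-rule decomposition of marginal utility that exploits the fact that $e_b$ is the natural-ordering maximum of the elements of $S$. Concretely, I would establish the identity
\[
\mar{S}{\pi_{LO}} \;=\; \mar{S\setminus e_b}{\pi_{LO}} \;+\; \mar{e_b}{\pi_{LO}\cup (S\setminus e_b)},
\]
and then read the observation off directly. If $\mar{e_b}{\pi_{LO}\cup (S\setminus e_b)} < 0$, then
\[
\mar{S\setminus e_b}{\pi_{LO}} \;=\; \mar{S}{\pi_{LO}} - \mar{e_b}{\pi_{LO}\cup (S\setminus e_b)} \;>\; \mar{S}{\pi_{LO}} \;\geq\; 0,
\]
so $S\setminus e_b$ remains a good set and has strictly larger marginal utility than $S$. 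In particular, when $S$ is chosen to be a utility-maximizing good set (e.g., the $S^{*}$ from equation~(14)), this contradicts maximality; more generally it justifies the ``safely removed'' clause: we can always prune trailing negative-marginal bundles without hurting $\mar{S}{\pi_{LO}}$, and iterating yields a good set whose final element obeys the displayed non-negativity.

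The main step is therefore justifying the additive decomposition, and this is where I expect the real work. Starting from equation~(15), $\mar{S}{\pi_{LO}}$ equals $\sum_{e\in S}\mar{e}{\pi_{LO}}$ plus the cost-reduction sum $\sum_{e_2\in S} c(e_2)\sum_{e_1\in S,\, e_1<e_2} \Pr[e_1]$. The key observation is that because $e_b$ is the last element of $S$ in natural ordering, for every other $e_2\in S\setminus e_b$ we have $e_b \not< e_2$, so deleting $e_b$ leaves every cost-reduction weight $\Pr[e_1]$ attached to $e_2$ unchanged. Hence the only terms affected by removing $e_b$ are $\mar{e_b}{\pi_{LO}}$ itself and the cost-reduction term $c(e_b)\sum_{e_1\in S\setminus e_b,\, e_1<e_b}\Pr[e_1]$; these two together are exactly the expression for $\mar{e_b}{\pi_{LO}\cup (S\setminus e_b)}$ given by the insertion formula in equation~(8) (applied after inserting everything in $S\setminus e_b$). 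This identification is the delicate step and I would verify it by re-expanding both sides from the definition of $U_{adv}$ in equation~(2), noting that the natural-ordering position within $\pi_{LO}\cup S$ of any $e_2\in S\setminus e_b$ is invariant to whether $e_b$ is present.

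Once the decomposition is established, no further calculation is required; the observation follows instantly from the rearrangement above. A minor technical caveat to handle is compatibility: I should check that $S\setminus e_b$ is still a legal extension of $\pi_{LO}$ (i.e., satisfies the legit and opt restrictions). Since $e_b$ is maximal in $S$ under natural ordering, removing it cannot create a gap violating label- or password-backward-continuity among the remaining elements, so $S\setminus e_b$ is still a valid ordered set of instruction bundles, which completes the argument.
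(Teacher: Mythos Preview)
Your proposal is correct and matches the paper's (very terse) justification, which is just the parenthetical ``otherwise, it can be safely removed from $S$ without hurting marginal utility.'' One simplification: the chain-rule identity you aim to establish is immediate from the definition $\Delta(\pi_1,\pi_2)=U_{adv}(v,\vec{q},\pi_2)-U_{adv}(v,\vec{q},\pi_1)$ by telescoping, so you need not route the argument through equation~(15) or re-expand equation~(8); the compatibility check in your last paragraph is the only point that deserves a sentence.
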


It is not clear which elements are inside $S\setminus e_b$ but we know $S\setminus e_b \subseteq \before{e_b}{\pi_{LO}}$ where $ \before{e_b}{\pi_{LO}}$ is the ordered set of all unchecked instructions that appear before $e_b$ in natural ordering, given $\pi_{LO}$ already being checked, namely, 
\small
$$\before{e_b}{\pi_{LO}}\defeq \{e:e<e_b \text{ and } e\not\in \pi_{LO}\}.$$
\normalsize
We use following Lemma to negate the existence of a good set ending with $e_b$.
\begin{restatable}{lemma}{restest}
	\label{lemma:test}
	For a unchecked instruction bundle $e_b$, define
	\small
	\begin{equation*}
	\mathsf{test}(e_b) \defeq \mar{e_b}{\pi_{LO}}  + \sum_{e\in \before{e_b}{\pi_{LO}}} \Pr[e]c(e_b),
	\end{equation*}
	\normalsize 
	if $\mathsf{test}(e_b) < 0$ then a good set $S$ for $\pi_{LO}$ ending with $e_b$ does not exist.
\end{restatable}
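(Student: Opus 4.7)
}
The plan is to prove the contrapositive: if a good set $S=\{e_1,\ldots,e_b\}$ for $\pi_{LO}$ ending with $e_b$ exists, then $\mathsf{test}(e_b)\geq 0$. The starting point is the observation stated just above the lemma, which already supplies the inequality
\begin{equation*}
\mar{e_b}{\pi_{LO}\cup(S\setminus e_b)}\geq 0.
\end{equation*}
So it is enough to upper bound the left hand side by $\mathsf{test}(e_b)$.

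The core step is a direct identity relating $\mar{e_b}{\pi_{LO}\cup(S\setminus e_b)}$ to $\mar{e_b}{\pi_{LO}}$. I would iterate equation \eqref{eq:ins} across the individual instructions that make up the bundle $e_b$, as prescribed by the additive decomposition displayed just after equation \eqref{eq:con}. For each singleton $e'$ inside $e_b$, the marginal utility splits into a ``gain'' term depending on $\Pr[e']$, $v$, and the round costs of base instructions lying after $e'$, plus a ``cost'' term equal to $c(e')$ weighted by one minus the probability mass of base instructions lying before $e'$. Since $e_b$ is the last bundle of $S$, every element of $S\setminus e_b$ lies strictly before $e_b$, and hence strictly before every singleton of $e_b$, in natural ordering. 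Consequently, enlarging the base from $\pi_{LO}$ to $\pi_{LO}\cup(S\setminus e_b)$ leaves each ``gain'' term untouched and increases the ``before'' probability mass in each ``cost'' term by exactly the constant $\sum_{e\in S\setminus e_b}\Pr[e]$. Summing the round-cost contributions over the singletons that comprise $e_b$ yields the telescoping identity
\begin{equation*}
\mar{e_b}{\pi_{LO}\cup(S\setminus e_b)} = \mar{e_b}{\pi_{LO}} + c(e_b)\sum_{e\in S\setminus e_b}\Pr[e].
\end{equation*}

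To finish, I would argue that without loss of generality $S\setminus e_b$ consists only of instructions that are unchecked in $\pi_{LO}$ (otherwise they can be dropped without changing $\pi_{LO}\cup S$), and by assumption they all precede $e_b$; therefore $S\setminus e_b\subseteq\before{e_b}{\pi_{LO}}$. This gives $\sum_{e\in S\setminus e_b}\Pr[e]\leq \sum_{e\in\before{e_b}{\pi_{LO}}}\Pr[e]$, and substituting into the identity above produces $\mar{e_b}{\pi_{LO}\cup(S\setminus e_b)}\leq \mathsf{test}(e_b)$. Combined with the non-negativity supplied by the observation this forces $\mathsf{test}(e_b)\geq 0$, establishing the contrapositive. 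The main obstacle is propagating the identity through the full bundle rather than a singleton: one has to confirm that, when processing the instructions of $e_b$ one at a time under \eqref{eq:ins}, the ``future cost reduction'' contributions really are insensitive to the presence of $S\setminus e_b$ in the base (because $S\setminus e_b$ lies entirely before $e_b$), while each round cost inside $e_b$ picks up the same constant probability shift. Once that bookkeeping is verified the rest of the argument is essentially one line.
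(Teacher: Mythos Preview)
Your proposal is correct and follows essentially the same approach as the paper's proof: establish the identity $\mar{e_b}{\pi_{LO}\cup(S\setminus e_b)} = \mar{e_b}{\pi_{LO}} + c(e_b)\sum_{e\in S\setminus e_b}\Pr[e]$, bound the sum via $S\setminus e_b\subseteq\before{e_b}{\pi_{LO}}$, and combine with the observation that $e_b$ must contribute non-negative marginal utility in a good set. The paper simply asserts the identity in one line without the singleton-level bookkeeping you outline, but the structure of the argument is identical.
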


\begin{proof}
	\small
	\begin{equation*}
	\begin{aligned}
	&\mar{e_b}{\pi_{LO} \cup {S\setminus e_b} } =  \mar{e_b}{\pi_{LO}} + \sum_{e\in S\setminus e_b} \Pr[e]c(e_b),\\
	&\leq \mar{e_b}{\pi_{LO}}  + \sum_{e\in \before{e_b}{\pi_{LO}}} \Pr[e]c(e_b)\\
	&=\mathsf{test}(e_b).
	\end{aligned}
	\end{equation*}
	\normalsize
	If $\mathsf{test}(e_b) < 0$, then  $\mar{e_b}{\pi_{LO} \cup {S\setminus e_b} } < 0$, $S$ cannot be a good set.
\end{proof}

We can interpret the term $\sum_{e\in \before{e_b}{\pi_{LO}}} \Pr[e]c(e_b)$ as the maximum possible cost reduction when checking $e_b$, then $\mathsf{test}(e_b)$ is the maximum marginal utility $e_b$ can provide as the last instruction bundle in $S$. If $\mathsf{test}(e_b) < 0$, a set ending with $e_b$ cannot be a good set; if this is the case for all instructions $e_b \in \Pi(\lenmax , m)\setminus \pi_{LO}$ that might be added to $\pi_{LO}$, then a good set ending with any unchecked instruction bundle does not exist. Thus, in Equation \eqref{eq:optseq} we have $S^* = \emptyset$ and $\pi_{LO} = \pi^*$.  We use $\OT{\pi_{LO}}$ to examine if $\pi_{LO} = \pi^*$ --- see Algorithm \ref{algo:OT}.

\begin{algorithm}[h]\small
	\KwIn{$v$, $\vec{q}$, $\pi_{LO}$}
	\KwOut{$\pi^*$}
	\BlankLine
	\ForEach{$e_b\in \Pi(\lenmax , m)\setminus \pi_{LO}$}
	{
		\If{$\mathsf{test}(e_b)\geq 0$}
		{
			\KwRet{FAIL}\;
		}
	}
	\KwRet{PASS}
	\caption{$\OT{\pi_{LO}}$}
	\label{algo:OT}
\end{algorithm}

\begin{restatable}{theorem}{restateoptimalitycheck} 
	\label{thm:optimalitycheck}
If $\OT{\pi_{LO}}$ returns PASS, then $\pi_{LO} = \pi^*$.
\end{restatable}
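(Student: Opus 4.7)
The plan is to prove the contrapositive: if $\pi_{LO} \neq \pi^*$, then the loop of Algorithm \ref{algo:OT} must encounter some unchecked bundle $e_b$ with $\mathsf{test}(e_b) \geq 0$ and return FAIL. Theorem \ref{theorem:EX} gives $\pi_{LO} \subseteq \pi^*$, and Theorem \ref{theorem:superset} gives $\pi^* \subseteq \Pi(\lenmax, m)$, so the set $S^* = \pi^* \setminus \pi_{LO}$ from Equation \eqref{eq:optseq} is a nonempty subset of $\Pi(\lenmax, m) \setminus \pi_{LO}$. Since $\pi^*$ maximizes attacker utility, $\mar{S^*}{\pi_{LO}} \geq 0$, so $S^*$ is a good set for $\pi_{LO}$ in the sense introduced in Section 5.5.

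Next I would prune $S^*$ from the right using the chain-rule decomposition $\mar{S}{\pi_{LO}} = \mar{S \setminus e_b}{\pi_{LO}} + \mar{e_b}{\pi_{LO} \cup (S \setminus e_b)}$, which follows from Equation \eqref{eq:marginins} because the predecessors of every element other than the terminal $e_b$ are unchanged when $e_b$ is removed. So long as the current last element contributes strictly negative marginal utility given the rest, discard it; each such step strictly increases the total marginal utility and keeps the set good. Iterating yields a good set $S' \subseteq S^*$ that either is empty or whose final element has non-negative marginal contribution conditioned on its predecessors in $S'$.

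Handling the empty case is where the tiebreaker in Equation \eqref{eq:optseq} matters. If $S' = \emptyset$, then $\mar{S^*}{\pi_{LO}} \leq \mar{S'}{\pi_{LO}} = 0$, which together with $\mar{S^*}{\pi_{LO}} \geq 0$ forces equality, so $\emptyset$ is also an argmax and, by the minimum-size tiebreak, $S^* = \emptyset$, contradicting $\pi_{LO} \neq \pi^*$. Therefore $S' \neq \emptyset$, and its last element $e_b$ lies in $\Pi(\lenmax, m) \setminus \pi_{LO}$ and satisfies $\mar{e_b}{\pi_{LO} \cup (S' \setminus e_b)} \geq 0$. The upper bound proved in Lemma \ref{lemma:test} then delivers $\mathsf{test}(e_b) \geq \mar{e_b}{\pi_{LO} \cup (S' \setminus e_b)} \geq 0$, so Algorithm \ref{algo:OT} will see a nonnegative test value at $e_b$ and return FAIL, completing the contrapositive.

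The main obstacle I anticipate is the borderline case $\mar{S^*}{\pi_{LO}} = 0$, where the trimming process could in principle collapse to the empty set; closing this requires leaning explicitly on the minimum-size tiebreaker in the definition of $S^*$ rather than on utility alone. Beyond this, the argument is a straightforward chaining of the already-proved structural facts (Theorems \ref{theorem:EX} and \ref{theorem:superset}, Lemma \ref{lemma:test}) with the chain-rule form of $\mar{\cdot}{\cdot}$.
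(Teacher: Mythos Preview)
Your proposal is correct and is essentially the contrapositive of the paper's own proof: both arguments hinge on the chain-rule decomposition $\mar{S}{\pi_{LO}} = \mar{S \setminus e_b}{\pi_{LO}} + \mar{e_b}{\pi_{LO} \cup (S \setminus e_b)}$ together with the bound from Lemma~\ref{lemma:test}, and both repeatedly trim the last element of a candidate good set. The paper argues directly (PASS forces every last-element contribution strictly negative, so trimming drives any $S$ to $\emptyset$), whereas you run the same trimming from $S^*$ and extract a witness $e_b$ with $\mathsf{test}(e_b)\geq 0$; your extra tiebreaker discussion is harmless but unnecessary, since strict removals already make the $S'=\emptyset$ case impossible.
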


\begin{proof}
	Since $\OT{\pi_{LO}}$ returns PASS, then we have 
	$$\mar{e_b}{\pi_{LO} \cup {S\setminus e_b} }\leq \mathsf{test}(e_b) < 0,\,  \forall S.$$
	For any $S$ it would only increase overall utility by repeatedly removing the last element in $S$ until it is empty. Therefore, a good set is a empty set. In other words, $\pi_{LO} = \pi^*$.
\end{proof}

By storing of $\lambda(\pi_{LO}, i)$ and $\sum_{i^{\prime}=1}^{i}\Pr[pw_{i^{\prime}}]$ for all $i$, which are intermidiate values in execution of $\EX$, in amortized sense we can evaluate $\mathsf{test}(e_b)$ in time $\mathcal{O}(1)$, then $\OT{\pi_{LO}}$ runs in time $\mathcal{O}(n_pm)$.
We remark that even if $\pi_{LO}$ fails $\OT{\pi_{LO}}$, it does not imply $\pi_{LO} \neq \pi^*$, since $\mathsf{test}(e_b) < 0$ is a sufficient condition of no good set ending with $e_b$, not a necessary one. Fortunately, $\OT{\pi_{LO}}$ returns PASS for most of the time in our experiments, confirming $\pi_{LO} = \pi^*$ for most $v/C_{max}$ ratios; otherwise, we might discard $\OT{\pi_{LO}}$ as well.

\subsection{Finding $\pi^*$ for cost-even breakpoints when $m\leq 3$}

If $\pi_{LO}$ fails $\OT{\pi_{LO}}$, we cannot deduce any conclusions about the optimality of $\pi_{LO}$, but when $\beta_i = \sqrt{i}$ and $m\leq 3$ we can design an efficient brute force algorithm to find $\pi^*$.
\begin{definition}
Given a vector of real numbers  $\vec{q}=(q_1,\ldots, q_m)$ we call $q_j$ a peak of $\vec{q}$ if (1) $j=m$, or (2) $j=1$ and $q_1 > q_2$, or (3) $q_{j-1} \leq q_j$ and $q_j > q_{j+1}$. We use $\peak$ to denote the set of all peak indices in $\vec{q}$. 	 
\end{definition}

\begin{restatable}{lemma}{respeak}
\label{lemma:peak}
Fix an arbitrary breakpoint distribution $\vec{q} = (q_1,\ldots, q_m)$. Suppose that $\beta_j=\sqrt{j}$ for all $j \leq m$ and  $\pi_{LO} = \op{\len{\pi_{LO}}}$, then for all $i \leq \len{\pi_{LO}}$ we have  $\tau_i \in \peak$.
\end{restatable}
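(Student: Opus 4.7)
I plan to fix $i \le \len{\pi_{LO}}$, freeze every bundle of $\pi_{LO}$ other than the $i$-th, and analyse the utility $U_i(j)$ obtained by replacing $\varpi_i(1,\tau_i)$ with $\varpi_i(1,j)$ for $j \in \{0,\ldots,m\}$. The key structural simplification is that $\beta_j = \sqrt{j}$ makes the per-label round cost $c_M(t_j^2 - t_{j-1}^2) = c_M t_1^2 =: c$ a single $j$-independent constant. Writing $V_i \defeq v + c\sum_{i'>i}\tau_{i'}$ and $P_{<i} \defeq \sum_{i'<i}\Pr[pw_{i'}]\sum_{k\le \tau_{i'}} q_k$ — both independent of $j$ — and applying equation \eqref{eq:ins} label-by-label gives the closed form
\[
M_i(j) \;\defeq\; U_i(j) - U_i(j-1) \;=\; \Pr[pw_i]\,V_i\,q_j \;+\; \Pr[pw_i]\,c\sum_{k<j}q_k \;-\; (1-P_{<i})\,c,
\]
in which only the first summand depends on $q_j$ directly while the remainder is non-decreasing in $j$.

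Next I apply local optimality of $\pi_{LO}$ in two forms. First, since no single-label advance of $\tau_i$ is profitable at the $\EX$ fixed point, $M_i(\tau_i+1) \le 0$ whenever $\tau_i < m$. Second, since $\EXC$ originally selected $\tau_i$ as the argmax of $U_i(\cdot)$ over $\{0,\ldots,m\}$ and subsequent $\EXI$ insertions for passwords $i' > i$ only increase $V_i$ (so only strengthen the inequality $M_i(\tau_i) \ge 0$), we have $M_i(\tau_i) \ge 0$ at termination. When $\tau_i = m$, condition~(1) of the peak definition applies and we are done. Otherwise, subtracting the two inequalities yields
\[
M_i(\tau_i+1) - M_i(\tau_i) \;=\; \Pr[pw_i]\bigl[\,V_i(q_{\tau_i+1} - q_{\tau_i}) + c\,q_{\tau_i}\,\bigr] \;\le\; 0,
\]
and since $c\,q_{\tau_i} > 0$ and $V_i > 0$ this forces $q_{\tau_i+1} < q_{\tau_i}$ strictly. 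That already gives condition~(2) when $\tau_i = 1$ and delivers the strict right-drop half of condition~(3) when $1 < \tau_i < m$.

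The main obstacle I expect is the weak left-increase $q_{\tau_i-1} \le q_{\tau_i}$ in the interior case $1 < \tau_i < m$. The symmetric identity $M_i(\tau_i) - M_i(\tau_i-1) = \Pr[pw_i]\bigl[V_i(q_{\tau_i}-q_{\tau_i-1}) + c\,q_{\tau_i-1}\bigr]$ together with $M_i(\tau_i)\ge 0$ alone does not contradict $q_{\tau_i-1} > q_{\tau_i}$, so to close that case I plan to invoke the full argmax condition $U_i(\tau_i) \ge U_i(j)$ for every $j$ and telescope the $M_i$'s backwards from $\tau_i$ to whichever of index $1$ or the nearest lower-indexed peak $j_0 < \tau_i$ is reached first. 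The monotone accumulating term $\Pr[pw_i]\,c\sum_{k<j}q_k$ in $M_i(j)$ is the lever that makes the partial sum $\sum_{j=j_0+1}^{\tau_i} M_i(j) = U_i(\tau_i) - U_i(j_0)$ strictly negative under the hypothesis $q_{\tau_i-1} > q_{\tau_i}$, contradicting $\tau_i$ being the argmax. Making that telescoping step quantitatively precise — identifying $j_0$, bookkeeping the cumulative $q_k$-weighted cost savings, and verifying the sign flip — is the delicate piece, but everything else is routine given the clean separation of $q_j$-dependence enabled by the cost-even assumption.
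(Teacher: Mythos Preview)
Your derivation of the right-drop condition $q_{\tau_i} > q_{\tau_i+1}$ is the paper's argument unpacked. The paper simply observes that if $q_{j+1} \ge q_j$ then the marginal utility of adding $(pw_i,t_{j+1})$ strictly exceeds that of the already-included $(pw_i,t_j)$, which it asserts is non-negative; hence adding $(pw_i,t_{j+1})$ would be profitable, contradicting local optimality. Your $M_i$ formula and the inequality $M_i(\tau_i+1)-M_i(\tau_i)\le 0$ are an explicit rendering of exactly this comparison.

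You are right, however, that the paper's implication ``$j$ not a peak $\Rightarrow q_{j+1}\ge q_j$'' is incomplete: for $1<j<m$ a non-peak index can instead have $q_{j-1}>q_j$ while still satisfying $q_j>q_{j+1}$, and the paper's proof never treats this left-increase case. Your telescoping plan for closing it has a real obstacle. It relies on the full argmax property $U_i(\tau_i)\ge U_i(j)$ for all $j<\tau_i$, but local optimality of $\pi_{LO}$ as produced by $\mathsf{Extend}$ only precludes profitable \emph{additions}; nothing in the algorithm or its analysis rules out that deleting labels from password~$i$ could raise utility, so $U_i(j_0)\le U_i(\tau_i)$ for $j_0<\tau_i$ is not available to you. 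A related gap sits in your justification of $M_i(\tau_i)\ge 0$: the insertion phase can increase $\tau_i$ after the concatenation-phase argmax, and the bundle-level condition $\ins{\varpi_i(\tau_i^{\mathrm{old}}{+}1,\tau_i^{\mathrm{new}})}{\pi}\ge 0$ does not force the \emph{last} inserted label alone to have non-negative marginal. The paper simply asserts $\Delta\bigl(\pi_{LO}-(pw_i,t_j),\pi_{LO}\bigr)\ge 0$ without argument, so this difficulty is shared rather than introduced by your approach.
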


\begin{proof}
	Suppose $\tau_i = j \neq m$, since $(pw_i,t_j)\in \pi_{LO}$, then 
	\small
	$$\Delta\left(\pi_{LO}-(pw_i, t_j), \pi_{LO}\right)\geq 0,$$
	\normalsize
	where $-$ is removal operation. Since $j$ is not a peak, then we have $q_{j+1} \geq q_j$ which leads to 
	\small
	$$\margin{(pw_i, t_{j+1})}{\pi_{LO}} > \margin{\pi_{LO}-(pw_i, t_j)}{\pi_{LO}}.$$
	\normalsize
	Therefore $\Delta\left(\pi_{LO}, \pi_{LO} + (pw_i, t_{j+1})\right)>0$. It is still profitable to advance a label for $pw_i$ i.e., check $(pw_i, t_{j+1})$, so $(pw_i,t_{j+1})$ should have been included into $\pi_{LO}$ in local search. Contradiction.
\end{proof}

\begin{restatable}{lemma}{respeakTwo}
\label{lemma:peakTwo}
Fix an arbitrary breakpoint distribution $\vec{q} = (q_1,\ldots, q_m)$. Suppose that $\beta_j=\sqrt{j}$ for all $j \leq m$ and  $\pi^* = \oplus_{i=1}^{\len{\pi^*}} \varpi_i(1, \tau_i^*)$, then for all $i \leq \len{\pi^*}$ we have  $\tau_i^*\in \peak$.
\end{restatable}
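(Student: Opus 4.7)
The plan is to mirror the proof of Lemma~\ref{lemma:peak}, with local optimality of $\pi_{LO}$ replaced by global optimality of $\pi^*$. Suppose for contradiction that some $i \leq \len{\pi^*}$ has $\tau_i^* = j \notin \peak$. Since $m$ is always a peak we have $j < m$, and the same case analysis used in the proof of Lemma~\ref{lemma:peak} yields $q_{j+1} \geq q_j$.

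I form two perturbations of $\pi^*$. First, let $\pi^{*-}$ denote $\pi^*$ with the single instruction $(pw_i, t_j)$ removed: because $\tau_i^* = j$, this is the last label indexed for $pw_i$, and its removal preserves label-backward-continuity, so $\pi^{*-}$ is legit. Second, let $\pi^{*+} \defeq \pi^* + (pw_i, t_{j+1})$ be obtained by inserting $(pw_i, t_{j+1})$ immediately after $(pw_i, t_j)$ in natural ordering; this preserves small-label-first, label-backward-continuity, and no-inversions, so $\pi^{*+}$ is also legit. By global optimality of $\pi^*$, $\margin{\pi^{*-}}{\pi^*} \geq 0$ and $\margin{\pi^*}{\pi^{*+}} \leq 0$.

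The remaining step is to show that these two inequalities contradict each other. Under cost-even breakpoints every per-round cost equals $c_M(t_j^2 - t_{j-1}^2) = c_M t_1^2$, independent of $j$. Set $A \defeq v + \sum_{e' > (pw_i, t_j),\, e' \in \pi^{*-}} c(e')$ and $B \defeq \sum_{e' < (pw_i, t_j),\, e' \in \pi^{*-}} \Pr[e']$. Because $\tau_i^* = j$, no $(pw_i, t_{j'})$ with $j' > j$ lies in $\pi^*$, which lets me align the two summation ranges: the set of $\pi^*$-instructions strictly greater than $(pw_i, t_{j+1})$ equals the set of $\pi^{*-}$-instructions strictly greater than $(pw_i, t_j)$, while the $\pi^*$-instructions strictly less than $(pw_i, t_{j+1})$ are precisely the corresponding $\pi^{*-}$-set with $(pw_i, t_j)$ adjoined. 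Applying equation~\eqref{eq:ins} to each side,
\begin{align*}
\margin{\pi^{*-}}{\pi^*} &= \Pr[pw_i]\, q_j\, A - (1 - B)\, c_M t_1^2,\\
\margin{\pi^*}{\pi^{*+}} &= \Pr[pw_i]\, q_{j+1}\, A - (1 - B - \Pr[pw_i]\, q_j)\, c_M t_1^2.
\end{align*}
Subtracting,
\begin{equation*}
\margin{\pi^*}{\pi^{*+}} - \margin{\pi^{*-}}{\pi^*} = \Pr[pw_i]\,(q_{j+1} - q_j)\, A + \Pr[pw_i]\, q_j\, c_M t_1^2 > 0,
\end{equation*}
using $q_{j+1} \geq q_j$, $A \geq v > 0$, and $\Pr[pw_i]\, q_j > 0$ (if $\Pr[pw_i]\, q_j = 0$ then $\margin{\pi^{*-}}{\pi^*}$ is already strictly negative, violating optimality). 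This contradicts the two displayed inequalities, proving the claim.

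The main obstacle is the alignment bookkeeping: one must carefully verify that $A$ and $B$ appear as the same constants in the two marginal utility expressions, which hinges on $\tau_i^* = j$ leaving $\pi^*$ with no label beyond $t_j$ for $pw_i$. Once that alignment is in hand, the cost-even choice $\beta_j = \sqrt{j}$ erases all cost asymmetry between labels $j$ and $j+1$, and the single arithmetic line above closes the argument.
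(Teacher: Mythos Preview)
Your argument is a careful mirror of the paper's proof of Lemma~\ref{lemma:peak}, and the bookkeeping aligning $A$ and $B$ across the two marginals, together with the final subtraction, is entirely correct \emph{once} you have $q_{j+1}\ge q_j$. The gap is precisely that step. From ``$j<m$ and $j\notin\peak$'' you \emph{cannot} deduce $q_{j+1}\ge q_j$: by the paper's definition, for $1<j<m$ the index $j$ fails to be a peak whenever $q_{j-1}>q_j$ \emph{or} $q_j\le q_{j+1}$, and the first disjunct alone says nothing about $q_{j+1}$. Concretely, with $\vec q=(0.5,0.3,0.2)$ the index $j=2$ is not a peak (because $q_1>q_2$) yet $q_3<q_2$. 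The paper's own proof of Lemma~\ref{lemma:peak} makes exactly the same unjustified inference, so you have faithfully reproduced it, but the hole is real in both places.

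Worse, the gap cannot be patched by a simple case split, because the lemma as stated appears to be false. Take $m=3$, $\beta_j=\sqrt{j}$, $\vec q=(0.5,0.3,0.2)$ (so $\peak=\{1,3\}$), one password $pw_1$ with $\Pr[pw_1]=\tfrac12$ and the remaining mass spread over many negligible-probability passwords, and set $v=5.4\,c$ where $c=c_Mt_1^2$. Direct evaluation of $U_{adv}$ for $\tau_1\in\{0,1,2,3\}$ (with no further passwords checked) gives utilities $0,\ 0.35c,\ 0.41c,\ 0.35c$, so the unique optimum is $\tau_1^*=2\notin\peak$. In this instance your displayed difference becomes $\Pr[pw_1](q_3-q_2)A+\Pr[pw_1]q_2c=0.5(-0.1)(5.4c)+0.5(0.3)c=-0.12c<0$, so no contradiction arises; indeed $\margin{\pi^{*-}}{\pi^*}=0.06c\ge 0$ and $\margin{\pi^*}{\pi^{*+}}=-0.06c\le 0$ are perfectly compatible. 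Any valid proof would have to separately handle the regime $q_{j-1}>q_j>q_{j+1}$, and as this example shows, neither single-instruction perturbation (removal of $t_j$ or insertion of $t_{j+1}$) produces a contradiction there.
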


\begin{observation}
when $m=2$, $\peak$ is in $\left\{\{2\}, \{1, 2\}\right\}$;
when $m=3$, $\peak$ is in $\left\{\{3\}, \{1, 3\}, \{2, 3\}\right\}$;
\end{observation}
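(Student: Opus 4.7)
The plan is to verify the observation directly by exhaustive case analysis using the three rules in the definition of peak. Note first that the index $m$ itself is always a peak by rule (1), so both $\peak$ sets listed must contain $m$. The rest of the work is just enumerating what the remaining indices $j \in \{1, \dots, m-1\}$ can be.

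For $m=2$, the only candidate peaks other than $j=2$ are at $j=1$. Rule (2) says $j=1$ is a peak iff $q_1 > q_2$. I would split into two cases: if $q_1 > q_2$ then $\peak = \{1,2\}$, and otherwise $\peak = \{2\}$. These are clearly the only two possibilities, which matches $\peak \in \{\{2\}, \{1,2\}\}$.

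For $m=3$, the candidate peaks other than $j=3$ are $j=1$ and $j=2$. Rule (2) makes $j=1$ a peak iff $q_1 > q_2$, and rule (3) makes $j=2$ a peak iff $q_1 \le q_2$ and $q_2 > q_3$. I would break into three mutually exclusive and exhaustive cases on the relations between $q_1, q_2, q_3$: (a) $q_1 > q_2$, giving $\peak = \{1,3\}$ since the condition $q_1 \le q_2$ needed for $j=2$ fails; (b) $q_1 \le q_2$ and $q_2 > q_3$, giving $\peak = \{2,3\}$; and (c) $q_1 \le q_2$ and $q_2 \le q_3$, giving $\peak = \{3\}$. To rule out the remaining candidate set $\{1,2,3\}$, observe that $j=1$ being a peak requires $q_1 > q_2$, while $j=2$ being a peak requires $q_1 \le q_2$, which are incompatible.

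There is no real obstacle here: the observation reduces to plugging $m=2$ and $m=3$ into the definition and enumerating the at most $2^{m-1}$ truth assignments to the strict/non-strict inequalities between consecutive $q_j$'s. The only thing to double-check is that the case split is exhaustive and that the incompatible configuration $\{1,2,3\}$ is correctly excluded.
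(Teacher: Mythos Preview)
Your proposal is correct and is exactly the natural verification; the paper treats the observation as immediate from the definition of $\peak$ and does not spell out a proof, so your exhaustive case split on the inequalities between consecutive $q_j$'s is precisely what is implicitly being invoked.
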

$\lvert \peak\rvert = 1$ correspons to uniform breakpoint distribution for which $\EXC{\emptyset}$ already gives the optimal checking sequence $\pi^*$.

Lemmas \ref{lemma:peak} and \ref{lemma:peakTwo} imply that if a good set $S$ \emph{with respect to insertion} exists for $\pi_{LO}$, every instruction bundle in $S$ must start at a peak position and end with another peak position. Specially, when $\peak = \{peak_1,m\}$, the only tentative insertion operation to promote locally optimal $\pi_{LO}$ to global optimal $\pi^*$ is to check a password to completion, i.e., change the largest label index $\tau_i$ from $peak_1$ to $m$.

The following theorem states that we can efficiently construct an ordered set $S_c(i, \pi)$ which provides larger marginal utility upon insertion than any other ordered set of the same size. 

\begin{restatable}{theorem}{resgoodset}
\label{theorem:goodsetcandidate}
If $\peak = \{peak_1, m\}$ and $\beta_i = \sqrt{i}$ for $i\leq m$,  given $\pi = \op{\lenz}$ with $\tau_i\in \peak$ and $\unchecked{\pi} = \{\varpi_i(peak_1,m):i\leq \len{\pi} \text{ and }\tau_i = peak_1\}$---the set of unchecked instruction bundles spanning two peaks, we define
\small
\begin{equation*} 
	S_c(i, \pi) \defeq
	\begin{cases}
		\emptyset, \text{ if } i=0,\\
		\{e_1, \ldots, e_i\}, \text{ if } i>0.\\
	\end{cases}
\end{equation*}
\normalsize
where $e_i,\, \forall i>0$ is recursively defined as 
\small$$e_i \defeq \argmax_{e \in \unchecked{\pi + S_c(i-1,\pi)}} \ins{e}{\pi + S_c(i-1,\pi)}.$$\normalsize
Then we have
\small
\begin{equation*}
	\begin{aligned}
	\ins{S_{c}(i, \pi)}{\pi} &\geq \ins{S}{\pi}, \\ 
	&\forall i,\, \forall S\subseteq \unchecked{\pi} \text{ s.t. } |S_c(i,\pi)| = |S|.
	\end{aligned}
\end{equation*}
\normalsize
\end{restatable}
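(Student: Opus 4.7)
The plan is to induct on $i$. The base case $i=0$ is immediate because $S_c(0,\pi)=\emptyset$ forces any same-size $S$ to be empty. For the inductive step the first step is to exploit the structural hypothesis: since $\beta_j=\sqrt{j}$ and every element of $\mathsf{unchecked}(\pi)$ spans the same label range $(t_{peak_1},t_m]$, every such bundle has the \emph{same} round cost $c^\star \defeq c_M(t_m^2-t_{peak_1}^2)$, while its probability $\Pr[pw_i]\cdot\sum_{j>peak_1}q_j$ depends only on the password index. Starting from \eqref{eq:utility} and expanding the change in expected cost when inserting the whole set $S$ into $\pi$, one obtains the decomposition
\[
\ins{S}{\pi}=\sum_{e\in S}\ins{e}{\pi}+c^\star\sum_{\substack{e,e'\in S\\ e<e'}}\Pr[e],
\]
which separates ``stand-alone'' marginal utilities from a pairwise interaction term determined purely by the internal ordering of $S$. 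This identity is the main technical tool.

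With the decomposition in hand I would split the inductive step on whether $e_1\in S$. If $e_1\in S$, I would use the chain rule $\ins{S}{\pi}=\ins{e_1}{\pi}+\ins{S\setminus\{e_1\}}{\pi+e_1}$ together with the construction identity $S_c(i,\pi)=\{e_1\}\cup S_c(i-1,\pi+e_1)$, and invoke the inductive hypothesis at state $\pi+e_1$ on the size-$(i-1)$ sets $S_c(i-1,\pi+e_1)$ and $S\setminus\{e_1\}\subseteq\mathsf{unchecked}(\pi+e_1)$.

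If $e_1\notin S$, the plan is to perform a single exchange that brings $e_1$ into $S$ without decreasing utility, reducing to the previous case. Write $S=\{g_1<g_2<\cdots<g_k\}$ in natural order, let $j^\star$ be the rank $e_1$ would take if inserted, and set $f=g_{j^\star}$ (or $f=g_k$ and $j^\star=k$ when $e_1$ exceeds every element of $S$). Because $e_1$ slides into exactly the rank vacated by $f$ and the ranks of the remaining elements are unchanged, the decomposition yields
\[
\ins{S'}{\pi}-\ins{S}{\pi}=\bigl(\ins{e_1}{\pi}-\ins{f}{\pi}\bigr)+c^\star(k-j^\star)\bigl(\Pr[e_1]-\Pr[f]\bigr),
\]
where $S'=(S\setminus\{f\})\cup\{e_1\}$. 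The first bracket is non-negative because $e_1$ was chosen as the individual maximizer of $\ins{\cdot}{\pi}$. For the second: in the ``slide in'' case $e_1<f$ in natural order, and the opt restriction ``popular password first'' forces $\Pr[e_1]\geq \Pr[f]$, so the term is non-negative; in the ``append at the end'' case $k-j^\star=0$, so the term vanishes. Either way $\ins{S'}{\pi}\geq\ins{S}{\pi}$, and since $e_1\in S'$ the already-treated first case finishes the argument.

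The step I expect to be trickiest is the bookkeeping that produces the decomposition identity with all three effects correctly accounted for: cost reductions that $S$-elements grant to later elements of $\pi$, cost reductions they grant to each other, and the avoidance of double-counting probabilities already folded into each stand-alone $\ins{e}{\pi}$. Once that identity is in place, the monotone-probability structure implied by the opt restriction ``popular password first'' makes the sign check in the exchange step essentially automatic.
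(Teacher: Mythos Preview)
Your argument is correct. Both you and the paper rely on the same decomposition identity (the paper's equation~\eqref{eq:marginins}, which under $\beta_j=\sqrt{j}$ specializes to your formula with a common cost $c^\star$), but the inductive skeletons differ. The paper fixes $i$, writes $S=\{e_1',\ldots,e_i'\}$, and inducts on the number $n$ of elements it has swapped, peeling off $e_i,e_{i-1},\ldots$ from the back and replacing them by $e_i',e_{i-1}',\ldots$; to control the probability of the swapped-in element it first proves the auxiliary Lemma~\ref{lemma:marginlambda} ($\ins{e}{\pi}\geq\ins{e'}{\pi}\Rightarrow \Pr[e]\geq\Pr[e']$). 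You instead induct on $i$ over all admissible base sequences, bring $e_1$ into $S$ by a single rank-preserving exchange, and then recurse at $\pi+e_1$ using the identity $S_c(i,\pi)=\{e_1\}\cup S_c(i-1,\pi+e_1)$. Your exchange step reads the sign of $\Pr[e_1]-\Pr[f]$ directly off the natural ordering, so you do not need the auxiliary lemma; conversely, the paper's back-to-front replacement avoids your case split on whether $e_1\in S$. Two small points to make explicit when you write this up: state the induction as ranging over all $\pi$ of the required form (so that the hypothesis legitimately applies at $\pi+e_1$), and note that the chain rule $\ins{S}{\pi}=\ins{e_1}{\pi}+\ins{S\setminus\{e_1\}}{\pi+e_1}$ is immediate from telescoping $U_{adv}$ since $(\pi+e_1)+(S\setminus\{e_1\})=\pi+S$.
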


We first prove the following Lemma, which is utilized in proof of Theorem \ref{theorem:goodsetcandidate}.

\begin{lemma}
	\label{lemma:marginlambda}
	When $\beta_i=\sqrt{i}$, suppose $e_1 = \varpi_{i1}(j_1,j_2)$, and  $e_2 = \varpi_{i_2}(j_1,j_2)$, if $\ins{e_1}{\pi} \geq \ins{e_1}{\pi}$, then $\Pr(e_1) \geq \Pr(e_2)$.
\end{lemma}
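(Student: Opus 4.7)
The plan is to establish the contrapositive: if $\Pr(e_1) < \Pr(e_2)$, then $\ins{e_1}{\pi} < \ins{e_2}{\pi}$. Since $e_1,e_2$ share the label range $[j_1,j_2]$, one has $\Pr(e_k) = \Pr[pw_{i_k}]\cdot Q$ with $Q := \sum_{k=j_1}^{j_2} q_k$, so the hypothesis reduces to $\Pr[pw_{i_1}] < \Pr[pw_{i_2}]$; by natural ordering this is the same as $i_1 > i_2$. I would first derive a closed form for the bundle marginal utility by iterating equation~\eqref{eq:ins} over the sub-instructions $(pw_i,t_{j_1}),\ldots,(pw_i,t_{j_2})$. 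The assumption $\beta_k=\sqrt{k}$ collapses every incremental cost $c_M(t_k^2-t_{k-1}^2)$ to a single constant $c := c_M t_1^2$, so the telescoped sum gives
\[
\ins{\varpi_i(j_1,j_2)}{\pi} \;=\; \Pr[pw_i]\, Q\bigl(v + c\,\beta(i)\bigr) + c\,\Pr[pw_i]\, S_Q - cT\bigl(1-\alpha(i)\bigr),
\]
where $T := j_2-j_1+1$, $S_Q := \sum_{k=j_1}^{j_2}\sum_{k'<k} q_{k'}$, $\alpha(i) := W_{<i}(\pi)$ (the $\Pr$-mass of $\pi$-instructions for passwords preceding $pw_i$), and $\beta(i) := N_{>i}(\pi)$ (the count of $\pi$-instructions for passwords following $pw_i$).

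Subtracting the two closed forms and writing $p_k := \Pr[pw_{i_k}]$,
\[
\ins{e_1}{\pi} - \ins{e_2}{\pi} = (p_1-p_2)(Qv + cS_Q) + cQ\bigl(p_1\beta(i_1) - p_2\beta(i_2)\bigr) + cT\bigl(\alpha(i_1)-\alpha(i_2)\bigr).
\]
When $i_1 > i_2$ the first two summands are non-positive, since both $\Pr[pw_i]$ and $\beta(i)$ are non-increasing in $i$, and only the third summand can be positive. I would then expand $\alpha(i_1)-\alpha(i_2) = \sum_{j=i_2}^{i_1-1} \Pr[pw_j]\,\sigma_j$ and $\beta(i_2)-\beta(i_1) = \sum_{j=i_2+1}^{i_1}\kappa_j$, where $\sigma_j$ and $\kappa_j$ are respectively the aggregate $q$-weight and the count of $pw_j$'s labels already present in $\pi$, and absorb $p_1\beta(i_1)-p_2\beta(i_2) = (p_1-p_2)\beta(i_1) - p_2\sum_{j=i_2+1}^{i_1}\kappa_j$ to rewrite the difference as
\[
(p_1-p_2)\bigl(Qv + cS_Q + cQ\beta(i_1)\bigr) + cT\sum_{j=i_2}^{i_1-1}\Pr[pw_j]\,\sigma_j - cQ p_2\sum_{j=i_2+1}^{i_1}\kappa_j,
\]
whose leading piece is strictly negative whenever $p_1<p_2$.

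To conclude I would show that the remainder $cT\sum_{j=i_2}^{i_1-1}\Pr[pw_j]\,\sigma_j - cQ p_2\sum_{j=i_2+1}^{i_1}\kappa_j$ is non-positive by a term-by-term pairing across the intermediate indices $j\in(i_2,i_1)$, using $\Pr[pw_j]\le p_2$ from natural ordering together with the flat-cost identity from $\beta_k=\sqrt k$ to compare a $T$-weighted probability against a $Q$-weighted count; the boundary contributions at $j=i_2$ (present only on the $\alpha$-side) and $j=i_1$ (present only on the $\beta$-side) are absorbed into the strict slack $(p_1-p_2)(Qv+cS_Q+cQ\beta(i_1))$. The main obstacle is precisely this boundary bookkeeping---the index ranges $[i_2,i_1-1]$ and $[i_2+1,i_1]$ are offset by one---and the cleanest route is to first establish the lemma for consecutive pairs $(i,i+1)$, in which each sum collapses to a single term and the inequality reduces to an easy consequence of $\Pr[pw_i]>\Pr[pw_{i+1}]$ together with $\kappa_{i+1}\ge 1$ (guaranteed whenever $\pi$ is compatible with both $\varpi_i(j_1,j_2)$ and $\varpi_{i+1}(j_1,j_2)$), and then telescope along the indices between $i_2$ and $i_1$.
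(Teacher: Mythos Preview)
Your overall plan (contrapositive; expand the bundle marginal by iterating equation~\eqref{eq:ins} under the flat per-step cost $c=c_Mt_1^2$ coming from $\beta_k=\sqrt{k}$; subtract and isolate a strictly negative leading piece plus a remainder) is the same strategy the paper uses, and your closed form is correct. The paper carries out the same subtraction more tersely, writing a bundle-level analogue of~\eqref{eq:ins} and reducing the sign question to the single inequality $\sum_{e_2<e<e_1,\,e\in\pi}\bigl(\Pr(e)-\Pr(e_2)\bigr)\le 0$; your $\alpha/\beta$ bookkeeping is a more explicit unpacking of that same remainder.

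The gap is in your treatment of the remainder. After using $\Pr[pw_j]\le p_2$, your term-by-term pairing needs $T\sigma_j\le Q\kappa_j$, which says the average $q$-weight over the labels $pw_j$ already holds in $\pi$ is at most the average $Q/T$ over the bundle range $[j_1,j_2]$. Nothing in the hypotheses enforces this: compatibility only pins $\tau_{i_1}=\tau_{i_2}=j_1-1$, so all of $\sigma_j$ can sit on labels \emph{below} $j_1$, where the $q$-mass may dwarf $Q$. Even your proposed base case (consecutive $i_1=i_2+1$) breaks: with $m=3$, $\vec q=(0.9,0.05,0.05)$, password probabilities $(0.4,0.35,0.25)$, $\pi$ checking only $t_1$ for each, and $(j_1,j_2)=(2,3)$, one computes $\ins{\varpi_3(2,3)}{\pi}=0.025v-0.1875$ versus $\ins{\varpi_2(2,3)}{\pi}=0.035v-0.5975$, so for every $v<41$ the consecutive pair $(i_2,i_1)=(2,3)$ already has $\ins{e_1}{\pi}>\ins{e_2}{\pi}$ with $\Pr(e_1)<\Pr(e_2)$. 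The ``strict slack'' $(p_1-p_2)\bigl(Qv+cS_Q+cQ\beta(i_1)\bigr)$ you hope to absorb boundary terms into is of order $Q$, while the boundary contribution $cTp_2Q_{<}$ is of order $TQ_{<}$; when $Q\ll Q_{<}$ (as above, $Q=0.1$, $Q_{<}=0.9$) the slack cannot absorb it. The paper's one-line bound rests on the same per-instruction comparison $\Pr[pw_j]q_k\le \Pr[pw_{i_2}]Q$ and is vulnerable to the identical configuration, so the obstruction is intrinsic to the remainder rather than to your particular route; neither a termwise pairing nor the telescoping-from-consecutive plan can close it without an additional hypothesis on $\vec q$ or on the structure of $\pi$.
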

\begin{proof}
	Proof by contradiction. Let $c_M(t_{j_2}^2-t_{j_1}^2) = c$. Suppose $\Pr(e_1) < \Pr(e_2)$, then 
	$e_1 > e_2$, we have 
\small
\begin{equation*}
	\begin{aligned}
	&\ins{e_1}{\pi} - \ins{e_2}{\pi}\\
	&= \Pr(e_1)\left(v + \sum_{e > e_1, e\in \pi}c\right) - \left(1-\sum_{e<e_1, e\in \pi}\Pr(e)\right)c\\ 
	&- \Pr(e_2)\left(v + \sum_{e > e_2, e\in \pi}c\right) + \left(1-\sum_{e<e_2, e\in \pi}\Pr(e)\right)c\\
	&=  \left(\Pr(e_1) - \Pr(e_2)\right)\left(v + \sum_{e>e_1, e\in \pi} c\right)\\ 
	&- \Pr(e_2)\sum_{e_2<e\leq e_1, e\in \pi}c + c\sum_{e_2\leq e<e_1, e\in \pi}\Pr(e)\\
	& =\left(\Pr(e_1) - \Pr(e_2)\right)\left(v + \sum_{e>e_1, e\in \pi} c\right)\\
	& + c\sum_{e_2<e<e_1} \Pr(e) - \Pr(e_2)\\
	& \leq \left(\Pr(e_1) - \Pr(e_2)\right)\left(v + \sum_{e>e_1, e\in \pi} c\right) < 0,
	\end{aligned}
\end{equation*}
\normalsize
which contradicts the precondition $\ins{e_1}{\pi} \geq \ins{e_2}{\pi}$
\end{proof}

Now we prove Theorem \ref{theorem:goodsetcandidate}.
\begin{proof}
	In this proof $S_c(i, \pi)$ is written in $S_c(i)$ for simplicity.
	Let $S = \{e_1^{\prime}, e_2^{\prime}, \ldots, e_i^{\prime}\}$.
	Since $S\subseteq \unchecked{\pi}$ and $S_c(i)\subseteq \unchecked{\pi}$, instruction bundles in $S$ and $S_c(i)$ have the same size. Let $c(e) = c, \forall e\in S_c(i),\, \forall e\in S$. Therefore, by definition of $e_i$ and Lemma \ref{lemma:marginlambda} we have $\Pr(e_j)\geq \Pr(e_j^{\prime}),\, \forall j\leq i$.
	 
	Theorem \ref{theorem:goodsetcandidate} equivalently claims
	\small
	\begin{equation*}
		\label{eq:goodset}
	\ins{S_c(i)}{\pi} \geq \ins{S_c(i-n)\cup \{e_{i-n+1}^{\prime},\ldots,e_i^{\prime}\}}{\pi}, \forall n.
	\end{equation*}
	\normalsize
	We will use mathematical induction to prove the above inequalities.
	Base case $n=1$:
	\small
	\begin{equation*}
		\begin{aligned}
		&\ins{S_c(i)}{\pi}\\
		&= \ins{S_c(i-1)}{\pi} + \ins{e_i}{\pi + S_c(i-1)}\\
		& \geq \ins{S_c(i-1)}{\pi} + \ins{e_i^{\prime}}{\pi + S_c(i-1)}\\
		& = \ins{S_c(i-1) \cup \{e_i^{\prime}\}}{\pi}.
		\end{aligned}
	\end{equation*}
	\normalsize
The inequality holds because of the definition of $e_i$ i.e., $e_i = \arg\max_{e} \ins{e}{\pi + S_c(i-1)}$. 

Inductive hypothesis: equation \eqref{eq:goodset} holds true for $n=k$.

Inductive step: in the following we prove that equation \eqref{eq:goodset} holds true for $n=k+1$.
\small
\begin{equation*}
	\begin{aligned}
		&\ins{S_c(i)}{\pi}\\
		&\geq \ins{S_c(i-k)\cup \{e_{i-k+1}^{\prime},\ldots,e_i^{\prime}\}}{\pi}\\
		& = \ins{S_c(i-k-1)}{\pi} + \ins{e_k}{\pi + S_c(i-k-1)} \\
		& + \ins{\{e_{i-k+1}^{\prime},\ldots,e_i^{\prime}\}}{\pi + S_c(i-k-1) + e_k}\\
		& \geq \ins{S_c(i-k-1)}{\pi} + \ins{e_k^{\prime}}{\pi + S_c(i-k-1)}\\
		& + \ins{\{e_{i-k+1}^{\prime},\ldots,e_i^{\prime}\}}{\pi + S_c(i-k-1) + e_k}\\
		& \geq \ins{S_c(i-k-1)}{\pi} + \ins{e_k^{\prime}}{\pi + S_c(i-k-1)}\\
		& + \ins{\{e_{i-k+1}^{\prime},\ldots,e_i^{\prime}\}}{\pi + S_c(i-k-1) + e_k^{\prime}}\\
		& = \ins{S_c(i-k-1) \cup \{e_{i-k}^{\prime}, \ldots, e_i^{\prime}\}}{\pi}.\\
	\end{aligned}
\end{equation*}
\normalsize
The first inequality is inductive hypothesis; the second inequality holds because of the definition of $e_k$; the third inequality is the result of Lemma \ref{lemma:marginal} given $\Pr[e_k]\geq \Pr[e_k^{\prime}]$.

\end{proof}

We can loop over all candidates $S_c(i, \pi_{LO}), \, \forall i$ and find the \emph{good set w.r.t insertion} $S_c^*(\pi_{LO})$, which provides the largest marginal utility for $\pi_{LO}$, namely,
\small\begin{equation}
	S_c^*(\pi_{LO}) = \argmax_{S_c(i,\pi_{LO})} \ins{S_c(i, \pi_{LO})}{\pi_{LO}}.
\end{equation}\normalsize

A good set $S^*$ might contain instruction bundles that can be concatenated to $\pi_{LO}$, to handle this case we also need to loop over $len$---possible length of $\pi^*$. The efficient brute force algorithm $\FG{\pi}$ is present in Algorithm \ref{algo:FG}.

\begin{algorithm}[h]\small
		\KwIn{$v$, $\vec{q}$, $\pi$}
		\KwOut{$\pi^*$}
		\BlankLine
		
		$U^* = \margin{\emptyset}{\pi}$\;
		\For{$len = \len{\pi}: \lenmax$}
		{
			\If{$len > \len{\pi}$}
			{
				$\pi = \pi\circ \varpi_{len}(1,peak_1)$\;
			}
			\ForEach{$i$}
			{
				construct $S_c(i, \pi)$\;
				compute $\ins{S_c(i,\pi}{\pi}$\;
			}
			$S_c^*(\pi) = \argmax_i \ins{S_c(i, \pi}{\pi}$\;
			\If{$\margin{\emptyset}{\pi + S_c^*(\pi)} \geq U^*$}
			{
				$\pi^* = \pi + S_c^*(\pi)$\;
				$U^* = \margin{\emptyset}{\pi + S_c^*(\pi)}$\;
			}
		}
		\KwRet{$\pi^*$}
		\caption{$\FG{\pi}$}
		\label{algo:FG}
\end{algorithm}

Suppose $|\unchecked{\pi}| = n$, finding $e_1$ takes time $\mathcal{O}(n)$; finding $e_2$ takes time $\mathcal{O}(n-1)$, etc. Thus, the inner loop in Algorithm \ref{algo:FG} takes time $\mathcal{O}(n^2)\subseteq \mathcal{O}(n_p^2)$,  and the total running time is $\mathcal{O}(n_p^2\left(\lenmax -\len{\pi}\right)$. \footnote{In implementation we can exploit opt restriction 3 and the fact that ordering of $e_i$ is mainly unchanged after update $\len{\pi}$ to further reduce running time. We also remark in experiments it is (almost) always the case $\pi^* = \pi_{LO} + S_c^*(\pi_{LO})$ and for most of the time $S_c^*(\pi_{LO}) = \emptyset$.}

\begin{restatable}{theorem}{restatefg}
	\label{theorem:fg}
When $|\peak| = 2$ and $\beta_i = \sqrt{i}$, $\FG{\pi_{LO}}$ returns the optimal checking sequence.
\end{restatable}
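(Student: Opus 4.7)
The plan is to show that the optimal checking sequence $\pi^*$ admits a canonical decomposition that is explicitly enumerated by the nested loops of $\FG{\pi_{LO}}$, after which Theorem \ref{theorem:goodsetcandidate} certifies that the correct candidate is constructed. Let $\ell^* \defeq \len{\pi^*}$ for brevity.

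First I would combine Theorem \ref{theorem:EX} with Lemma \ref{lemma:peakTwo} to constrain the shape of $\pi^*$. We have $\pi_{LO} \subseteq \pi^*$, and since $\peak = \{peak_1, m\}$, every password index $i \leq \ell^*$ satisfies $\tau_i^* \in \{peak_1, m\}$ by Lemma \ref{lemma:peakTwo}; the same holds for $\pi_{LO}$ by Lemma \ref{lemma:peak}. The additional instructions in $\pi^* \setminus \pi_{LO}$ therefore split into: (i) whole-password bundles $\varpi_j(1, \tau_j^*)$ with $\tau_j^* \in \{peak_1, m\}$ for $\len{\pi_{LO}} < j \leq \ell^*$, and (ii) promotion bundles $\varpi_i(peak_1, m)$ for indices $i \leq \len{\pi_{LO}}$ where $\pi_{LO}$ has $\tau_i = peak_1$ but $\pi^*$ has $\tau_i^* = m$.

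Second I would rewrite each bundle $\varpi_j(1, m)$ appearing in family (i) as the concatenation $\varpi_j(1, peak_1) \circ \varpi_j(peak_1, m)$ and reclassify the suffix as a promotion bundle. After this rearrangement $\pi^* = \pi_{\ell^*} \cup S^*$, where
\[
\pi_{\ell} \defeq \pi_{LO} \circ \varpi_{\len{\pi_{LO}}+1}(1, peak_1) \circ \cdots \circ \varpi_{\ell}(1, peak_1)
\]
for $\ell \in [\len{\pi_{LO}}, \lenmax]$, and $S^*$ is an ordered set drawn from $\unchecked{\pi_{\ell^*}}$ consisting solely of promotion bundles $\varpi_i(peak_1, m)$. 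The outer loop of $\FG{\pi_{LO}}$ builds exactly the sequences $\pi_{\ell}$ one concatenation at a time, so in particular the iteration $\ell = \ell^*$ examines $\pi_{\ell^*}$. Within that iteration the algorithm enumerates the candidates $\pi_{\ell^*} + S_c(i, \pi_{\ell^*})$ over all feasible sizes $i$. Theorem \ref{theorem:goodsetcandidate} certifies that $S_c(i, \pi_{\ell^*})$ maximizes $\ins{\cdot}{\pi_{\ell^*}}$ among all ordered subsets of $\unchecked{\pi_{\ell^*}}$ of size $i$, so specializing to $i = |S^*|$ yields $\ins{S_c(|S^*|, \pi_{\ell^*})}{\pi_{\ell^*}} \geq \ins{S^*}{\pi_{\ell^*}}$. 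Hence the candidate $\pi_{\ell^*} + S_c^*(\pi_{\ell^*})$ attains utility at least $U_{adv}(v, \vec{q}, \pi^*)$; by optimality of $\pi^*$ equality holds and $\FG{\pi_{LO}}$ outputs a globally optimal sequence.

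The main technical obstacle is the decomposition step: I must verify that replacing $\varpi_j(1, m)$ with $\varpi_j(1, peak_1) \circ \varpi_j(peak_1, m)$ and re-attributing the suffix as an insertion into $\pi_{\ell^*}$ yields the same checking sequence (not merely the same utility), and that each such promotion bundle truly lies in $\unchecked{\pi_{\ell^*}}$. Natural ordering handles the first point because both constructions visit the same instructions in the same order, while the second follows directly from the definition of $\unchecked{\cdot}$ applied to $\pi_{\ell^*}$, since every index $i \leq \ell^*$ with $\tau_i = peak_1$ in $\pi_{\ell^*}$ contributes exactly the bundle $\varpi_i(peak_1, m)$ to $\unchecked{\pi_{\ell^*}}$. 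The rigidity imposed by $|\peak|=2$ together with $\beta_i = \sqrt{i}$ (via Lemma \ref{lemma:peakTwo}) is what permits this clean reduction of a seemingly exponential global search to the single enumerated pair $(\ell^*, S_c^*(\pi_{\ell^*}))$.
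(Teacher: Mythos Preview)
Your proposal is correct and takes essentially the same approach as the paper: both arguments use Lemmas \ref{lemma:peak} and \ref{lemma:peakTwo} to pin down the shape of $\pi^*$, invoke Theorem \ref{theorem:goodsetcandidate} to justify the inner maximization over sizes, and rely on the outer loop over candidate lengths $\ell \in [\len{\pi_{LO}}, \lenmax]$ to exhaust all possibilities. Your version is simply more explicit about the decomposition $\pi^* = \pi_{\ell^*} \cup S^*$ (splitting the extra instructions into appended prefixes and promotion suffixes), which the paper's proof handles tersely by appealing to equation \eqref{eq:optseq} and the statement that the outer loop ``traverses all possible $\len{\pi^*}$.''
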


\begin{proof}
		Lemma \ref{lemma:peak} and \ref{lemma:peakTwo} restrict a good set to be a subset of $\unchecked{\pi}$.
		 Theorem \ref{theorem:goodsetcandidate} guarantees that $S_c(i, \pi)$ is ``better'' than any other set $S$ of the same size, namely,
		 \small
		 $$
		 \margin{\emptyset}{\pi+S_c(i, \pi)} \geq \margin{\emptyset}{\pi+S}, \forall S \subseteq \unchecked{\pi}.
		 $$
		 \normalsize
		 By definition of $S_c^*(\pi)$, we have
		 \small
		 $
		 \margin{\emptyset}{\pi+S_c^*(\pi)} \geq  \margin{\emptyset}{\pi+S_c(i, \pi)}, \forall i.
		 $
		 \normalsize
		 The outer loop of $\FG{\pi}$ traverses all possible $\len{\pi^*}$ and returns the $\pi+S_c^*(\pi)$ with largest utility. By equation \eqref{eq:optseq} the the returned checking sequence is optimal.
\end{proof}

We could potentially run $\FG{\emptyset}$ to find the optimal checking sequence $\pi^*$. As a shortcut, we run $\FG{\pi_{LO}}$ instead to reduce the running time.
\begin{corollary}
	When $m\leq 3$ and $\beta_i=\sqrt{i}$, There are polynomial algorithms that always find the optimal checking sequence.
\end{corollary}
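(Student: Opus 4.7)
The plan is to perform a case split on $|\peak|$. By the observation stated immediately before the corollary, for $m\in\{1,2,3\}$ the set $\peak$ is either a singleton $\{m\}$ or a pair $\{\mathit{peak}_1,m\}$, so $|\peak|\in\{1,2\}$. For each case I will exhibit (and verify the polynomial running time of) an algorithm that outputs $\pi^*$.

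\paragraph*{Case 1: $|\peak|=2$.} Here the hypotheses of Theorem~\ref{theorem:fg} are exactly met. The algorithm is: run $\EX$ to obtain $\pi_{LO}$ in time $\mathcal{O}(n_p m)$, then run $\FG{\pi_{LO}}$ in time $\mathcal{O}(n_p^2(\lenmax-\len{\pi_{LO}}))$. By Theorem~\ref{theorem:fg} the output equals $\pi^*$, and the total running time is polynomial in $n_p$ and $m$.

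\paragraph*{Case 2: $|\peak|=1$, i.e.\ $\peak=\{m\}$.} By Lemma~\ref{lemma:peakTwo} every $\tau_i^*$ in $\pi^*$ lies in $\{0,m\}$, so the problem collapses to choosing, for each password in natural order, whether to check all $m$ labels or none. I would invoke $\EXC{\emptyset}$, which runs in $\mathcal{O}(n_p m)$. To argue its output equals $\pi^*$, I would use two structural facts implied by the case hypotheses: (a) $\peak=\{m\}$ forces $q_1\leq q_2\leq\cdots\leq q_m$, and (b) $\beta_i=\sqrt{i}$ makes the incremental cost $c_M(t_{j+1}^2-t_j^2)=c_M t_1^2$ a constant in $j$. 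Together these imply that whenever appending $(pw_i,t_j)$ gives non-negative concatenation marginal utility, so does appending $(pw_i,t_{j+1})$; hence the inner $\arg\max$ of Algorithm~\ref{alg:EXbyConcat} picks $\tau_i\in\{0,m\}$ at every step. A short greedy-exchange argument, leaning on opt restriction~2 (password backward continuity) to rule out ``skipping'', then upgrades concatenation-locality to full optimality in this reduced deterministic problem.

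\paragraph*{Overall algorithm and main obstacle.} The combined procedure is: inspect $\vec q$ to compute $|\peak|$ in $\mathcal{O}(m)$; if $|\peak|=2$ run $\EX$ followed by $\FG{\pi_{LO}}$, otherwise run $\EXC{\emptyset}$. Every component is polynomial, which proves the corollary. The main obstacle is Case~2: the paper's informal remark that $|\peak|=1$ ``corresponds to the uniform distribution'' is stronger than what $\peak=\{m\}$ actually forces, so I must work with the weaker condition (monotonicity of $\vec q$). The key technical step is combining that monotonicity with the constancy of the per-label cost to exclude interior $\tau_i\in(0,m)$ and thereby legitimately reduce the instance to the deterministic-hashing greedy, where $\EXC{\emptyset}$'s natural-order sweep is optimal. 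Case~1, by contrast, is essentially a direct invocation of Theorem~\ref{theorem:fg}.
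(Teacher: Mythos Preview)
Your case split on $|\peak|$ is exactly the paper's argument: it invokes $\FG{\pi_{LO}}$ when $|\peak|=2$ (via Theorem~\ref{theorem:fg}) and $\EXC{\emptyset}$ when $|\peak|=1$. Your Case~1 is identical to the paper's. In Case~2 you are in fact sharper than the paper: you correctly note that $\peak=\{m\}$ only forces $q_1\le\cdots\le q_m$, not the uniform distribution the paper's text asserts, and your monotonicity-plus-constant-cost argument that the inner $\arg\max$ in Algorithm~\ref{alg:EXbyConcat} always lands in $\{0,m\}$ is valid.

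The gap is your final sentence. The ``short greedy-exchange argument'' does not go through, because concatenation-local optimality does \emph{not} imply global optimality even after the problem has been reduced to choosing $\tau_i\in\{0,m\}$. Concretely, take $m=2$, $\beta_i=\sqrt{i}$, $q_1=q_2=\tfrac12$, $c_Mt_1^2=1$, three passwords with probabilities $0.4,0.3,0.3$, and $v=4$. Then $\con{\varpi_1(1,2)}{\emptyset}=0.4\cdot 4-1.8=-0.2<0$, so $\EXC{\emptyset}$ halts immediately with $\pi_{LOC}=\emptyset$; yet $U_{adv}(\Pi(3,2))=4-3.3=0.7>0$, so $\pi^*=\Pi(3,2)\neq\pi_{LOC}$. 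The phenomenon is that checking $pw_1$ is unprofitable in isolation but lowers the residual mass enough to make the remaining bundles profitable, and greedy concatenation cannot see past the first negative step. (The paper's own justification for this case, via Theorem~\ref{theorem:copt}, suffers from the same defect: its ``In other words'' step silently assumes $\con{\varpi_i(1,m)}{\Pi(i-1,m)}$ is monotone in $i$, which the paper elsewhere admits is unproven.)

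A clean polynomial fix for Case~2, which you could substitute for the greedy-exchange sentence, is simply to observe that once $\tau_i^*\in\{0,m\}$ is forced by Lemma~\ref{lemma:peakTwo}, opt restrictions~2 and~3 pin $\pi^*=\Pi(L,m)$ for some $L\in\{x_0,x_1,\ldots,x_{n_e}\}$; enumerate these $n_e+1$ candidates and return the one maximizing $U_{adv}$, in time $\mathcal{O}(n_e\cdot n_p m)$.
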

If $|\peak| = 1$, $\EXC{\emptyset}$ returns the optimal checking sequence; if $|\peak|=2$, $\FG{\pi_{LO}}$ returns the optimal checking sequence.

\subsection{Proof of Other Lemmas}

\subsubsection{Proof of Lemma \ref{lemma:marginal} }
\resmarginal*
\begin{proof}
Suppose $e_i$ is the $i$th instruction bundle in $\pi_3$, from equation \ref{eq:con} we have
\small
\begin{equation*}
	\con{e_1}{\pi_1} = \Pr[pw_i]q_j v - \left(1-\lambda(\pi_1)\right) c_M(t_j^2 - t_{j-1}^2),
\end{equation*}
\normalsize
and
\small
\begin{equation*}
	\con{e_1}{\pi_2} = \Pr[pw_i]q_j v - \left(1-\lambda(\pi_2)\right) c_M(t_j^2 - t_{j-1}^2).
\end{equation*}
\normalsize
Thus, 
\small
\begin{equation*}
	\con{e_1}{\pi_1} \leq \con{e_1}{\pi_2}, 
\end{equation*}
\normalsize
similarly, 
$$\con{e_i}{\pi_1\circ \oplus_{i^{\prime}=1}^{i-1} e_{i^{\prime}}} \leq \con{e_i}{\pi_2\circ \oplus_{i^{\prime}=1}^{i-1} e_{i^{\prime}}}.$$
Therefore, we have
\small\begin{equation*}
\begin{aligned}
\con{\pi_3}{\pi_1} &= \sum_i \con{e_i}{\pi_1\circ \oplus_{i^{\prime}=1}^{i-1} e_{i^{\prime}}}\\
&\leq \sum_i  \con{e_i}{\pi_2\circ \oplus_{i^{\prime}=1}^{i-1} e_{i^{\prime}}} = \con{\pi_3}{\pi_2}
\end{aligned}
\end{equation*}
\normalsize
\end{proof}

% \subsubsection{Proof of Lemma \ref{lemma:test}}
% \restest*
% \begin{proof}
% \small
% \begin{equation*}
% \begin{aligned}
% &\mar{e_b}{\pi_{LO} \cup {S\setminus e_b} } =  \mar{e_b}{\pi_{LO}} + \sum_{e\in S\setminus e_b} \Pr[e]c(e_b),\\
% &\leq \mar{e_b}{\pi_{LO}}  + \sum_{e\in \before{e_b}{\pi_{LO}}} \Pr[e]c(e_b)\\
% &=\mathsf{test}(e_b).
% \end{aligned}
% \end{equation*}
% \normalsize
% If $\mathsf{test}(e_b) < 0$, then  $\mar{e_b}{\pi_{LO} \cup {S\setminus e_b} } < 0$, $S$ cannot be a good set.
% \end{proof}

% \subsubsection{Proof of Lemma \ref{lemma:peak}}
% \respeak*
% \begin{proof}
% Suppose $\tau_i = j \neq m$, since $(pw_i,t_j)\in \pi_{LO}$, then 
% \small
% $$\Delta\left(\pi_{LO}-(pw_i, t_j), \pi_{LO}\right)\geq 0,$$
% \normalsize
% where $-$ is removal operation. Since $j$ is not a peak, then we have $q_{j+1} \geq q_j$ which leads to 
% \small
% $$\margin{(pw_i, t_{j+1})}{\pi_{LO}} > \margin{\pi_{LO}-(pw_i, t_j)}{\pi_{LO}}.$$
% \normalsize
% Therefore $\Delta\left(\pi_{LO}, \pi_{LO} + (pw_i, t_{j+1})\right)>0$. It is still profitable to advance a label for $pw_i$ i.e., check $(pw_i, t_{j+1})$, so $(pw_i,t_{j+1})$ should have been included into $\pi_{LO}$ in local search. Contradiction.
% \end{proof}

\subsection{Proof of Other Theorems}

\subsubsection{Proof of Theorem \ref{theorem:superset}}
\restatesuperset*
\begin{proof}
	Given $\pi^* = \oplus_{i^{\prime}=1}^{i^*} \varpi_{i^{\prime}}(1, \tau_{i^{\prime}})$, suppose there exists $\varpi_i(1,\tau_i) \subseteq \pi^*$, for some $i > \lenmax, \tau_i > 0$,
	we have 
	\small
	\begin{equation*}
	\begin{aligned}
	&\con{\vp{i}} {\op{i-1}}\\
	&\leq \max_{1\leq j \leq m} \{\con{\varpi_i(1,j)} {\op{i-1}}\}\\
	&\leq \max_{1\leq j \leq m} \{\con{\varpi_i(1,j)}{\Pi(i-1,m)}\}\\&< 0,
	\end{aligned}
	\end{equation*}
	\normalsize
	we can safely remove instructions $\varpi_i(1,\tau_i)$ for all $i > \lenmax $ from $\pi^*$ to obtain another checking sequence that yields a better utility. Contradiction. 
	\end{proof}

\subsubsection{Proof of Theorem \ref{theorem:concat}}
\restateconcat*
\begin{proof}

	Suppose $\pi^* = \op{i^*}$ and $\varpi_i(\tau_i+1, j) \in \pi_{LOC}$ is the first instruction bundle that $\pi^*$ and $\pi_{LOC}$ disagree.
	
	Split $\pi^*$ into two parts $\pi_a^*$ and $\pi_b^*$ where $\pi_a^*$ is the sub-sequence from the beginning of $\pi^*$ to instruction $(pw_i, t_{\tau_i})$ inclusive and $\pi_b^*$ is remaining checking sequence. Formally, 
	
	$$
	\pi_a^*\defeq \oplus_{i^{\prime}=1}^{i} \varpi_{i^{\prime}}(1, \tau_{i^{\prime}}),
	$$ 
	and
	$$
	\pi_b^*\defeq \oplus_{i^{\prime}=i+1}^{i^*} \varpi_{i^{\prime}}(1, \tau_{i^{\prime}}).
	$$
	
	Since $\varpi_i(\tau_i+1, j) \in \pi_{l0}$
	$$ 
	\con{\varpi_i(\tau_i+1,j)}{\pi_a^*} \geq 0.
	$$
	
	The attacker can obtain another checking sequence $\sigma = \pi_a^* \circ \varpi_i(\tau_i+1,j) \circ \pi_b^*$. Note that by Lemma \ref{lemma:marginal}, we have
	\small
	\begin{equation*}
		\con{\pi_b^*}{\pi_a^*\circ \varpi_i(\tau_i+1,j)} > \con{\pi_b^*}{\pi_a^*}
	\end{equation*}
	\normalsize
	Adding both sides of above two inequalities, then both sides of the obtained inequality are added by $\margin{\emptyset}{\pi_a^*}$, we have  
	\small
	\begin{equation*}
	\begin{aligned}
	U_{adv}(v,\vec{q},\sigma) &= \margin{\emptyset}{\pi_a^*} + \con{\varpi_i(\tau_i+1,j)}{\pi_a^*}\\
	& + \con{\pi_b^*}{\pi_a^*\circ \varpi_i(\tau_i+1,j)} \\ 
	& > U_{adv}(v,\vec{q},\pi^*),
	\end{aligned}
	\end{equation*}
	\normalsize
	contracting optimality of $\pi^*$.
	\end{proof}

\subsubsection{Proof of Theorem \ref{theorem:copt}}
\restatecopt*
	\begin{proof} 
		Given these parameters it is easy to verify that 
		\small
		\begin{equation*}
		\begin{aligned}
		&\con{\varpi_i(1,1)}{\op{i-1}}\\
		& < \con{\varpi_i(2,2)}{\op{i-1}\circ \varpi_i(1,1)}\\
		&<\cdots\\
		& < \con{\varpi_i(m,m)}{\op{i-1}\circ \varpi_i(1,m-1)}\\
		\end{aligned}
		\end{equation*}
		\normalsize
		Therefore 
		\small
		\begin{equation*}
		\begin{aligned}
		&\max_{0\leq j\leq m}\left\{ \con{\varpi_i(0,m)}{\op{i-1}}\right\}\\
		&=\max\left\{0,\con{\varpi_i(1,m)}{\op{i-1}} \right\}
		\end{aligned}
		\end{equation*}
		\normalsize
		Algorithm $\EXC{\emptyset}$ will set $\tau_i = m$ for $i \leq \len{\pi_{LOC}}$ and  $\tau_i = 0$ for $i > \len{\pi_{LOC}}$.
		In other words,
		\small
		\begin{equation*}
		\begin{cases}
		\con{\varpi_i(1,m)}{\Pi(i-1,m)} \geq 0, \text{ if } i\leq \len{\pi_{LOC}},\\
		\con{\varpi_i(1,m)}{\Pi(i-1,m)} < 0, \text{ if }i>\len{\pi_{LOC}}.
		\end{cases}
		\end{equation*}
		\normalsize
		Those are also the criterion of defining $\lenmax $ under current parameter settings, hence $\len{\pi_{LOC}}=\lenmax $. Moreover, the superset and subset of $\pi^*$ are identical, i.e., $\Pi(\lenmax , m) = \pi_{LOC}$. Since there are no unchecked instruction bundle $\bundle$ for $i\leq \lenmax $, we have $\pi_{LOC} = \pi^* = \Pi(\lenmax ,m)$.
		\end{proof}

\subsubsection{Proof of Theorem \ref{theorem:costeven}}
\restatecosteven*
\begin{proof}
	When using uniform breakpoint distribution, the execution trace of the attacker's cracking process is similar to deterministic hashing, i.e., sequentially eliminating the possibility of $pw_u = pw_i$ (or verifying it with any luck) for $i = 1, 2, \ldots$. It can be verified (see Appendix \ref{app:marginalcost}) that
	\small
	\begin{equation*}
	C_{adv}^u(\varpi_{i}(1,m) |\Pi(i-1,m)) \geq C_{adv}^d(pw_i |\oplus_{i^{\prime}=1}^{i-1} pw_{i^{\prime}}), \;\forall i,
	\end{equation*}
	\normalsize
	where $C_{adv}^u(\varpi_{i}(1,m) |\Pi(i-1,m))$ is the marginal cost of checking $\varpi_{i}(1,m)$, given $\Pi(i-1,m)$ has already been checked. $C_{adv}^d(pw_i |\oplus_{i^{\prime}=1}^{i-1} pw_{i^{\prime}})$ is the marginal cost of checking $pw_i$ under deterministic cost hashing given passwords $pw_1,\ldots,pw_{i-1}$ have been checked.
	In order to achieve the same success rate, uniform cost-even breakpoints would incur more cost than deterministic cost hashing. See it in another way, when $v/C_{max}$ is fixed for both cases uniform cost-even breakpoints results in a lower adversary success rate.
	\end{proof}

\subsubsection{Proof of Theorem \ref{theorem:EX}}
\restateex*
\begin{proof}
	In proof of Theorem \ref{theorem:concat} we already know $\circ$ operation preserves the following invariant 
	$$ \pi \circ e \subseteq \pi^*, \text{ if } \pi \subseteq \pi^* \text{ and }\con{e}{\pi} \geq 0.$$
	Lemma \ref{lemma:plus} states
	$$ \pi + e \subseteq \pi^*, \text{ if } \pi \subseteq \pi^* \text{ and }\ins{e}{\pi} \geq 0.$$
	$\pi_{LO}$ is obtained by iteratively applying $\circ$ and $+$ operation, hence  is a subset of $\pi^*$
	\end{proof}

\subsection{Derivative-Free Optimization}
There are many derivative-free optimization solvers available in the literature, generally they fall into two catagorites, deterministic algorithm (such as Nelder-Mead) and evolutionary algorithm (such as BITEOPT \cite{biteopt2021} and CMA-EA \cite{Hansen06thecma}). $\mathsf{OptPepperDis}()$ takes password value $v$ as input and outputs optimal pepper distribution $\vec{q}^*$ and attacker's success rate $P_{adv}^*$ when playing with best response given defender's strategy $\vec{q}^*$. During one iteration of  $\mathsf{OptPepperDis}()$, some candidate pepper distributions $\{\vec{q}_{c_i}\}$ are proposed,  together they are referred as \emph{population}. Then the algorithm $\mathsf{BestRes}(v, \vec{q}_{c_i})$  is called as a subroutine for each member of population, and the returned $P_{adv}$ is recorded as ``fitness''. At the end of each iteration, the population is updated according to fitness of its' members, the update could be either through deterministic transformation (Nelder-Mead) or randomized evolution (BITEOPT, CMA-EA). When the iteration number reaches a pre-defined value $ite$, the best fit member $\vec{q}^*$ and its fitness $P_{adv}^*$ are returned.

\subsection{Results of other datasets}
% !TEX root = main.tex
\tikzDefaultsTime
\begin{figure*}[ht]\centering
\subfloat[Bfield]{
%%% General Lower Bound with indegree = 2,3,4,5 and n=2^24, 2^28,..., 2^60
\begin{tikzpicture}[scale=0.45]
\begin{semilogxaxis}[]
%bfield

\addplot file {./newfig/sr/benchmark/bfield_benchmark.dat};

\addplot file {./newfig/sr/timeeven_m3/bfield_timeeven_m3.dat};

\addplot+[only marks] coordinates
{
(300000,0.460015)
(400000,0.505917)
(500000,0.505917)
};

\path[name path=B] (axis cs:(9 * 1e7,0) -- (axis cs:(9 * 1e7,1);
\path[name path=A] (axis cs:(300000,0) -- (axis cs:(300000,1);
\tikzfillbetween[of=A and B, on layer=main]{red, opacity=0.2};

\path[name path=A1] (axis cs:(80000,0) -- (axis cs:(80000,1);
\tikzfillbetween[of=A1 and B, on layer=main]{yellow, opacity=0.2};

\node at (axis cs:5*1e7,0.5) {\rotatebox{90}{\tiny uncertain region}};

\end{semilogxaxis}
\end{tikzpicture}
\label{fig:bfield}
}
\hfill
\subfloat[Brazzers]{
\begin{tikzpicture}[scale=0.45]
\begin{semilogxaxis}
%deterministic
\addplot file {./newfig/sr/benchmark/brazzers_benchmark.dat};

\addplot file {./newfig/sr/timeeven_m3/brazzers_timeeven_m3.dat};

\addplot+[only marks] coordinates
{
(400000,0.534352)
(500000,0.586443)
(600000,0.604296)
(700000,0.604296)
};

\path[name path=B] (axis cs:(9 * 1e7,0) -- (axis cs:(9 * 1e7,1);
\path[name path=A1] (axis cs:(3 *1e5,0) -- (axis cs:(3 * 1e5,1);
\tikzfillbetween[of=A1 and B, on layer=main]{red, opacity=0.2};

\path[name path=A] (axis cs:(2 *1e5,0) -- (axis cs:(2 * 1e5,1);
\tikzfillbetween[of=A and B, on layer=main]{yellow, opacity=0.2};

\node at (axis cs:5*1e7,0.5) {\rotatebox{90}{\tiny uncertain region}};

\end{semilogxaxis}
\end{tikzpicture}
\label{fig:brazzers}
}
\hfill
\subfloat[Clixsense]{
\begin{tikzpicture}[scale=0.45]
\begin{semilogxaxis}
%deterministic
\addplot file {./newfig/sr/benchmark/clixsense_benchmark.dat};

\addplot file {./newfig/sr/timeeven_m3/clixsense_timeeven_m3.dat};

\addplot+[only marks] coordinates
{
(2e+06,0.531809)
};

\path[name path=A] (axis cs:(4 * 1e5,0) -- (axis cs:(4 * 1e5,1);
\path[name path=B] (axis cs:(9 * 1e7,0) -- (axis cs:(9 * 1e7,1);
\tikzfillbetween[of=A and B, on layer=main]{yellow, opacity=0.2};

\path[name path=A1] (axis cs:(8 * 1e5,0) -- (axis cs:(8 * 1e5,1);
\tikzfillbetween[of=A1 and B, on layer=main]{red, opacity=0.2};
\node at (axis cs:5*1e7,0.5) {\rotatebox{90}{\tiny uncertain region}};

\end{semilogxaxis}
\end{tikzpicture}
\label{fig:clixsense}
}

\subfloat[CSDN]{
\begin{tikzpicture}[scale=0.45]
\begin{semilogxaxis}[]
% benchmark
\addplot file {./newfig/sr/benchmark/csdn_benchmark.dat};

\addplot file {./newfig/sr/timeeven_m3/csdn_timeeven_m3.dat};

\addplot+[only marks] coordinates
{
(3e+06,0.555844)
(4e+06,0.597874)
};

\path[name path=A] (axis cs:(2* 1e6,0) -- (axis cs:(2* 1e6,1);
\path[name path=B] (axis cs:(9 * 1e7,0) -- (axis cs:(9 * 1e7,1);
\tikzfillbetween[of=A and B, on layer=main]{red, opacity=0.2};

\path[name path=A1] (axis cs:(1* 1e6,0) -- (axis cs:(1* 1e6,1);
\tikzfillbetween[of=A1 and B, on layer=main]{yellow, opacity=0.2};
\node at (axis cs:5*1e7,0.5) {\rotatebox{90}{\tiny uncertain region}};

\end{semilogxaxis}
\end{tikzpicture}
\label{fig:csdn}
}
\hfill
\subfloat[Rockyou]{
\begin{tikzpicture}[scale=0.45]
\begin{semilogxaxis}
% benchmark
\addplot file {./newfig/sr/benchmark/rockyou_benchmark.dat};

\addplot file {./newfig/sr/timeeven_m3/rockyou_timeeven_m3.dat};

\addplot+[only marks] coordinates
{
(900000,0.327367)
(1e+07,0.690898)
};

\path[name path=A] (axis cs:(7 * 1e6,0) -- (axis cs:(7 * 1e6,1);
\path[name path=B] (axis cs:(9 * 1e7,0) -- (axis cs:(9 * 1e7,1);
\tikzfillbetween[of=A and B, on layer=main]{red, opacity=0.2};

\path[name path=A1] (axis cs:(3 * 1e6,0) -- (axis cs:(3 * 1e6,1);
\tikzfillbetween[of=A1 and B, on layer=main]{yellow, opacity=0.2};

\node at (axis cs:5*1e7,0.5) {\rotatebox{90}{\tiny uncertain region}};

\end{semilogxaxis}
\end{tikzpicture}
\label{fig:rockyou}
}
\hfill
\subfloat[Webhost]{
\begin{tikzpicture}[scale=0.45]
\begin{semilogxaxis}
% benchmark
\addplot file {./newfig/sr/benchmark/webhost_benchmark.dat};

\addplot file {./newfig/sr/timeeven_m3/webhost_timeeven_m3.dat};

\addplot+[only marks] coordinates
{
(8e+06,0.505761)
(9e+06,0.547106)
(1e+07,0.565415)
};

\path[name path=A] (axis cs: 5* 1e6,0) -- (axis cs: 5* 1e6,1);
\path[name path=B] (axis cs: 9 * 1e7,0) -- (axis cs:9 * 1e7,1);
\tikzfillbetween[of=A and B, on layer=main]{red, opacity=0.2};

\path[name path=A1] (axis cs: 2* 1e6,0) -- (axis cs: 2* 1e6,1);
\tikzfillbetween[of=A1 and B, on layer=main]{yellow, opacity=0.2};

\node at (axis cs:5*1e7,0.5) {\rotatebox{90}{\tiny uncertain region}};

\end{semilogxaxis}
\end{tikzpicture}
\label{fig:webhost}
}

\caption{Time-Even Breakpoints, Uniform Pepper Distribution} 
\label{RockYou}
%\vspace{-0.4cm}
\end{figure*}
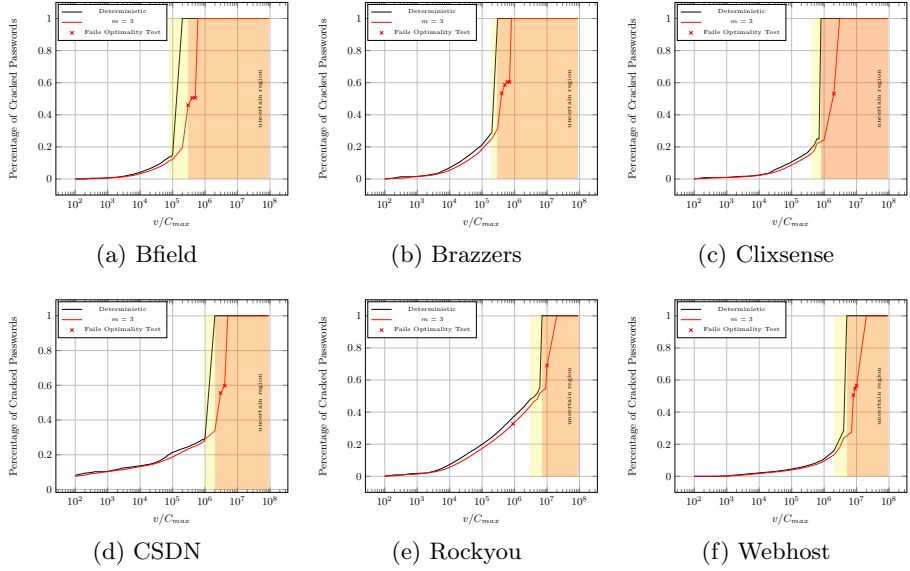
% !TEX root = main.tex
\tikzDefaultsCost
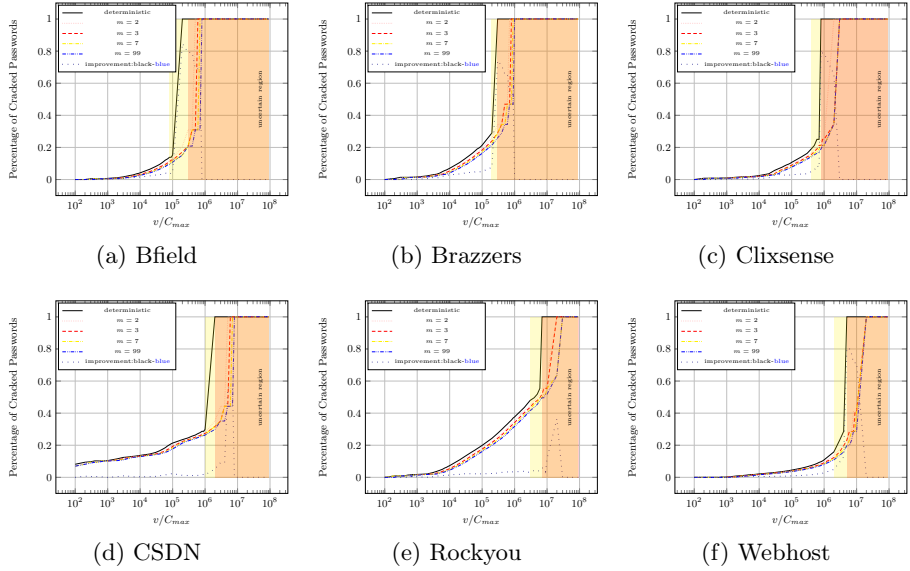
\begin{figure*}[ht]
\centering
\subfloat[Bfield]{
\begin{tikzpicture}[scale=0.45]
\begin{semilogxaxis}[]

\addplot+[stack plots=y] file {./newfig/sr/benchmark/bfield_benchmark.dat};

\addplot file {./newfig/sr/costeven_m2/bfield_costeven_m2.dat};

\addplot file {./newfig/sr/costeven_m3/bfield_costeven_m3.dat};

\addplot file {./newfig/sr/costeven_m7/bfield_costeven_m7.dat};

\addplot file {./newfig/sr/costeven_m99/bfield_costeven_m99.dat};

\addplot+[stack plots=y, stack dir=minus] file {./newfig/sr/costeven_m99/bfield_costeven_m99.dat};

\path[name path=B] (axis cs:(9 * 1e7,0) -- (axis cs:(9 * 1e7,1);
\path[name path=A] (axis cs:(300000,0) -- (axis cs:(300000,1);
\tikzfillbetween[of=A and B, on layer=main]{red, opacity=0.2};

\path[name path=A1] (axis cs:(80000,0) -- (axis cs:(80000,1);
\tikzfillbetween[of=A1 and B, on layer=main]{yellow, opacity=0.2};

\node at (axis cs:5*1e7,0.5) {\rotatebox{90}{\tiny uncertain region}};

\end{semilogxaxis}
\end{tikzpicture}
\label{fig:bfield}
}
\hfill
\subfloat[Brazzers]{
\begin{tikzpicture}[scale=0.45]
\begin{semilogxaxis}
%deterministic
\addplot+[stack plots=y] file {./newfig/sr/benchmark/brazzers_benchmark.dat};

\addplot file {./newfig/sr/costeven_m2/brazzers_costeven_m2.dat};

\addplot file {./newfig/sr/costeven_m3/brazzers_costeven_m3.dat};

\addplot file {./newfig/sr/costeven_m7/brazzers_costeven_m7.dat};

\addplot file {./newfig/sr/costeven_m99/brazzers_costeven_m99.dat};

\addplot+[stack plots=y, stack dir=minus] file {./newfig/sr/costeven_m99/brazzers_costeven_m99.dat};

\path[name path=B] (axis cs:(9 * 1e7,0) -- (axis cs:(9 * 1e7,1);
\path[name path=A1] (axis cs:(3 *1e5,0) -- (axis cs:(3 * 1e5,1);
\tikzfillbetween[of=A1 and B, on layer=main]{red, opacity=0.2};

\path[name path=A] (axis cs:(2 *1e5,0) -- (axis cs:(2 * 1e5,1);
\tikzfillbetween[of=A and B, on layer=main]{yellow, opacity=0.2};

\node at (axis cs:5*1e7,0.5) {\rotatebox{90}{\tiny uncertain region}};

\end{semilogxaxis}
\end{tikzpicture}
\label{fig:brazzers}
}
\hfill
\subfloat[Clixsense]{
\begin{tikzpicture}[scale=0.45]
\begin{semilogxaxis}
%deterministic
\addplot+[stack plots=y] file {./newfig/sr/benchmark/clixsense_benchmark.dat};
%costeven_m2
\addplot file {./newfig/sr/costeven_m2/clixsense_costeven_m2.dat};

\addplot file {./newfig/sr/costeven_m3/clixsense_costeven_m3.dat};

\addplot file {./newfig/sr/costeven_m7/clixsense_costeven_m7.dat};

\addplot file {./newfig/sr/costeven_m99/clixsense_costeven_m99.dat};

\addplot+[stack plots=y, stack dir=minus] file {./newfig/sr/costeven_m99/clixsense_costeven_m99.dat};

\path[name path=A] (axis cs:(4 * 1e5,0) -- (axis cs:(4 * 1e5,1);
\path[name path=B] (axis cs:(9 * 1e7,0) -- (axis cs:(9 * 1e7,1);
\tikzfillbetween[of=A and B, on layer=main]{yellow, opacity=0.2};

\path[name path=A1] (axis cs:(8 * 1e5,0) -- (axis cs:(8 * 1e5,1);
\tikzfillbetween[of=A1 and B, on layer=main]{red, opacity=0.2};
\node at (axis cs:5*1e7,0.5) {\rotatebox{90}{\tiny uncertain region}};

\end{semilogxaxis}
\end{tikzpicture}
\label{fig:clixsense}
}

\subfloat[CSDN]{
\begin{tikzpicture}[scale=0.45]
\begin{semilogxaxis}[]
% benchmark
\addplot file {./newfig/sr/benchmark/csdn_benchmark.dat};

%costeven_m2
\addplot+[stack plots=y] file {./newfig/sr/costeven_m2/csdn_costeven_m2.dat};

\addplot file {./newfig/sr/costeven_m3/csdn_costeven_m3.dat};

\addplot file {./newfig/sr/costeven_m7/csdn_costeven_m7.dat};

\addplot file {./newfig/sr/costeven_m99/csdn_costeven_m99.dat};

\addplot+[stack plots=y, stack dir=minus] file {./newfig/sr/costeven_m99/csdn_costeven_m99.dat};

\path[name path=A] (axis cs:(2* 1e6,0) -- (axis cs:(2* 1e6,1);
\path[name path=B] (axis cs:(9 * 1e7,0) -- (axis cs:(9 * 1e7,1);
\tikzfillbetween[of=A and B, on layer=main]{red, opacity=0.2};

\path[name path=A1] (axis cs:(1* 1e6,0) -- (axis cs:(1* 1e6,1);
\tikzfillbetween[of=A1 and B, on layer=main]{yellow, opacity=0.2};
\node at (axis cs:5*1e7,0.5) {\rotatebox{90}{\tiny uncertain region}};

\end{semilogxaxis}
\end{tikzpicture}
\label{fig:csdn}
}
\hfill
\subfloat[Rockyou]{
\begin{tikzpicture}[scale=0.45]
\begin{semilogxaxis}
% benchmark
\addplot file {./newfig/sr/benchmark/rockyou_benchmark.dat};

%costeven_m2
\addplot+[stack plots=y] file {./newfig/sr/costeven_m2/rockyou_costeven_m2.dat};

\addplot file {./newfig/sr/costeven_m3/rockyou_costeven_m3.dat};

\addplot file {./newfig/sr/costeven_m7/rockyou_costeven_m7.dat};

\addplot file {./newfig/sr/costeven_m99/rockyou_costeven_m99.dat};

\addplot+[stack plots=y, stack dir=minus] file {./newfig/sr/costeven_m99/rockyou_costeven_m99.dat};

\path[name path=A] (axis cs:(7 * 1e6,0) -- (axis cs:(7 * 1e6,1);
\path[name path=B] (axis cs:(9 * 1e7,0) -- (axis cs:(9 * 1e7,1);
\tikzfillbetween[of=A and B, on layer=main]{red, opacity=0.2};

\path[name path=A1] (axis cs:(3 * 1e6,0) -- (axis cs:(3 * 1e6,1);
\tikzfillbetween[of=A1 and B, on layer=main]{yellow, opacity=0.2};

\node at (axis cs:5*1e7,0.5) {\rotatebox{90}{\tiny uncertain region}};

\end{semilogxaxis}
\end{tikzpicture}
\label{fig:rockyou}
}
\hfill
\subfloat[Webhost]{
\begin{tikzpicture}[scale=0.45]
\begin{semilogxaxis}
% benchmark
\addplot+[stack plots=y] file {./newfig/sr/benchmark/webhost_benchmark.dat};

%costeven_m2
\addplot file {./newfig/sr/costeven_m2/webhost_costeven_m2.dat};

\addplot file {./newfig/sr/costeven_m3/webhost_costeven_m3.dat};

\addplot file {./newfig/sr/costeven_m7/webhost_costeven_m7.dat};

\addplot file{./newfig/sr/costeven_m99/webhost_costeven_m99.dat};

\addplot+[stack plots=y, stack dir=minus] file {./newfig/sr/costeven_m99/webhost_costeven_m99.dat};

\path[name path=A] (axis cs: 5* 1e6,0) -- (axis cs: 5* 1e6,1);
\path[name path=B] (axis cs: 9 * 1e7,0) -- (axis cs:9 * 1e7,1);
\tikzfillbetween[of=A and B, on layer=main]{red, opacity=0.2};

\path[name path=A1] (axis cs: 2* 1e6,0) -- (axis cs: 2* 1e6,1);
\tikzfillbetween[of=A1 and B, on layer=main]{yellow, opacity=0.2};

\node at (axis cs:5*1e7,0.5) {\rotatebox{90}{\tiny uncertain region}};

\end{semilogxaxis}
\end{tikzpicture}
\label{fig:webhost}
}
\caption{Cost-Even Breakpoints, Uniform Pepper Distribution} 
\label{RockYou}
\vspace{-0.4cm}
\end{figure*}
% !TEX root = main.tex
\tikzDefaultsBite
\begin{figure*}[ht]\centering
\subfloat[Bfield]{
\begin{tikzpicture}[scale=0.45]
\begin{semilogxaxis}[]

\addplot file {./newfig/sr/benchmark/bfield_benchmark.dat};

\addplot file {./newfig/sr/costeven_m3/bfield_costeven_m3.dat};

\addplot file {./newfig/sr/biteopt_m3/bfield_biteopt_m3.dat};

% \addplot file {./newfig/sr/costeven_m7/bfield_costeven_m7.dat};

\path[name path=B] (axis cs:(9 * 1e7,0) -- (axis cs:(9 * 1e7,1);
\path[name path=A] (axis cs:(300000,0) -- (axis cs:(300000,1);
\tikzfillbetween[of=A and B, on layer=main]{red, opacity=0.2};

\path[name path=A1] (axis cs:(80000,0) -- (axis cs:(80000,1);
\tikzfillbetween[of=A1 and B, on layer=main]{yellow, opacity=0.2};

\node at (axis cs:5*1e7,0.5) {\rotatebox{90}{\tiny uncertain region}};

\end{semilogxaxis}
\end{tikzpicture}
\label{fig:bfield}
}
\hfill
\subfloat[Brazzers]{
\begin{tikzpicture}[scale=0.45]
\begin{semilogxaxis}
%deterministic
\addplot file {./newfig/sr/benchmark/brazzers_benchmark.dat};

\addplot file {./newfig/sr/costeven_m3/brazzers_costeven_m3.dat};

\addplot file {./newfig/sr/biteopt_m3/brazzers_biteopt_m3.dat};
%costeven_m7
% \addplot file {./newfig/sr/costeven_m7/brazzers_costeven_m7.dat};

\path[name path=B] (axis cs:(9 * 1e7,0) -- (axis cs:(9 * 1e7,1);
\path[name path=A1] (axis cs:(3 *1e5,0) -- (axis cs:(3 * 1e5,1);
\tikzfillbetween[of=A1 and B, on layer=main]{red, opacity=0.2};

\path[name path=A] (axis cs:(2 *1e5,0) -- (axis cs:(2 * 1e5,1);
\tikzfillbetween[of=A and B, on layer=main]{yellow, opacity=0.2};

\node at (axis cs:5*1e7,0.5) {\rotatebox{90}{\tiny uncertain region}};

\end{semilogxaxis}
\end{tikzpicture}
\label{fig:brazzers}
}
\hfill
\subfloat[Clixsense]{
\begin{tikzpicture}[scale=0.45]
\begin{semilogxaxis}
%deterministic
\addplot file {./newfig/sr/benchmark/clixsense_benchmark.dat};

\addplot file {./newfig/sr/costeven_m7/clixsense_costeven_m7.dat};

\addplot file {./newfig/sr/biteopt_m3/clixsense_biteopt_m3.dat};

% \addplot file {./newfig/sr/costeven_m7/clixsense_costeven_m7.dat};

\path[name path=A] (axis cs:(4 * 1e5,0) -- (axis cs:(4 * 1e5,1);
\path[name path=B] (axis cs:(9 * 1e7,0) -- (axis cs:(9 * 1e7,1);
\tikzfillbetween[of=A and B, on layer=main]{yellow, opacity=0.2};

\path[name path=A1] (axis cs:(8 * 1e5,0) -- (axis cs:(8 * 1e5,1);
\tikzfillbetween[of=A1 and B, on layer=main]{red, opacity=0.2};
\node at (axis cs:5*1e7,0.5) {\rotatebox{90}{\tiny uncertain region}};

\end{semilogxaxis}
\end{tikzpicture}
\label{fig:clixsense}
}

\subfloat[CSDN]{
\begin{tikzpicture}[scale=0.45]
\begin{semilogxaxis}[]
% benchmark
\addplot file {./newfig/sr/benchmark/csdn_benchmark.dat};

\addplot file {./newfig/sr/costeven_m3/csdn_costeven_m3.dat};

\addplot file {./newfig/sr/biteopt_m3/csdn_biteopt_m3.dat};

% \addplot file {./newfig/sr/costeven_m7/csdn_costeven_m7.dat};

\path[name path=A] (axis cs:(2* 1e6,0) -- (axis cs:(2* 1e6,1);
\path[name path=B] (axis cs:(9 * 1e7,0) -- (axis cs:(9 * 1e7,1);
\tikzfillbetween[of=A and B, on layer=main]{red, opacity=0.2};

\path[name path=A1] (axis cs:(1* 1e6,0) -- (axis cs:(1* 1e6,1);
\tikzfillbetween[of=A1 and B, on layer=main]{yellow, opacity=0.2};
\node at (axis cs:5*1e7,0.5) {\rotatebox{90}{\tiny uncertain region}};

\end{semilogxaxis}
\end{tikzpicture}
\label{fig:csdn}
}
\hfill
\subfloat[Rockyou]{
\begin{tikzpicture}[scale=0.45]
\begin{semilogxaxis}
% benchmark
\addplot file {./newfig/sr/benchmark/rockyou_benchmark.dat};

\addplot file {./newfig/sr/costeven_m3/rockyou_costeven_m3.dat};

\addplot file {./newfig/sr/biteopt_m3/rockyou_biteopt_m3.dat};
%costeven_m7
% \addplot file {./newfig/sr/costeven_m7/rockyou_costeven_m7.dat};

\path[name path=A] (axis cs:(7 * 1e6,0) -- (axis cs:(7 * 1e6,1);
\path[name path=B] (axis cs:(9 * 1e7,0) -- (axis cs:(9 * 1e7,1);
\tikzfillbetween[of=A and B, on layer=main]{red, opacity=0.2};

\path[name path=A1] (axis cs:(3 * 1e6,0) -- (axis cs:(3 * 1e6,1);
\tikzfillbetween[of=A1 and B, on layer=main]{yellow, opacity=0.2};

\node at (axis cs:5*1e7,0.5) {\rotatebox{90}{\tiny uncertain region}};

\end{semilogxaxis}
\end{tikzpicture}
\label{fig:rockyou}
}
\hfill
\subfloat[Webhost]{
\begin{tikzpicture}[scale=0.45]
\begin{semilogxaxis}
% benchmark
\addplot file {./newfig/sr/benchmark/webhost_benchmark.dat};

\addplot file {./newfig/sr/costeven_m3/webhost_costeven_m3.dat};

\addplot file {./newfig/sr/biteopt_m3/webhost_biteopt_m3.dat};
%costeven_m7
% \addplot file {./newfig/sr/costeven_m7/webhost_costeven_m7.dat};

\path[name path=A] (axis cs: 5* 1e6,0) -- (axis cs: 5* 1e6,1);
\path[name path=B] (axis cs: 9 * 1e7,0) -- (axis cs:9 * 1e7,1);
\tikzfillbetween[of=A and B, on layer=main]{red, opacity=0.2};

\path[name path=A1] (axis cs: 2* 1e6,0) -- (axis cs: 2* 1e6,1);
\tikzfillbetween[of=A1 and B, on layer=main]{yellow, opacity=0.2};

\node at (axis cs:5*1e7,0.5) {\rotatebox{90}{\tiny uncertain region}};

\end{semilogxaxis}
\end{tikzpicture}
\label{fig:webhost}
}

\caption{Cost-Even Breakpoints, Optimized Pepper Distribution} 
\label{RockYou}
%\vspace{-0.4cm}
\end{figure*}
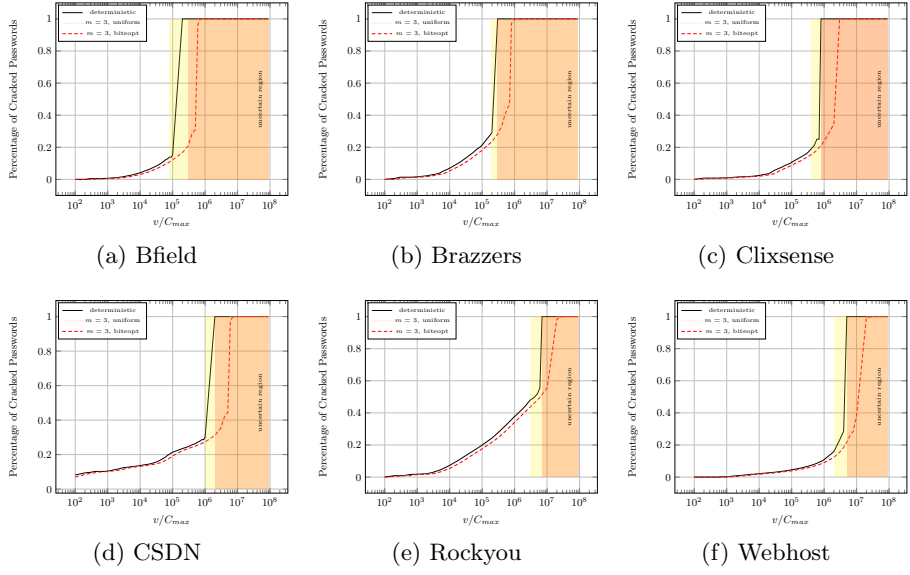
% !TEX root = main.tex
\tikzDefaultsCom
\begin{figure*}[ht]
\centering

\subfloat[Linkedin]{
\begin{tikzpicture}[scale=0.7]
\begin{semilogxaxis}
% benchmark
\addplot+[stack plots=y] file {./newfig/sr/benchmark/linkedin_benchmark.dat};
%costeven_m2
\addplot file {./newfig/sr/timeeven_m3/linkedin_timeeven_m3.dat};

\addplot file {./newfig/sr/costeven_m3/linkedin_costeven_m3.dat};

\path[name path=A] (axis cs:(3* 1e7,0) -- (axis cs:(3* 1e7,1);
\path[name path=B] (axis cs:(9 * 1e7,0) -- (axis cs:(9 * 1e7,1);
\tikzfillbetween[of=A and B, on layer=main]{red, opacity=0.2};

\path[name path=A1] (axis cs:(2* 1e7,0) -- (axis cs:(2* 1e7,1);
\tikzfillbetween[of=A1 and B, on layer=main]{yellow, opacity=0.2};

\node at (axis cs:5*1e7,0.5) {\rotatebox{90}{\tiny uncertain region}};

\end{semilogxaxis}
\end{tikzpicture}
\label{fig:linkedincost}
}
\hfill
\subfloat[Neopets]{
\begin{tikzpicture}[scale=0.7]
\begin{semilogxaxis}
%deterministic
\addplot+[stack plots=y] file {./newfig/sr/benchmark/neopets_benchmark.dat};
%costeven_m2

\addplot file {./newfig/sr/timeeven_m3/neopets_timeeven_m3.dat};

\addplot file {./newfig/sr/costeven_m3/neopets_costeven_m3.dat};

\path[name path=A] (axis cs:(2* 1e7,0) -- (axis cs:(2* 1e7,1);
\path[name path=B] (axis cs:(9 * 1e7,0) -- (axis cs:(9 * 1e7,1);
\tikzfillbetween[of=A and B, on layer=main]{red, opacity=0.2};

\path[name path=A1] (axis cs:(5* 1e6,0) -- (axis cs:(5* 1e6,1);
\tikzfillbetween[of=A1 and B, on layer=main]{yellow, opacity=0.2};

\node at (axis cs:5*1e7,0.5) {\rotatebox{90}{\tiny uncertain region}};

\end{semilogxaxis}
\end{tikzpicture}
\label{fig:neopetscost}
}
% \hfill
% \subfloat[Yahoo]{
% \begin{tikzpicture}[scale=0.7]
% \begin{semilogxaxis}
% % benchmark
% \addplot file {./newfig/sr/benchmark/yahoo_benchmark.dat};

% \addplot file {./newfig/sr/timeeven_m3/yahoo_timeeven_m3.dat};

% \addplot file {./newfig/sr/costeven_m3/yahoo_costeven_m3.dat};

% \path[name path=A] (axis cs:(2* 1e7,0) -- (axis cs:(2* 1e7,1);
% \path[name path=B] (axis cs:(9 * 1e7,0) -- (axis cs:(9 * 1e7,1);
% \tikzfillbetween[of=A and B, on layer=main]{red, opacity=0.2};

% \path[name path=A1] (axis cs:(7* 1e6,0) -- (axis cs:(7* 1e6,1);
% \tikzfillbetween[of=A1 and B, on layer=main]{yellow, opacity=0.2};

% \node at (axis cs:5*1e7,0.5) {\rotatebox{90}{\tiny uncertain region}};

% \end{semilogxaxis}
% \end{tikzpicture}
% \label{fig:yahoocost}
% }
% \caption{Comparision of Time-Even Breakpoints and Cost-Even Breakpoints} 
% \label{fig:compare}
%\vspace{-0.4cm}
\end{figure*}

\end{document}